\newtheorem{corollary}{Corollary}
\definecolor{lightgray}{gray}{0.7}
\newcommand{\SingleAction}{\mathcal{A}_{\mathbf{1}}}
\newcommand{\ActionSpaceBA}{\mathcal{A}^{\textit{BA}}}
\newcommand{\SmallestAction}{\mathcal{A}_{\mathbf{1}, \min}}
\newcommand{\SmallestNum}{n_{\min}}
\newcommand{\poolcontent}[3]{
    \ifthenelse{\equal{#2}{}}
    {\ifthenelse{\equal{#3}{}}
        {\ifthenelse{\equal{#1}{}}
            {\ensuremath{ R }\xspace}
            {\ensuremath{ R_{#1} }\xspace}
        }
        {\ifthenelse{\equal{#1}{}}
            {\ensuremath{ R^{#3} }\xspace}
            {\ensuremath{ R^{#3}_{#1} }\xspace}
        }
    }
    {\ifthenelse{\equal{#3}{}}
        {\ifthenelse{\equal{#1}{}}
            {\ensuremath{ R({#2}) }\xspace}
            {\ensuremath{ R_{#1}({#2}) }\xspace}
        }
        {\ifthenelse{\equal{#1}{}}
            {\ensuremath{ R^{#3}({#2}) }\xspace}
            {\ensuremath{ R_{#1}^{#3}({#2}) }\xspace}
        }
    }
}
\newcommand{\poolcontentvector}[3]{
    \ifthenelse{\equal{#2}{}}
    {\ifthenelse{\equal{#3}{}}
        {\ifthenelse{\equal{#1}{}}
            {\ensuremath{ \textbf{R} }\xspace}
            {\ensuremath{ \textbf{R}_{#1} }\xspace}
        }
        {\ifthenelse{\equal{#1}{}}
            {\ensuremath{ \textbf{R}^{#3} }\xspace}
            {\ensuremath{ \textbf{R}^{#3}_{#1} }\xspace}
        }
    }
    {\ifthenelse{\equal{#3}{}}
        {\ifthenelse{\equal{#1}{}}
            {\ensuremath{ \textbf{R}({#2}) }\xspace}
            {\ensuremath{ \textbf{R}_{#1}({#2}) }\xspace}
        }
        {\ifthenelse{\equal{#1}{}}
            {\ensuremath{ \textbf{R}^{#3}({#2}) }\xspace}
            {\ensuremath{ \textbf{R}_{#1}^{#3}({#2}) }\xspace}
        }
    }
}
\newcommand{\fillupac}[3]{
    \ifthenelse{\equal{#2}{\empty}}{
        \ifthenelse{\equal{#3}{\empty}}{
            \ifthenelse{\equal{#1}{\empty}}{
                \ensuremath{\hat{l}}\xspace
            }{
                \ensuremath{\hat{l}_{#1}}\xspace
            }
        }{
            \ifthenelse{\equal{#1}{\empty}}{
                \ensuremath{\hat{l}^{#3}}\xspace
            }{
                \ensuremath{\hat{l}^{#3}_{#1}}\xspace
            }
        }
    }{
        \ifthenelse{\equal{#3}{\empty}}{
            \ifthenelse{\equal{#1}{\empty}}{
                \ensuremath{\hat{l}({#2})}\xspace
            }{
                \ensuremath{\hat{l}_{#1}({#2})}\xspace
            }
        }{
            \ifthenelse{\equal{#1}{\empty}}{
                \ensuremath{\hat{l}^{#3}({#2})}\xspace
            }{
                \ensuremath{\hat{l}_{#1}^{#3}({#2})}\xspace
            }
        }
    }
}
\newcommand{\lpac}[3]{
    \ifthenelse{\equal{#2}{\empty}}{
        \ifthenelse{\equal{#3}{\empty}}{
            \ifthenelse{\equal{#1}{\empty}}{
                \ensuremath{l}\xspace
            }{
                \ensuremath{l_{#1}}\xspace
            }
        }{
            \ifthenelse{\equal{#1}{\empty}}{
                \ensuremath{l^{#3}}\xspace
            }{
                \ensuremath{l^{#3}_{#1}}\xspace
            }
        }
    }{
        \ifthenelse{\equal{#3}{\empty}}{
            \ifthenelse{\equal{#1}{\empty}}{
                \ensuremath{l({#2})}\xspace
            }{
                \ensuremath{l_{#1}({#2})}\xspace
            }
        }{
            \ifthenelse{\equal{#1}{\empty}}{
                \ensuremath{l^{#3}({#2})}\xspace
            }{
                \ensuremath{l_{#1}^{#3}({#2})}\xspace
            }
        }
    }
}
\newcommand{\uac}[3]{
    \ifthenelse{\equal{#2}{\empty}}{
        \ifthenelse{\equal{#3}{\empty}}{
            \ifthenelse{\equal{#1}{\empty}}{
                \ensuremath{b}\xspace
            }{
                \ensuremath{b_{#1}}\xspace
            }
        }{
            \ifthenelse{\equal{#1}{\empty}}{
                \ensuremath{b^{#3}}\xspace
            }{
                \ensuremath{b^{#3}_{#1}}\xspace
            }
        }
    }{
        \ifthenelse{\equal{#3}{\empty}}{
            \ifthenelse{\equal{#1}{\empty}}{
                \ensuremath{b({#2})}\xspace
            }{
                \ensuremath{b_{#1}({#2})}\xspace
            }
        }{
            \ifthenelse{\equal{#1}{\empty}}{
                \ensuremath{b^{#3}({#2})}\xspace
            }{
                \ensuremath{b_{#1}^{#3}({#2})}\xspace
            }
        }
    }
}
\newcommand{\poolinput}[3]{
    \ifthenelse{\equal{#2}{\empty}}{
        \ifthenelse{\equal{#3}{\empty}}{
            \ifthenelse{\equal{#1}{\empty}}{
                \ensuremath{I}\xspace
            }{
                \ensuremath{I_{#1}}\xspace
            }
        }{
            \ifthenelse{\equal{#1}{\empty}}{
                \ensuremath{I^{#3}}\xspace
            }{
                \ensuremath{I^{#3}_{#1}}\xspace
            }
        }
    }{
        \ifthenelse{\equal{#3}{\empty}}{
            \ifthenelse{\equal{#1}{\empty}}{
                \ensuremath{I({#2})}\xspace
            }{
                \ensuremath{I_{#1}({#2})}\xspace
            }
        }{
            \ifthenelse{\equal{#1}{\empty}}{
                \ensuremath{I^{#3}({#2})}\xspace
            }{
                \ensuremath{I_{#1}^{#3}({#2})}\xspace
            }
        }
    }
}
\newcommand{\pooloutput}[3]{
    \ifthenelse{\equal{#2}{\empty}}{
        \ifthenelse{\equal{#3}{\empty}}{
            \ifthenelse{\equal{#1}{\empty}}{
                \ensuremath{O}\xspace
            }{
                \ensuremath{O_{#1}}\xspace
            }
        }{
            \ifthenelse{\equal{#1}{\empty}}{
                \ensuremath{O^{#3}}\xspace
            }{
                \ensuremath{O^{#3}_{#1}}\xspace
            }
        }
    }{
        \ifthenelse{\equal{#3}{\empty}}{
            \ifthenelse{\equal{#1}{\empty}}{
                \ensuremath{O({#2})}\xspace
            }{
                \ensuremath{O_{#1}({#2})}\xspace
            }
        }{
            \ifthenelse{\equal{#1}{\empty}}{
                \ensuremath{O^{#3}({#2})}\xspace
            }{
                \ensuremath{O_{#1}^{#3}({#2})}\xspace
            }
        }
    }
}
\newenvironment{restate}[1]{\begin{trivlist} \item {\bf #1 (restated)}. 
    \em} {\end{trivlist}}
\title{SAMM: Sharded Automated Market Maker}
\author{
    Hongyin Chen\\
    Technion
    \and
    Amit Vaisman\\
    Technion
    \and
    Ittay Eyal\\
    Technion
}
\begin{document}
\sloppy
\maketitle

\begin{abstract}
\emph{Automated Market Makers} (\emph{AMMs}) are a cornerstone of decentralized finance. 
They are \emph{smart contracts} (stateful programs) running on blockchains. 
They enable virtual token exchange: 
Traders swap tokens with the AMM for a fee, while \emph{liquidity providers} supply liquidity and receive these fees. 
Demand for AMMs is growing rapidly, but our experiment-based estimates show that current architectures cannot meet the projected demand by~2029. 
This is because the execution of existing AMMs is non-parallelizable. 

We present \emph{SAMM}, an AMM comprising multiple \emph{shards}. 
All shards are AMMs running on the same chain, but their independence enables parallel execution.
The security of SAMM, unlike in classical sharding solutions, relies on \emph{incentive compatibility}.
Therefore, SAMM introduces a novel fee design. 
Through analysis of Subgame-Perfect Nash Equilibria (SPNE), we show that SAMM incentivizes the desired behavior: Liquidity providers balance liquidity among all shards, overcoming destabilization attacks, and trades are evenly distributed. 
We validate our game-theoretic analysis with a simulation using real-world data. 

We evaluate SAMM by implementing and deploying it on local testnets of the Sui and Solana blockchains. 
To our knowledge, this is the first quantification of high-demand-contract performance. 
SAMM improves throughput by 5x and 16x, respectively, potentially more with better parallelization of the underlying blockchains. 
It is directly deployable, mitigating the upcoming scaling bottleneck. 
\end{abstract}

	\section{Introduction}
Decentralized Finance (DeFi) encompasses a variety of financial \emph{smart contracts}: stateful programs operating on blockchain platforms, which ensure their secure execution.
Their users issue transactions (txs) to generate, loan, and exchange virtual digital tokens. 
\emph{Automated Market Makers} (\textit{AMM}s) are a cornerstone of the DeFi ecosystem~\cite{ghosh2023automated, goyal2023finding}. 
They enable users to immediately exchange between token pairs by maintaining \emph{liquidity pools}: tokens of both types supplied by other users serving as \emph{liquidity providers}. 
The demand for AMMs is growing rapidly: The prominent Uniswap~\cite{zhang2018formal,adams2020uniswap, adams2021uniswap} exchanged \$1~trillion in its first~42 months of operation and an additional~\$1~trillion within only~24 months~\cite{lindrea2023uniswap}.

If the current trend continues, by 2029 (\S\ref{app:growingdemand}) demand would surpass $200$ \emph{tps} (tx per second).
This raises two security concerns. 
First, users defer workload to expansion systems called layer-2 protocols~\cite{uniswaplayer2, adams2024layer} for lower latency and fees; but this entails additional security assumptions.\footnote{\label{fnote:l2}Blockchains are secured by financially-invested nodes. 
A {layer-2} protocol relies on an additional set of nodes, with significantly less at stake~\cite{tvl2024}, 
making it vulnerable to cheaper attacks.} 
Second, insufficient throughput leads to longer queues for AMM trades, exacerbating front-running vulnerabilities like sandwich attacks~\cite{zhou2021high,li2023demystifying}.

Previous work~(\S\ref{sec:related}) all but removed the consensus protocol limitations on throughput (e.g.,~\cite{spiegelman2022bullshark,danezis2022narwhal, bagaria2019prism}). 
Subsequent work addresses execution throughput by employing parallel processing~\cite{dickerson2017adding, blackshear2023sui,aptos2022}. 
However, AMMs require sequential handling of transactions since the outcome of each transaction depends on the current state of the AMM and, in turn, alters this state.
Therefore, AMM operations are not executed in parallel.
For the first time (to the best of our knowledge), we show AMM performance does not scale in state-of-the-art blockchain systems, namely Sui~\cite{blackshear2023sui} and Solana~\cite{yakovenko2018solana}. 
The throughput is limited by a single CPU core at $214\textit{tps}$ in Sui (Figure~\ref{fig:multiple_latency}, ${n=1}$) and $129\textit{tps}$ in Solana. 
Since CPU core improvement is slow~\cite{fuller2011computing}, by 2029 neither system could satisfy AMM demand.

For the first time (to the best of our knowledge), we propose applying to smart contracts a classical methodology for throughput scaling---sharding~\cite{zamani2018rapidchain, kokoris2018omniledger, wang2019monoxide}. 
That is, use multiple AMM instances called \emph{shards}. 
This allows for throughput scaling in blockchains supporting parallel execution such as Sui~\cite{blackshear2023sui}, Solana~\cite{solanagas2024}, and Aptos~\cite{aptos2022} (though not Bitcoin~\cite{nakamoto2008bitcoin} or Ethereum~\cite{wood2014ethereum}).

We focus on the security of this sharding system. 
Although each shard is secured by the underlying blockchain, sharding introduces new vulnerabilities. 
This is due to a challenge common in blockchain protocols~\cite{zhang2019lay,hou2019squirrl,mirkin2020bdos,carlsten2016instability,yaish2023uncle} and applications~\cite{tsabary2021mad,mccorry2019smart,daian2020flash}: 
They are open for anyone to join and participants might attack the system for more revenue.
Since we cannot rely on benign behavior even of a subset of the participants in a decentralized system, the security and robustness of the sharded AMM stem from game-theoretic security~\cite{brugger2023checkmate}: Mechanism design prevents such attacks by eliminating additional revenue due to misbehavior.

We model the system~(\S\ref{sec:model}) as a set of AMM shards with rational users of two types: traders, who purchase tokens and seek to minimize expenses, and liquidity providers, who deposit tokens and earn fees.
Security relies on correctly incentivizing these interactions.

The shards are AMMs based on the standard Constant Product Market Maker~(CPMM) contract: 
Roughly, the contract maintains the product of the two tokens constant after each trade. 
Thus, purchasing a larger amount of a token increases its unit cost, an effect called \emph{slippage}. 
See Appendix~\ref{app:preliminaries} for more comprehensive background.

We present SAMM~(\S\ref{sec:samm}), an AMM protocol that uses multiple \emph{shards} operating on the same blockchain, each with an independent liquidity pool.
Ideally, all shards should be \emph{balanced}, with the same liquidity; and traders should randomly select a shard to trade.
Traders should not split trades across multiple shards, multiplying their overhead. 
Liquidity providers should not cause imbalances in shard sizes, making some shards less attractive to traders and reducing parallelism.
Additionally, the system should re-stabilize if attackers unbalance the shards.

We prove that CPMM shards do not achieve this: They incentivize traders to split their trades across all shards, increasing system overhead without improving the satisfied demand.
Additionally, we cannot rely on a dispatch contract since it would undermine parallelization. 
To overcome this, SAMM diverges from the traditional approach of AMMs that use a fixed-ratio fee. 
Instead, it employs a \emph{trading fee function} that incentivizes traders to use the smallest shard.

The model gives rise to a game~(\S\ref{sec:game}) played by the users:
In each step, either a liquidity provider adds liquidity to a subset of the shards, or a trader executes a trade using a subset of the shards. 
We assume myopic liquidity provider behavior, reducing the analysis to a Stackelberg game, where the liquidity provider adds liquidity to maximize her revenue from a subsequent trade. 
Our analysis of SAMM shows that, indeed, in all best responses, traders use one of the smallest shards. 
This implies that when shards are balanced, traders will randomly select a shard to trade, as intended.
We also prove that shards converge (e.g., following intentional destabilization) to a balanced state in all Subgame-Perfect Nash Equilibria (SPNE) and present a specific SPNE, showing that SAMM is robust against irrational destabilization attacks.
Overall, SAMM ensures deviation leads to a lower revenue.

\begin{figure}[t]
	\centering
	\includegraphics[width=0.47\textwidth]{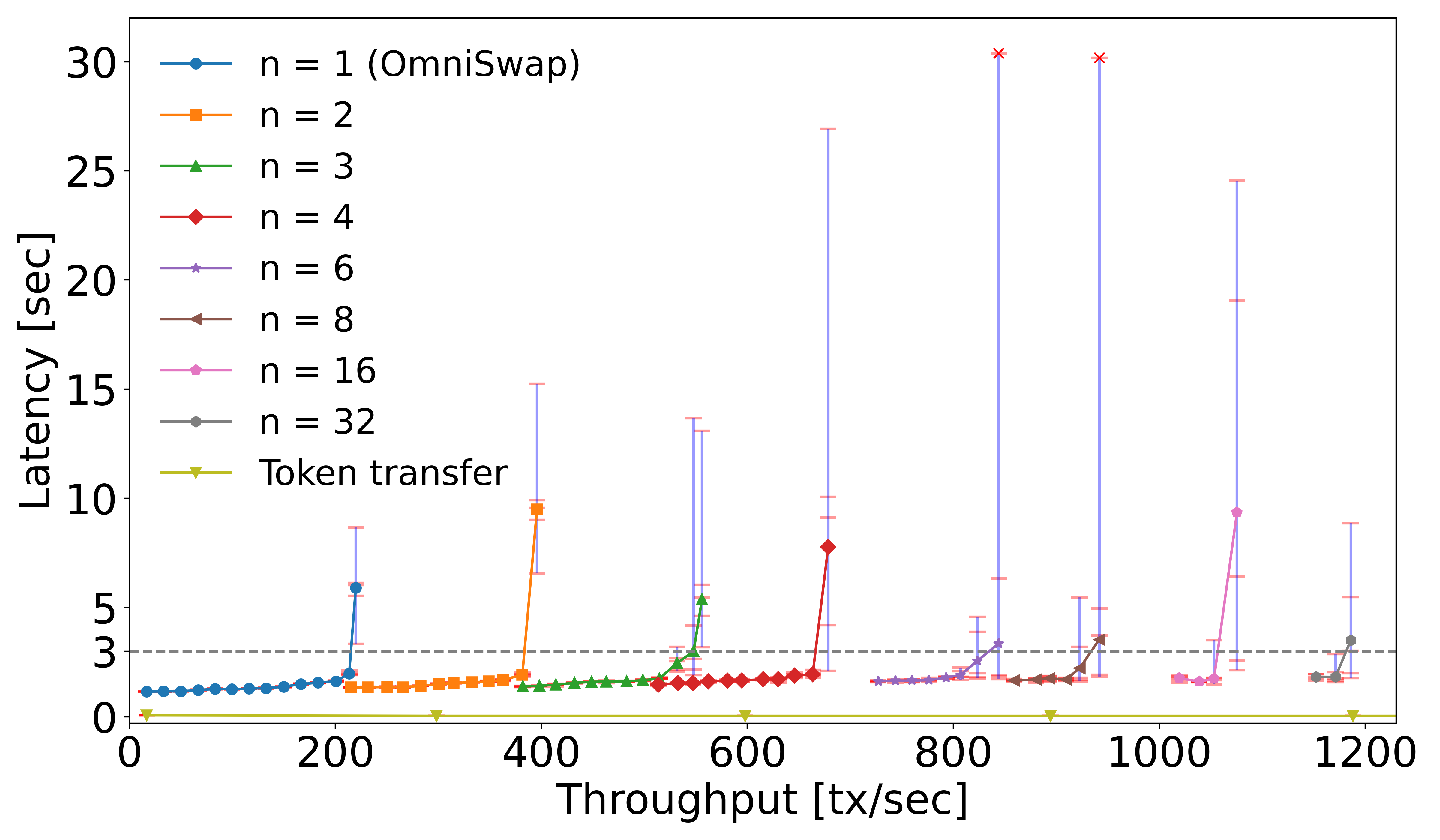}
	\caption{Sui trade latency as a function of demand with~$n$ SAMM shards.}
	\label{fig:multiple_latency}
\end{figure}

Before validating the theoretical analysis with a simulation, we turn our attention to the performance of SAMM, which will inform this simulation. 
We implement the protocol and deploy it in Sui and Solana local test networks. 
We test \textit{tps} and latencies of trades under different demand rates and numbers of shards.
In Sui, the throughput increase is limited by the serial elements of Sui's transaction processing, following Amdahl's Law.
In Solana, the throughput increases linearly with the number of shards up to a limit due to the blockchain's constraints. 
SAMM achieves over a 5x throughput increase in Sui and a~16x increase in Solana, compared to a standard single-contract AMM.
Figure~\ref{fig:multiple_latency} shows that with more shards in Sui, SAMM achieves higher throughput (X axis) with low average latency (Y axis).
Error bars show additional experiments, $\times$~marking failure due to overload.
We observe similar results in Solana~(\S\ref{app:solana}).
Improving the blockchains' parallelism would allow for even better performance. 
The Mysten Labs Team reproduced our results and intends to make it part of their benchmark suite.\footnote{From personal communication. The authors are not affiliated with Mysten Labs (or with Solana) and have no financial interest in their tokens.}

Finally, we confirm our game-theoretic analysis~(\S\ref{sec:simulation}).
We simulate traders that are free to split their trades using demand from real traces without artificial arbitrage. 
Indeed, traders exhibit the behavior predicted by our theoretical analysis.
Furthermore, simulation results show that compared to a standard CPMM, SAMM significantly enhances liquidity providers' revenues with only a slight increase in traders' costs, due to its improved throughput that enables more trades.
This indicates that this sharding solution is attractive to liquidity providers when competing with other standard AMMs.
We also observe that the higher slippage due to split liquidity does not worsen sandwich attacks or harm liquidity providers under token price fluctuations.

In summary, our contributions are: 
(1) Quantification of the blockchain performance bottleneck due to high-demand contracts, 
(2) SAMM: A sharded AMM contract, 
(3) generalization of the AMM trading fee function and a novel function for SAMM, 
(4) game-theoretic analysis identifying a Subgame-Perfect Nash Equilibrium (SPNE), 
(5) evaluation demonstrating an increase in throughput of~5x in Sui and of~16x in Solana (up to each blockchain's limit),
(6)~simulation with real trade data confirming the effects of SAMM's incentive design, and
(7)~validation that lower liquidity per shard does not incentivize sandwich attacks or lead to more loss under price fluctuations.

These results hint at an upcoming challenge~(\S\ref{sec:conclusion}) in smart-contract platform design: minimizing the serial elements of transaction processing. 
But SAMM can already be employed to scale AMM performance both for direct usage and as part of DeFi contracts.

\section{Related Work} \label{sec:related}

The introduction of the Constant Product Market Maker Uniswap~v1~\cite{zhang2018formal} enabled asset exchange without relying on traditional order books.
Uniswap~v2~\cite{adams2020uniswap}, which is widely adopted by AMM protocols~\cite{SushiSwap,OmniSwap}, extended composability through ERC20 pairs.
Uniswap~v3~\cite{adams2021uniswap} and Curve~\cite{Curve} improved the \emph{volume capacity} of AMMs: their ability to handle larger trading volumes with limited slippage. 
Uniswap~v4~\cite{asef2024x} and UniswapX~\cite{uniswapx} introduced practical updates to how pools are organized and trades are executed.
Our sharding design targets parallelism and incentive alignment and can be composed with them.

Angeris et al.\cite{angeris2020improved} expanded the understanding of AMMs by delving into constant function market makers (CFMMs). 
Following this, research has increasingly focused on trading utility maximization~\cite{angeris2022optimal,engel2021composing}, advanced arbitrage techniques~\cite{zhou2021high, kulkarni2023routing, bartoletti2022maximizing, wang2023n,wang2022cyclic}, improving liquidity providers' returns~\cite{goyal2023finding}, measuring loss due to arbitrage~\cite{milionis2022automated, fritsch2024measuring,milionis2023automated}, ensuring transaction privacy~\cite{chitra2022differential}, eliminating Miner Extractable Value (MEV) for fair trades~\cite{chan2024mechanism, xavier2023credible, canidio2023batching,heimbach2022eliminating}, and examining the synergy between blockchain-based AMMs and prediction market mechanisms~\cite{SchlegelKM23}.
To the best of our knowledge, previous work did not address AMM throughput scaling. 

Like other smart contracts, AMM throughput is limited by the blockchain's constraints. 
Some AMM contracts (e.g., ZkSwap~\cite{zkswap} and QuickSwap~\cite{QuickSwap}) are deployed on so-called layer-2 solutions~\cite{adams2024layer,gudgeon2020sok, polygon, arbitrum} to defer some workload off-chain, but this merely creates a separate environment for AMM contracts, with scaling issues persisting within this realm while introducing additional security assumptions.\footref{fnote:l2}

Thus, AMM throughput largely depends on the underlying blockchain.
The first generation of blockchains, starting with Bitcoin~\cite{nakamoto2008bitcoin}, suffered from throughput limitations due to their consensus protocols. 
A body of work overcame this with a variety of protocols~\cite{eyal2016bitcoin,zhang2022nc,yakovenko2018solana,spiegelman2022bullshark,danezis2022narwhal,bagaria2019prism, abraham2023colordag,gueta2019sbft,li2020decentralized, rocket2019scalable}. 
With consensus constraints out of the way, the serial execution of blockchain transactions became the bottleneck. 

Several works propose blockchain sharding~\cite{zamani2018rapidchain, kokoris2018omniledger, wang2019monoxide}, i.e., dividing the blockchain into smaller, interconnected chains (shards) allowing parallelism. 
However, sharding only parallelizes an independent contract and does not benefit AMMs.
In SAMM we use multiple AMM contracts, which can run on a single-shard blockchain, or in separate shards of a sharded blockchain.

An alternative approach identifies read and write set conflicts and parallelizes non-conflicting smart contract transaction execution~\cite{sergey2017concurrent,dickerson2017adding,liu2021parallel,pirlea2021,garamvolgyi2022utilizing,blackshear2023sui,yakovenko2018solana,aptos2022}. 
While these methods do enhance the overall throughput of smart contract execution, they are ineffective for non-parallelizable high-demand smart contracts like AMMs.

	\section{Model} \label{sec:model}

We abstract away the blockchain details for our analysis and model the system as a set of participants interacting directly with AMMs, exchanging tokens.
The system progresses in discrete steps and there exists an external market used by arbitrageurs to arbitrage in our system.
The model uses a generic AMM, which we will later instantiate based on previous work and with SAMM.

		\paragraph{Participants}\label{sec:model:participants}
There are two types of participants, \emph{liquidity providers} and \emph{traders}.
Each liquidity provider holds some \textit{token~A} and \textit{token~B}.
They aim to increase their holdings.
Traders are either \textit{AB} or \textit{BA}, based on their goals.
Each \textit{BA} trader occasionally wishes to get a certain amount of  \textit{token~A}, and vice versa for \textit{AB} traders.
They have sufficiently many tokens of the opposite type to complete their trade, but they aim to minimize the cost of obtaining the desired tokens. 
Both liquidity providers and traders can send and receive tokens to and from the smart contracts.

\paragraph{Automated Market Makers}
The system also includes Automated Market Makers, stateful programs that facilitate \emph{depositing} and \emph{trading} of tokens. 
Each AMM maintains some deposited amounts $\poolcontent{}{}{A}$ of \textit{token~A} and $\poolcontent{}{}{B}$ of \textit{token~B}.
We call these tokens \emph{liquidity}.

Within an AMM, there are three primary operations: \emph{liquidity addition}, \emph{liquidity removal}, and \emph{trade}. Specifically, a liquidity provider deposits $\poolinput{}{}{A}$ \textit{token~A} and $\poolinput{}{}{B}$ \textit{token~B} to the contract (liquidity addition); withdraws $\pooloutput{}{}{A}$ \textit{token~A} and $\pooloutput{}{}{B}$ \textit{token~B} from the contract (liquidity removal); or a trader sends $\poolinput{}{}{A}$ \textit{token~A} (resp., $\poolinput{}{}{B}$ \textit{token~B}) to the contract and gets $\pooloutput{}{}{B}$ \textit{token~B} (resp., $\pooloutput{}{}{A}$ \textit{token~A}) from the contract (trade).

In a trade operation, the required input amount for a trader to receive a specific output amount of another token depends on both the output amount and the AMM's current state.
We define the gross amount of a \textit{BA} trader as the required input amount of \textit{token~B} to get $\pooloutput{}{}{A}$ \textit{token~A} in the AMM.
We denote it by $G(\poolcontent{}{}{A}, \poolcontent{}{}{B}, \pooloutput{}{}{A})$ (resp., $G(\poolcontent{}{}{B}, \poolcontent{}{}{A},  \pooloutput{}{}{B})$ for \textit{AB} traders).
Within the gross amount, traders pay a so-called \emph{trading fee} which contributes to the liquidity providers' revenue.

Liquidity providers supply liquidity by depositing their tokens in the contract. 
Once contributing to the AMM, a provider receives tokens from trading fees, hence earning revenue. 
Liquidity providers can later withdraw their tokens from the AMM.

Following prior work~\cite{milionis2022automated,goyal2023finding} we assume that the trading fee is directly paid to the liquidity providers.
Although some practical AMMs (e.g., Uniswap v2\cite{adams2020uniswap}) reinvest the trading fees into themselves, allowing liquidity providers to withdraw more tokens than they deposited as utilities, the trading fee's impact is negligible relative to the deposited amount: 
The average ratio of the output amount to the deposited amount is nominal (less than $0.036\%$ as we find in \S\ref{app:uniswapdata}), and automated market makers (AMMs) typically charge a low trading fee relative to the gross amount (e.g., $0.3\%$ in Uniswap), so the trading fee's impact is negligible relative to the amount of deposited tokens.

\paragraph{System State and Progress}
There are~$n$ independent AMM shards $\textit{shard}_1, \textit{shard}_2, \cdots \textit{shard}_n$ in the system.
The system progresses in discrete steps $k = 0, 1, 2, \cdots$ and is orchestrated by a \emph{scheduler}.
In each step, the scheduler randomly selects a participant and this participant executes transactions. 
It chooses a liquidity provider with probability $P_{lp} \geq 0$ and a trader with probability $P_t \geq 0$, where $P_{lp}+P_t = 1$.
The scheduler assigns the liquidity provider $\lpac{}{}{A}$ \textit{token~A} and~$\lpac{}{}{B}$ \textit{token~B}, where $(\lpac{}{}{A}, \lpac{}{}{B})$ follows a random distribution~$D_{lp}$.
		
The trader is either a \textit{BA} trader aiming to obtain \textit{token~A} or an  \textit{AB} trader aiming to obtain \textit{token~B}.
The probability of drawing an \textit{AB} trader (resp., \textit{BA} trader) is $P_t^{ \textit{AB}} \geq 0$ (resp., $P_t^{ \textit{BA}} \geq 0$), with $P_t^{ \textit{AB}} +  P_t^{ \textit{BA}} = P_t$.
To avoid repetition, we only show the case of \textit{BA} traders, the expressions for \textit{AB} traders are symmetric.
The scheduler assigns the \textit{BA} trader an amount of~$\uac{}{}{\textit{BA}}$ \textit{token~A} to obtain, following a random distribution~$D^{\textit{BA}}$.

\paragraph{External Market and Arbitrageurs}
Following prior work~\cite{goyal2023finding,milionis2022automated}, we assume there is an \emph{external market} providing the price of \textit{token~A} and \textit{token~B}, $p^A$ and $p^B$, respectively.
These prices do not change due to trades; they serve as objective prices for \textit{token~A} and \textit{token~B}.

If the AMM sells tokens at a price lower than the external market, a principal can buy tokens from the AMM and sell them in the external market to make profits, or vice versa if the price is higher in the AMM; this is called an \emph{arbitrage}.
Previous work~\cite{milionis2022automated, goyal2023finding, canidio2023batching, chan2024mechanism} assumes active and rational arbitrageurs who can use the external market and arbitrage to maximize their utility.
We follow the assumption~\cite{milionis2022automated} that there are immediate arbitrages without trading fees when the token price in the AMM is different from the external market.

\paragraph{Our Goal}
The throughput of a single AMM contract is limited due to the underlying blockchain. 
Our goal is to design a set of AMM contracts to improve the overall throughput of the system, despite the individual rational behavior of all participants.

\section{SAMM: Sharded AMM} \label{sec:samm}

We present SAMM, an AMM comprising multiple shards. 
Each shard is an independent AMM~(\S\ref{sec:SAMM:structure}), and SAMM's goal is to have operations distributed among the different shards~(\S\ref{sec:SAMM:properties}).
To enforce this desired behavior despite attacks and untrusted users, we present a novel trading fee function~(\S\ref{sec:SAMM_trading_fee}) and find parameters to fulfill SAMM's goal~(\S\ref{sec:SAMM_parameter}).

		\subsection{SAMM Structure}\label{sec:SAMM:structure}

SAMM enables parallel processing by deploying multiple AMM shards, each with an independent liquidity pool.
There is no data dependency among the shards and there are no global elements. 
The operation of each shard is based on the Constant Product Market Maker (See detailed overview in \S\ref{app:preliminaries}), which maintains the product of the two tokens constant.

When liquidity providers add tokens to a shard, they receive \emph{share tokens}, which signify their portion of its liquidity. 
Assume the shard has $\poolcontent{}{}{A}$ \textit{token~A} and $\poolcontent{}{}{B}$ \textit{token~B} deposited, and $\poolcontent{}{}{S}$ is the number of all share tokens distributed before the operation.
The amount of share tokens acquired by the liquidity provider is~$\pooloutput{}{}{S} = \poolcontent{}{}{S} \times \min\left\{\frac{\poolinput{}{}{A}}{\poolcontent{}{}{A}}, \frac{\poolinput{}{}{B}}{\poolcontent{}{}{B}} \right\}$.
If a liquidity provider pays $\poolinput{}{}{S}$ share tokens to withdraw deposited tokens, she gets $\pooloutput{}{}{A}= \frac{\poolinput{}{}{S}}{\poolcontent{}{}{S}} \times \poolcontent{}{}{A}$ \textit{token~A} and~{$\pooloutput{}{}{B}= \frac{\poolinput{}{}{S}}{\poolcontent{}{}{S}}\times \poolcontent{}{}{B}$} \textit{token~B}.

When a trader wants to get $\pooloutput{}{}{A}$ \textit{token~A} from the AMM, ignoring trading fees, she needs to pay the \emph{net amount} 
\begin{equation}\label{eq:netamount}
	\textit{net}(\poolcontent{}{}{A}, \poolcontent{}{}{B}, \pooloutput{}{}{A}) = \frac{\poolcontent{}{}{B} \times \pooloutput{}{}{A}}{\poolcontent{}{}{A} - \pooloutput{}{}{A}} \,\, .
\end{equation}
The \emph{net price} for a single \textit{token~A} is $\frac{\poolcontent{}{}{B}}{\poolcontent{}{}{A} - \pooloutput{}{}{A}}$ \textit{token~B}.
This value increases as the output amount of \textit{token~A}, $\pooloutput{}{}{A}$, increases, which is the \emph{Slippage} of the trade.

		\subsection{Desired Properties}\label{sec:SAMM:properties} 

Most operations in an AMM are trades (We observe $99.5\%$ in historical blockchain records, see~\S\ref{app:uniswapdata}).
Our main goal is therefore that traders distribute the workload evenly among the shards.
The following properties will suffice. 

\paragraph{c-non-splitting property}
Our first goal is that trades use a single shard only, i.e., traders do not split a trade into smaller ones on multiple shards.
This avoids the overhead caused by trade splits.
So, the cost of a single trade transaction should be less than the combined cost of multiple, split-trade transactions. 
However, this principle faces a significant challenge as highlighted by Equation~\ref{eq:netamount}: When the output amount is not small enough in comparison to the shard's reserve amount, the resulting slippage could incentivize traders to split their transactions to mitigate this slippage.
Consequently, we refine our requirement to ensure that this principle is adhered to only when the output amount is relatively small compared to the deposited amount, specifically when the ratio is below a predefined constant, $c$.
Indeed, the prominent pairs in Ethereum's Uniswap v2 data support this approach, with $99\%$ of trades having a ratio of output amount to deposited amount below $0.0052$ (See \S\ref{app:uniswapdata} for more details). 
We call this the \emph{$c$-Non-Splitting} property.

\begin{restatable}[$c$-Non-Splitting]{prop}{propertyconcavity}\label{concavity}
	Let $m \geq 2$. 
	Given a set of $m$ output amounts $\{O^A_1, O^A_2, \ldots, O^A_m\}$ such that all amounts are positive, i.e., $\forall 1 \leq j \leq m: O^A_j > 0$,
	denote by $\tilde{O}^A$ the sum of the amounts in the set, $\tilde{O}^A = \sum_{j=1}^m \pooloutput{j}{}{A}$.
	For the constant $0 < c < 1$ and the deposited amount of tokens $\poolcontent{}{}{A}$ and $\poolcontent{}{}{B}$, if $\frac{\tilde{O}^A}{\poolcontent{}{}{A}} \leq c$ , then the cost of trading $\tilde{O}^A$ \textit{token~A} is less than the sum of the cost of trading $\pooloutput{j}{}{A}$ \textit{token~A} for $1 \leq j \leq m$, i.e.,~$G_{\textit{SAMM}}(\poolcontent{}{}{A},\poolcontent{}{}{B}, \tilde{O}^A) < \sum_{j=1}^m G_{\textit{SAMM}}(\poolcontent{}{}{A},\poolcontent{}{}{B}, \pooloutput{j}{}{A}).$

\end{restatable}

\paragraph{c-smaller-better property}
Our second goal is to maintain balanced volumes in all shards.
This is crucial because the volume directly affects the slippage. 
When there are stark differences in shard sizes, with some being much smaller than others, the slippage in trading within these smaller shards is significantly greater than in larger ones. 
This discrepancy can result in transactions clustering in the larger shards rather than spread evenly, reducing parallelism.
Moreover, an attacker could disrupt the system by adding or removing liquidity from certain shards, thereby harming overall system performance.

We address this by incentivizing liquidity providers to allocate their tokens to the shards with lower volumes.
Intuitively, this ensures that smaller shards receive more frequent fees from traders when the volumes of shards are not balanced, which incentivizes liquidity providers to deposit tokens in these smaller shards.
Therefore, the system would converge to the state where all shards have balanced volumes, and traders then randomly select shards for trading.
As with the $c$-Non-Splitting property, large transactions suffer from high slippage, resulting in an advantage for larger shards.
So here too, we refine this requirement to scenarios where the ratio of the traded amount and the deposited amount is below a threshold~$c$.
We call this the \emph{$c$-smaller-better property}.

\begin{restatable}[$c$-smaller-better]{prop}{propertysmallerpoolsmallercost}\label{smallerpoolsmallercost}
	Given an output amount ${\pooloutput{}{}{A} > 0}$, for any two shards with deposited token amounts $(\poolcontent{i}{}{A},\poolcontent{i}{}{B})$ and $(\poolcontent{j}{}{A},\poolcontent{j}{}{B})$, respectively, and $\poolcontent{i}{}{A} < \poolcontent{j}{}{A}$.
	For the constant $0 < c < 1$, if $\frac{\pooloutput{}{}{A}}{\poolcontent{j}{}{A}} < \frac{\pooloutput{}{}{A}}{\poolcontent{i}{}{A}} \leq c$ and $\frac{\poolcontent{i}{}{A}}{\poolcontent{i}{}{B}} = \frac{\poolcontent{j}{}{A}}{\poolcontent{j}{}{B}}$, then the cost of trading $\pooloutput{}{}{A}$ \textit{token~A} in the smaller shard is less than that in the larger shard, i.e.,~${G_{\textit{SAMM}}(\poolcontent{i}{}{A},\poolcontent{i}{}{B}, \pooloutput{}{}{A}) < G_{\textit{SAMM}}(\poolcontent{j}{}{A},\poolcontent{j}{}{B}, \pooloutput{}{}{A})}$.

\end{restatable}

\paragraph{$c$ value}
If the above properties are satisfied for a particular~$c$ but a trade occurs with a larger ratio of output amount to the deposited amount, traders may split their transactions or tend to larger shards to minimize their cost.
Therefore, $c$ should be as large as possible to ensure such occurrences are rare.

\paragraph{Note}
CPMMs charge a constant ratio of the net amount as the trading fee (\S\ref{sec:preliminaries:tf}). 
Therefore, their cost function does not satisfy either property due to slippage (\S\ref{app:limitation_cpmm}).

		\subsection{Trading Fee Design}\label{sec:SAMM_trading_fee}

To satisfy properties~\ref{concavity} and~\ref{smallerpoolsmallercost}, we first generalize the trading fee function to provide flexibility in the incentive design.
Then, we propose a specific trading fee function.

\paragraph{Generic Trading Fee Function}
The gross amount of a trade comprises the trading fee and the net amount, with the latter being determined by the CPMM curve. 
To maintain the foundational characteristics of AMMs, we do not modify the CPMM curve. 
Instead, we generalize the trading fee function beyond simply taking a ratio of the net amount as in previous work (e.g.,~\cite{angeris2020improved,goyal2023finding,angeris2022optimal}). 

Denote the trading fee function of the AMM by $\textit{tf}(\poolcontent{}{}{A}, \poolcontent{}{}{B}, \pooloutput{}{}{A})$, which takes the deposited amount of \textit{token~A},~$\poolcontent{}{}{A}$, and \textit{token~B}, $\poolcontent{}{}{B}$, in the shard and the output amount of \textit{token~A}, $\pooloutput{}{}{A}$, and outputs the amount of \textit{token~B} the trader needs to pay as the trading fee.
Then, the gross amount of getting $\pooloutput{}{}{A}$ \textit{token~A} is: 
\begin{equation}\label{cost_function}
	G(\poolcontent{}{}{A},\poolcontent{}{}{B}, \pooloutput{}{}{A}) = \textit{tf}(\poolcontent{}{}{A}, \poolcontent{}{}{B}, \pooloutput{}{}{A}) + \textit{net}(\poolcontent{}{}{A},\poolcontent{}{}{B}, \pooloutput{}{}{A}) \,\, .
\end{equation}

\paragraph{Bounded-Ratio Trading Fee Function}
In order to achieve the desired properties, we need flexibility for the trading fee function design.
A monomial function is sufficient to achieve most of our goals. 
The function takes the variables available on a trade, $\poolcontent{}{}{A}, \poolcontent{}{}{B}, \pooloutput{}{}{A}$.
It is parameterized by four values,~$\beta_1, \beta_2, \beta_3, \beta_4$: 
\begin{equation}
	\textit{tf}(R^A, R^B, O^A; \beta_1, \beta_2, \beta_3, \beta_4) \coloneqq \beta_1 (\poolcontent{}{}{A}) ^{\beta_2}  (\poolcontent{}{}{B})^{\beta_3}(\pooloutput{}{}{A})^{\beta_4} \,\, .\nonumber
\end{equation}

While the monomial function offers a straightforward approach to calculating trading fees, its lack of bounds poses a challenge. 
Without limits, the trading fee might become excessively high, deterring traders, or too low, diminishing the revenue for liquidity providers. 
To address this, there is a need for adjustable boundaries similar to setting a single fixed ratio in previous work.
The limits allow for the fine-tuning of the trading fee's absolute value.
We thus introduce the bounded-ratio polynomial function based on the monomial, with parameters~$r_{\min}$ and~$r_{\max}$ to control the trading fee range and $\beta_5$ to adjust the trading fee's base value:

\begin{multline}\label{tf_bounded_ratio}
    \textit{tf}_{\textit{BRP}}(\poolcontent{}{}{A}, \poolcontent{}{}{B}, \pooloutput{}{}{A}; \beta_1, \beta_2, \beta_3, \beta_4, \beta_5) \coloneqq \frac{\poolcontent{}{}{B}}{\poolcontent{}{}{A}} \pooloutput{}{}{A} \times
     \\  \max \{ 
    r_{\min}, 
    \min \{ r_{\max},
    \beta_1 (\poolcontent{}{}{A})^{\beta_2} (\poolcontent{}{}{B})^{\beta_3} (\pooloutput{}{}{A})^{\beta_4} + \beta_5 
    \} 
    \}\,\,.
\end{multline}

The ratio ${\poolcontent{}{}{B} / \poolcontent{}{}{A}}$ represents the market price of \textit{token~A} relative to \textit{token~B} (See \S\ref{sec:preliminaries:trade}).
The product $\frac{\poolcontent{}{}{B}}{\poolcontent{}{}{A}}\pooloutput{}{}{A}$ is thus the trader's net payment in terms of \textit{token~B} without slippage.
Then ${\max\{
	r_{\min} , 
	\min\{
	r_{\max},
	\beta_1 (\poolcontent{}{}{A}) ^{\beta_2}  (\poolcontent{}{}{B})^{\beta_3}(\pooloutput{}{}{A})^{\beta_4} + \beta_5
	\}	
	\}}$ acts as the constrained ratio of the trading fee to that net payment.
We hereinafter omit the parameters and write~${\textit{tf}_{\textit{BRP}}(\poolcontent{}{}{A}, \poolcontent{}{}{B}, \pooloutput{}{}{A})}$ for brevity.

        \subsection{Parameter Selection}\label{sec:SAMM_parameter}

Now we turn to the selection of parameters for the SAMM trading fee function.
First, we identify the necessary conditions under which the bounded-ratio polynomial trading fee function aligns with the $c$-smaller-better property.

\begin{restatable}{propos}{theoremnecessary}\label{theorem:parameter1}
	Let ${\textit{tf}_{\textit{SAMM}}(\poolcontent{}{}{A}, \poolcontent{}{}{B},  \pooloutput{}{}{A}) = \textit{tf}_{\textit{BRP}}(\poolcontent{}{}{A}, \poolcontent{}{}{B},  \pooloutput{}{}{A})}$, then the following conditions are necessary for 	$c$-smaller-better
	to hold for $G_{\textit{SAMM}}(\poolcontent{}{}{A},\poolcontent{}{}{B}, \pooloutput{}{}{A})$:
	(1) $\beta_3 = 0$; (2) $\beta_2 + \beta_4 = 0$; (3) $\beta_1 < 0$; (4) $0 < \beta_4 \leq 1$; (5) $r_{\min} < \beta_5 \leq r_{\max}$; and (6) $\frac{\beta_5 - r_{\min}}{-\beta_1} \geq c^{\beta_4}$.
\end{restatable}

We require the polynomial value to be between~$r_{\min}$ and~$r_{\max}$.
Then, we need to make sure that the derivative of the gross amount on the size of the shard is non-negative to ensure the $c$-smaller-better property.
Required items come directly from these two restrictions.
We defer the proof to Appendix \ref{app:proof_necessary}.

Based on Proposition~\ref{theorem:parameter1}, we require that $\beta_1 < 0$, ${\beta_2 + \beta_4 = 0}$, $\beta_3 =0$, $0 <\beta_4 \leq 1$, and~$r_{\min} < \beta_5 \leq r_{\max}$.
Next, we identify additional conditions that are sufficient for both properties to hold.

\begin{restatable}{thm}{theoremsufficient}
	\label{theorem:parameter2}
	Let $\textit{tf}_{\textit{SAMM}}(\poolcontent{}{}{A}, \poolcontent{}{}{B},  \pooloutput{}{}{A}) = \textit{tf}_{\textit{BRP}}(\poolcontent{}{}{A}, \poolcontent{}{}{B},  \pooloutput{}{}{A})$, if $\beta_1 < 0, \beta_2 + \beta_4 = 0, \beta_3 =0, 0 <\beta_4 \leq 1,  r_{\min} < \beta_5 \leq r_{\max}$ and $\frac{\beta_5 - r_{\min}}{-\beta_1} \geq c^{\beta_4}$, then following items are sufficient for the $c$-Non-Splitting and $c$-smaller-better properties to hold for $G_{\textit{SAMM}}(\poolcontent{}{}{A},\poolcontent{}{}{B}, \pooloutput{}{}{A})$:
	(1) $\beta_1\beta_4(\beta_4+1)c^{\beta_4-1}(1-c)^3\leq -2$; and (2) $-\beta_1\beta_4 \geq\frac{c^{1-\beta_4}}{(1-c)^2}$.

\end{restatable}

The $c$-smaller-better property is satisfied when the derivative of the gross amount is positive.
It is sufficient to ensure the $c$-Non-Splitting property when the gross amount is concave to the output amount, which is ensured by a negative second derivative over the output amount.
The proof is in Appendix~\ref{app:proof_sufficient}.

By setting $\beta_2 = -1, \beta_3 = 0, \beta_4 = 1, \beta_5 = r_{\max}$, and choosing $\beta_1 < -1$, the fee function satisfies the above requirements, leaving just three parameters: 
\begin{multline*}
	\textit{tf}_{\textit{SAMM}}(\poolcontent{}{}{A}, \poolcontent{}{}{B},  \pooloutput{}{}{A}) = \\  \frac{\poolcontent{}{}{B}}{\poolcontent{}{}{A}} \times \pooloutput{}{}{A} \times \max
	\left\{	r_{\min}, \beta_1\times \frac{\pooloutput{}{}{A}}{\poolcontent{}{}{A}} + r_{\max} 
	\right\} \,\, . 
\end{multline*}

By setting $\beta_4 = 1$ and $\beta_5 = r_{\max}$ in the sufficient condition of the $c$-Non-Splitting and $c$-smaller-better properties (Theorem~\ref{theorem:parameter2}), the sufficient condition becomes:
\begin{restatable}{cor}{theoremspecialpara}\label{corollary:parameter}
	For any $\beta_1 < -1$ and $c$ satisfying $c \leq \min\left\{ 1 - (-\beta_1)^{-\frac{1}{3}}, \frac{r_{\max} - r_{\min}}{-\beta_1}\right\},$
	the SAMM cost function 
	$G_{\textit{SAMM}}(\poolcontent{}{}{A},\poolcontent{}{}{B}, \pooloutput{}{}{A})$ satisfies the $c$-Non-Splitting and $c$-smaller-better properties.
\end{restatable}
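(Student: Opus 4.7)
The corollary is a direct specialization of Theorem~\ref{theorem:parameter2} to the concrete parameter choice fixed just above the statement: $\beta_2 = -1$, $\beta_3 = 0$, $\beta_4 = 1$, $\beta_5 = r_{\max}$, together with $\beta_1 < -1$. The plan is to substitute these values into every hypothesis appearing in Proposition~\ref{theorem:parameter1} and Theorem~\ref{theorem:parameter2} and observe that each either holds trivially, follows from $\beta_1 < -1$, or reduces to one of the two inequalities appearing in the $\min$ bounding $c$.

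First I would dispatch the six necessary conditions of Proposition~\ref{theorem:parameter1}. Items~\ref{itm1}--\ref{itm:c5range} are immediate from the parameter choice together with $\beta_1 < -1$ (stronger than the required $\beta_1 < 0$) and the implicit assumption $r_{\min} < r_{\max}$. Item~\ref{itm:51c4}, $\frac{\beta_5 - r_{\min}}{-\beta_1} \geq c^{\beta_4}$, simplifies to $c \leq \frac{r_{\max} - r_{\min}}{-\beta_1}$, which is precisely the second term of the $\min$ in the corollary's hypothesis.

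Next I would substitute $\beta_4 = 1$ into the two sufficient conditions of Theorem~\ref{theorem:parameter2}. Condition~\ref{itm:beta1beta4}, $\beta_1 \beta_4 (\beta_4 + 1) c^{\beta_4 - 1} (1-c)^3 \leq -2$, collapses to $2 \beta_1 (1-c)^3 \leq -2$, equivalent to $c \leq 1 - (-\beta_1)^{-1/3}$, the first term of the $\min$. Condition~\ref{itm:beta1beta4c}, $-\beta_1 \beta_4 \geq \frac{c^{1-\beta_4}}{(1-c)^2}$, collapses to $-\beta_1 \geq (1-c)^{-2}$, equivalent to $c \leq 1 - (-\beta_1)^{-1/2}$. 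Since $-\beta_1 > 1$, the map $\alpha \mapsto (-\beta_1)^{\alpha}$ is strictly increasing, so $(-\beta_1)^{-1/3} > (-\beta_1)^{-1/2}$, making the first bound strictly tighter and thus implying the second. All hypotheses of Theorem~\ref{theorem:parameter2} are therefore satisfied under the corollary's premise, and the theorem delivers both desired properties. The only step requiring more than bookkeeping is this monotonicity comparison, which rests on nothing beyond the monotonicity of $x^{\alpha}$ in $\alpha$ for $x > 1$.
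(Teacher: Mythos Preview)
Your proposal is correct and follows the same route as the paper: specialize Theorem~\ref{theorem:parameter2} to $\beta_2=-1,\ \beta_3=0,\ \beta_4=1,\ \beta_5=r_{\max}$ and observe that the hypotheses reduce to the two terms in the $\min$. You also make explicit something the paper leaves implicit, namely that condition~\ref{itm:beta1beta4c} becomes $c\le 1-(-\beta_1)^{-1/2}$ and is dominated by $c\le 1-(-\beta_1)^{-1/3}$ when $-\beta_1>1$, so it need not appear in the $\min$.
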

For instance, the parameters $\beta_1 = -1.05$, $r_{\max} = 0.012$, $r_{\min} = 0.001$, and $c = 0.0104$ meet the specified criteria. 
According to historical records (\S\ref{app:uniswapdata}), over $99\%$ of Uniswap v2 transactions have a ratio below $0.0052$, suggesting that if its liquidity is split into two shards, $99\%$ of transactions would fall within our targeted range. 
Specifically, in the AMMs with the highest trading volumes, USDC-ETH and USDT-ETH, the ratio remains below $0.00128$, which can manage eight shards. 
Increasing the number of shards could be achieved by adjusting the value of~$c$ by increasing~$r_{\max}$ and~$\beta_1$.

    \section{Analysis of Game-Theoretic Security}\label{sec:game}

The model gives rise to a game (\S\ref{GT:game_model}) played among traders and liquidity providers.
Our goal (\S\ref{GT:property}) is to ensure that at equilibrium, traders randomly select shards to trade in without splits, thereby enhancing throughput.
SAMM achieves this~(\S\ref{sec:GT:equilibrium}) and converges to a state where all shards have balanced volumes, overcoming attacks that unbalance the shards.

		\subsection{Game Model}\label{GT:game_model}

We derive from the model a sequential game with discrete steps $k = 0, 1, \cdots$.
Denote the game, parameterized by the number $n$ of shards and their trading fee function~$\textit{tf}$, by~$\Gamma_n(\textit{tf})$.

\paragraph{System State}
In $\Gamma_n(\textit{tf})$, the state of each shard $\textit{shard}_i$ in step $k$ consists of the amount of deposited~\textit{token~A}, the amount of deposited~\textit{token~B} and the amount of share tokens, $\poolcontent{i}{k}{A}, \poolcontent{i}{k}{B}, \poolcontent{i}{k}{S}$, respectively.
Recall that share tokens are not deposited in the shard but are held by liquidity providers, so $\poolcontent{i}{}{S}$ is the total amount of $\textit{shard}_i$'s share tokens held by liquidity providers.
Denote by $\poolcontentvector{}{k}{}$ the state of all AMM contracts in step~$k$,
${\left(
\left(\poolcontent{1}{k}{A},\poolcontent{1}{k}{B}, \poolcontent{1}{k}{S} \right),
\cdots, \left(\poolcontent{n}{k}{A},\poolcontent{n}{k}{B}, \poolcontent{n}{k}{S}	 \right) \right)}$.

\paragraph{Liquidity Provider Actions}
The liquidity provider decides the amount of tokens she deposits in each shard.
Denote the amount of \textit{token~A} and \textit{token~B} she deposits in $\textit{shard}_i$ by~$\lpac{i}{}{A} , \lpac{i}{}{B} \geq 0$, respectively.
Recall that the scheduler assigns the liquidity provider~$\lpac{}{}{A}$ \textit{token~A} and~$\lpac{}{}{B}$ \textit{token~B}.
The total amount of tokens deposited should not exceed the amount she holds:

\begin{equation*}
	\forall 1 \leq i \leq n , \lpac{i}{}{A}, \lpac{i}{}{B} \geq 0,
	\sum_{i=1}^n \lpac{i}{}{A} \leq \lpac{}{}{A}, \sum_{i=1}^n \lpac{i}{}{B} \leq \lpac{}{}{B} \,\, .
\end{equation*}

The action of a liquidity provider is thus the vector $a_{lp} = \left(\left(\lpac{1}{}{A}, \lpac{1}{}{B}\right), \cdots, \left(\lpac{n}{}{A}, \lpac{n}{}{B}\right)\right)$.
When the liquidity provider takes this action with the system state $\poolcontentvector{}{k}{}$, the liquidity provider receives $\pooloutput{i}{}{S}$ share token from $\textit{shard}_i$, where ${\pooloutput{i}{}{S} = \poolcontent{i}{k}{S} \times \min\left\{\frac{\lpac{i}{}{A}}{\poolcontent{i}{k}{A}}, \frac{\lpac{i}{}{B}}{\poolcontent{i}{k}{B}} \right\} }$.

There is no arbitrage opportunity only if ${\poolcontent{i}{}{A}}/{\poolcontent{i}{}{B}} = {p^B}/{p^A}$ (\S\ref{sec:preliminaries:trade}), so arbitrageurs enforce this equality. 
When~${\lpac{i}{}{A}/\lpac{i}{}{B} = p^B/p^A}$, increasing just one of $\lpac{i}{}{A}$ or $\lpac{i}{}{B}$ would not increase the share token the liquidity provider receives, which means more payment without more revenue.
Therefore, we only consider actions where ${\lpac{i}{}{A}}/{\lpac{i}{}{B}} = {p^B}/{p^A}$ and require ${\lpac{}{}{A}}/{\lpac{}{}{B}} = {p^B}/{p^A}$.
The action space of a liquidity provider is denoted by $\mathcal{A}_{\textit{lp}}(\lpac{}{}{A} , \lpac{}{}{B} )$:
\begin{equation*}
    \mathcal{A}_{lp}(\lpac{}{}{A} , \lpac{}{}{B} ) =
    \left\{ a_{lp}  \left|
    \begin{aligned}
        &\forall 1 \leq i \leq n, \lpac{i}{}{A} = \frac{p^B}{p^A} \lpac{i}{}{B} \geq 0, \\
        &\sum_{i=1}^n \lpac{i}{}{A} \leq \lpac{}{}{A}, \sum_{i=1}^n \lpac{i}{}{B} \leq \lpac{}{}{B}
    \end{aligned}
    \right. \right\} \,\, .
\end{equation*}

We denote the updated state of shards from the previous state $\poolcontentvector{}{}{}$ and the action of a liquidity provider $a_{lp}$ by $\poolcontentvector{}{}{} + a_{lp}$.
Then for ${\poolcontentvector{}{}{\prime} = \left(\left(\poolcontent{1}{}{A^{\prime}},\poolcontent{1}{}{B^{\prime}}, \poolcontent{1}{}{S^{\prime}} \right), \cdots \left(\poolcontent{n}{}{A^{\prime}},\poolcontent{n}{}{B^{\prime}}, \poolcontent{n}{}{S^{\prime}}	 \right) \right) = \poolcontentvector{}{}{} + a_{lp}}$, we have
\begin{equation*}
	\poolcontent{i}{}{A^{\prime}} = \poolcontent{i}{}{A} + \lpac{i}{}{A}, 
	\poolcontent{i}{}{B^{\prime}} = \poolcontent{i}{}{B} + \lpac{i}{}{B}, 
	\poolcontent{i}{}{S^{\prime}} = \poolcontent{i}{}{S} + 	\pooloutput{i}{}{S} = \left(1 + \frac{ \lpac{i}{}{A} }{ \poolcontent{i}{}{A} }\right)\poolcontent{i}{}{S}.
\end{equation*}

After the liquidity addition operation, there is no arbitrage opportunity for the arbitrageurs since $\frac{\poolcontent{i}{}{A} + \lpac{i}{}{A}}{\poolcontent{i}{}{B} + \lpac{i}{}{B}} = \frac{\poolcontent{i}{}{A}}{\poolcontent{i}{}{B}} = \frac{p^B}{p^A}$ (\S\ref{sec:preliminaries:trade}).
Then the update of state in step $k+1$ is~${\poolcontentvector{}{k+1}{} = \poolcontentvector{}{k}{} + a_{lp}.}$

\paragraph{Trader Actions}
The action of a \textit{BA} trader determines the amount of \textit{token~A} she acquires from each shard.
Denote by~$\uac{i}{}{BA} \geq 0$ the amount of \textit{token~A} she acquires in $\textit{shard}_i$.
Recall that the scheduler assigns the \textit{BA} trader $\uac{}{}{\textit{BA}}$ \textit{token~A} to acquire in total.
The action of a \textit{BA} trader is thus the vector $a^{\textit{BA}} = \left(\uac{1}{}{\textit{BA}}, \cdots, \uac{n}{}{\textit{BA}} \right)$.
The action space of a \textit{BA} trader is denoted by $\ActionSpaceBA(\uac{}{}{\textit{BA}})$, which is the set of all feasible actions: 
\begin{equation}\label{restriction_ba}
	\ActionSpaceBA(\uac{}{}{\textit{BA}}) = \left\{ a^{\textit{BA}}  \left| \forall 1 \leq i \leq n,\uac{}{}{\textit{BA}}  \geq 0, \sum_{i=1}^n \uac{i}{}{\textit{BA}} = \uac{}{}{\textit{BA}} \right. \right\} \,\, . 
\tag*{\raisebox{0.2em}[0pt][0pt]{(\theequation)}} 
\refstepcounter{equation} 
\end{equation}

After the trade operation and the arbitrage, $\poolcontent{i}{}{A}$ and $\poolcontent{i}{}{B}$ remain unchanged (\S\ref{sec:preliminaries:trade}).
Consequently, the state of the shards remains unchanged in the subsequent step: ${\forall 1 \leq i \leq n, \poolcontent{i}{k+1}{A} = \poolcontent{i}{k}{A}, \poolcontent{i}{k+1}{B} = \poolcontent{i}{k}{B}}$.

\paragraph{Utility and Strategies}
For traders and liquidity providers, we first discuss their revenue and then define their strategies and utility, respectively.
Players determine the value of tokens according to the external market, namely $p^A$ and $p^B$.

Consider a \textit{BA} trader whose goal is to acquire $\uac{}{}{\textit{BA}}$ units of \textit{token~A}. 
This trader needs to pay the gross amount and may derive some fixed reward from getting these tokens.
We consider her revenue only as the inverse of the gross amount in terms of {token~B} times the value of each \textit{token~B}: 

\begin{align}\label{utilityba}
	U^{\textit{BA}}(	\poolcontentvector{}{}{},  a^{\textit{BA}}) =& 
	-p^B \times \sum_{i}
	G(\poolcontent{i}{}{A},\poolcontent{i}{}{B}, \uac{i}{}{\textit{BA}}) \nonumber \\
	=& -p^B\times\sum_{i}
	G(\poolcontent{i}{}{A},\frac{p^A}{p^B}\poolcontent{i}{}{A}, \uac{i}{}{\textit{BA}})
	\,\, .
\end{align}

For the trader aiming to get $\uac{}{}{\textit{BA}}$ \textit{token~A}, the strategy of the trader $\pi^{BA}( \poolcontentvector{}{}{}, \uac{}{}{\textit{BA}}, a^{\textit{BA}})$ takes $\poolcontentvector{}{}{}$, $\uac{}{}{\textit{BA}}$ and an action~$a^{\textit{BA}}$ as input, then outputs the probability of taking action~$a^{\textit{BA}}$.
The total probability of all feasible actions should be 1: ${\sum_{a^{\textit{BA}} \in \ActionSpaceBA(\uac{}{}{\textit{BA}})} \pi^{BA}( \poolcontentvector{}{}{}, \uac{}{}{\textit{BA}}, a^{\textit{BA}}) = 1}.$

The utility of the trader over the strategy is a function of the system state $\poolcontentvector{}{}{}$, the assigned requirement $\uac{}{}{\textit{BA}}$, and the strategy of the trader $\pi^{BA}$. 
It is the expected revenue under the distribution of actions:
\begin{multline}\label{utilitybapi}
	U^{BA}(\poolcontentvector{}{}{},  \uac{}{}{\textit{BA}}, \pi^{BA}) = \\
	\sum_{a^{\textit{BA}} \in \ActionSpaceBA(\uac{}{}{\textit{BA}})}
	\left(
	\pi^{BA}( \poolcontentvector{}{}{}, \uac{}{}{\textit{BA}}, a^{\textit{BA}}) \times
	U^{BA}(\poolcontentvector{}{}{}, a^{\textit{BA}})
	\right)
	\,\, .
\end{multline}

The revenue of a liquidity provider comes from the trading fees paid by traders.
Ignoring the effect of other liquidity providers, all future steps are identical due to arbitrageurs, so the long-term mean revenue is proportional to the next-step mean revenue.
Hence, we consider the \emph{myopic} setting (as in, e.g.,~\cite{roughgarden2021transaction,ferreira2021dynamic}), where the liquidity provider regards the revenue in the next step as her utility.

Denote the revenue of a liquidity provider with her action $a_{lp} = \left(\left(\lpac{1}{}{A}, \lpac{1}{}{B}\right), \cdots, \left(\lpac{n}{}{A}, \lpac{n}{}{B}\right)\right)$, the action of the \textit{BA} trader in the next step $a^{\textit{BA}} = \left(\uac{1}{}{\textit{BA}}, \cdots, \uac{n}{}{\textit{BA}} \right)$ and the system state $\poolcontentvector{}{}{}= \left(\left(\poolcontent{1}{}{A},\poolcontent{1}{}{B}, \poolcontent{1}{}{S} \right), \cdots \left(\poolcontent{n}{}{A},\poolcontent{n}{}{B}, \poolcontent{n}{}{S}	 \right) \right)$, by the function $U_{lp}(\poolcontentvector{}{}{}, a_{lp},a^{BA})$.
In the next step, $\textit{shard}_i$ receives a trading fee of $\textit{tf}( \poolcontent{i}{}{A} + \lpac{i}{}{A},  \poolcontent{i}{}{B} + \lpac{i}{}{B}, \uac{i}{}{\textit{BA}}) $. 
The liquidity provider receives a fraction of that fee proportional to her fraction of share tokens out of all shares in the shard.
Therefore, the revenue function is
\begin{align}\label{TFBA}
	&U_{lp}(\poolcontentvector{}{}{}, \, a_{lp},a^{BA}) \nonumber \\
	=&p^B \times \sum_{i=1}^n
	 \left\{\textit{tf}( \poolcontent{i}{}{A} + \lpac{i}{}{A},  \poolcontent{i}{}{B} + \lpac{i}{}{B}, \uac{i}{}{\textit{BA}})  
	 \times \frac{\frac{ \lpac{i}{}{A} }{ \poolcontent{i}{}{A} }\poolcontent{i}{}{S}}{ \left(1 + \frac{ \lpac{i}{}{A} }{ \poolcontent{i}{}{A} }\right)\poolcontent{i}{}{S}}  
	\right\}\nonumber \\
	=&p^B \times \sum_{i=1}^n
	 \left\{\textit{tf}( \poolcontent{i}{}{A} + \lpac{i}{}{A},  \poolcontent{i}{}{B} + \lpac{i}{}{B}, \uac{i}{}{\textit{BA}})  
	 \times \frac{\lpac{i}{}{A}}{ \lpac{i}{}{A} +  \poolcontent{i}{}{A}} 
	\right\} \,\, .
\end{align}

For the liquidity provider with $\lpac{}{}{A}$ \textit{token~A} and $\lpac{}{}{B}$ \textit{token~B}, the strategy of the liquidity provider $\pi_{lp}( \poolcontentvector{}{}{}, \lpac{}{}{A}, \lpac{}{}{B})$ takes~$\poolcontentvector{}{}{}$,~$\lpac{}{}{A}$,~$\lpac{}{}{B}$ and an action~$a_{lp}$ as input, and outputs the probability of taking action~$a_{lp}$.
The total probability of all feasible actions should be 1,
\begin{equation}
	\sum_{a_{lp} \in \mathcal{A}_{lp}(\lpac{}{}{A} , \lpac{}{}{B} )} \pi_{lp}( \poolcontentvector{}{}{}, \lpac{}{}{A}, \lpac{}{}{B}, a_{lp}) = 1 \,\, .
\end{equation}

The utility of the liquidity provider over strategies is a function of the system state $\poolcontentvector{}{}{}$, the amount of tokens she is assigned $\lpac{}{}{A}$, $\lpac{}{}{B}$, and the strategy of the liquidity provider and traders, $\pi_{lp}, \pi^{BA}, \pi^{AB}$; denote it by $U_{lp}(\poolcontentvector{}{}{},\lpac{}{}{A}, \lpac{}{}{B}, \pi_{lp},  \pi^{BA}, \pi^{AB})$.
Before calculating this, we show the revenue given the action of the liquidity provider and the strategies of traders, denoted by $U_{lp}(\poolcontentvector{}{}{},\lpac{}{}{A}, \lpac{}{}{B}, a_{lp},  \pi^{BA}, \pi^{AB})$.
It takes the system state $\poolcontentvector{}{}{}$, the action of the liquidity provider $a_{lp}$, the strategies of traders~$\pi^{BA}$ and~$\pi^{AB}$ as input, then outputs the expected utility over the strategies and distributions of traders.
The strategy of traders is affected by the state after the liquidity provider's action, namely $\poolcontentvector{}{}{} + a_{lp}$.
Denote by $E_{ \uac{}{}{\textit{BA}} \sim D^{BA}}[ f(\cdot) ]$ the expected value of~$f(\cdot)$ with $\uac{}{}{\textit{BA}} $ sampled from $D^{BA}$.
The revenue~is:
\begin{multline*}
	U_{lp}(\poolcontentvector{}{}{},\lpac{}{}{A}, \lpac{}{}{B}, a_{lp},  \pi^{BA}, \pi^{AB}) = \\
	 P_t^{\textit{BA}}\times E_{ \uac{}{}{\textit{BA}} \sim D^{BA}}
	\left[ 
	\sum_{a^{\textit{BA}} \in \ActionSpaceBA(\uac{}{}{\textit{BA}})}
	\left( 
		\substack{\pi^{BA}( \poolcontentvector{}{}{}+a_{lp}, \uac{}{}{\textit{BA}}, a^{\textit{BA}})
	\times \\
		U_{lp}(\poolcontentvector{}{}{}, a_{lp},a^{BA})}
	\right)
	\right]
  \\ +
 P_t^{\textit{AB}} \times E_{ \uac{}{}{\textit{AB}} \sim D^{AB}}
	\left[
	\sum_{a^{\textit{AB}} \in \mathcal{A}_{\textit{AB}}(\uac{}{}{\textit{AB}})}
	\left(
		\substack{
			\pi^{AB}( \poolcontentvector{}{}{}+a_{lp}, \uac{}{}{\textit{AB}}, a^{\textit{AB}})
		\times \\
		U_{lp}(\poolcontentvector{}{}{}, a_{lp},a^{AB})
		}	
	\right)
	\right]
	\,\, .
\end{multline*}

To simplify the presentation, we assume the liquidity provider is always followed by a BA trader. The expressions for an AB trader are symmetric. 
Then, the above equation can be simplified as
\begin{multline}\label{Ulp_action_strategy_simplified}
	U_{lp}(\poolcontentvector{}{}{},\lpac{}{}{A}, \lpac{}{}{B}, a_{lp},  \pi^{BA}) = \\
	 E_{ \uac{}{}{\textit{BA}} \sim D^{BA}}
	\left[ 
	\sum_{a^{\textit{BA}} \in \ActionSpaceBA(\uac{}{}{\textit{BA}})}
	\left( 
		\substack{\pi^{BA}( \poolcontentvector{}{}{}+a_{lp}, \uac{}{}{\textit{BA}}, a^{\textit{BA}})
	\times \\
		U_{lp}(\poolcontentvector{}{}{}, a_{lp},a^{BA})}
	\right)
	\right]
	\,\, .
\end{multline}

Then, the utility function of the liquidity provider over strategies is the expected utility under the distribution of actions:
\begin{multline}\label{Ulp_strategy_strategy}
	U_{lp}(\poolcontentvector{}{}{},\lpac{}{}{A}, \lpac{}{}{B}, \pi_{lp},  \pi^{BA}, \pi^{AB}) = \\
	\sum_{a_{lp} \in \mathcal{A}_{lp}(\lpac{}{}{A} , \lpac{}{}{B} )}
	\left(
	\substack{\pi_{lp}( \poolcontentvector{}{}{}, \lpac{}{}{A}, \lpac{}{}{B}, a_{lp}) \\ \times
	U_{lp}(\poolcontentvector{}{}{},\lpac{}{}{A}, \lpac{}{}{B}, a_{lp},  \pi^{BA}, \pi^{AB})}
	\right)
	\,\, .
\end{multline}

\paragraph{Solution Concept}
In a Subgame-Perfect Nash Equilibrium (SPNE), players cannot gain higher utility by changing strategies at any step~\cite{selten1965spieltheoretische}, knowing subsequent players will take their best responses. 

When a trader takes an action in a given step, her utility is influenced solely by her immediate strategy and the current state of AMMs, as outlined in Equation~\ref{utilityba}.
Crucially, future actions do not affect this calculation, allowing the trader to directly optimize her utility, thereby establishing dominant strategies.

In the case of a liquidity provider being chosen in a step, the situation is different.
Given their myopic viewpoint, liquidity providers only need to account for the strategy of the trader in the ensuing step. 
Their actions in subsequent steps do not affect their own utility. 
Thus the sequential game is reduced to a two-stage Stackelberg game and SPNE to a Stackelberg Equilibrium~\cite{von2010market}.

To formalize this, we denote the strategies of the liquidity provider, the BA trader, and the AB trader in the SPNE by~$\tau_{lp}$,~$\tau^{BA}$ and~$\tau^{AB}$, respectively.
The BA trader would always get the optimal utility in equilibrium, namely $\forall \poolcontentvector{}{}{}, \uac{}{}{\textit{BA}} $, we have $	U^{\textit{BA}}(\tau^{BA}, \poolcontentvector{}{}{}, \uac{}{}{\textit{BA}})
 = \max_{\pi^{BA}}U^{\textit{BA}}(\pi^{BA}, \poolcontentvector{}{}{}, \uac{}{}{\textit{BA}}).$

Note that $\tau^{BA}$ is a best response for the BA trader.
The strategy of liquidity provider in equilibrium is just the optimal strategy when traders adopt their best response, namely for all~$\poolcontentvector{}{}{}$, $\lpac{}{}{A}$, and~$\lpac{}{}{B}$, we have
\begin{multline*}
	U_{lp}(\tau_{lp}, \poolcontentvector{}{}{}, \tau^{BA}, \tau^{AB},\lpac{}{}{A}, \lpac{}{}{B}) = \\
	\max_{\pi_{lp}}U_{lp}(\pi_{lp}, \poolcontentvector{}{}{}, \tau^{BA}, \tau^{AB},\lpac{}{}{A}, \lpac{}{}{B}) \,\, .
\end{multline*}

\paragraph{Game Assumptions}

We briefly discuss two non-trivial model assumptions.

First, in a blockchain, each transaction consumes resources measured in a metric called \emph{gas}.
The transaction pays a fee according to its gas consumption. 
In practice, gas fees are volatile, often nominal in high-performance blockchains (e.g., in Sui under $\$0.005$ per transaction on average~\cite{suigasfee,suiprice}).
The model conservatively neglects gas fees to avoid arbitrary assumptions on gas costs.
Gas fees disincentivize trade splitting, which helps with parallelization. 
In particular, with a set gas fee, Properties~\ref{concavity} and~\ref{smallerpoolsmallercost} continue to hold.
Therefore, it can only strengthen the equilibrium analysis.

Second, perfect arbitrageurs are a common assumption in the literature~\cite{milionis2022automated, goyal2023finding, canidio2023batching, chan2024mechanism}.
We nonetheless demonstrate later (\S\ref{sec:simulation}) that our results hold with real workloads, without this assumption.

		\subsection{Desired Property}\label{GT:property}
		
Our goal is to improve the throughput by allowing parallelism. 
Specifically, we would like traders to evenly distribute their transactions among all AMM shards without splitting them.
That is, a dominant strategy for the \textit{BA} trader should be to randomly select an AMM shard to acquire all her needed \textit{token~A}.
Denote the action of getting all $\uac{}{}{\textit{BA}}$ \textit{token~A} in~$\textit{shard}_i$~by 

\begin{equation}\label{eq:single_action}
	a_i^{\textit{BA}}(\uac{}{}{\textit{BA}}) = \left(0, \cdots, \uac{i}{}{\textit{BA}} = \uac{}{}{\textit{BA}}, \cdots, 0\right) \,\, .
\end{equation}

Denote the set of these actions by $\SingleAction(\uac{}{}{\textit{BA}}) \subset\mathcal{A}_t^{\textit{BA}}(\uac{}{}{\textit{BA}})$:
\begin{equation*}
	\SingleAction(\uac{}{}{\textit{BA}}) = \left\{ a_i^{\textit{BA}}(\uac{}{}{\textit{BA}}) 
	\left| 
		1 \leq i \leq n
	\right. \right\} \,\, .
\end{equation*}

The strategy that uniformly at random selects an AMM contract to acquire all her needed \textit{token~A} is the \emph{perfect parallelism strategy}:
\begin{restatable}{defi}{perfectparallelismstrategy}
	The perfect parallelism strategy of the \textit{BA} trader is
	\begin{equation*}
		\hat\tau^{BA}(\poolcontentvector{}{}{}, \uac{}{}{\textit{BA}}, a^{\textit{BA}}) = \begin{cases}
			\frac{1}{n}, & \text{if } a^{\textit{BA}} \in \SingleAction(\uac{}{}{\textit{BA}})  \\
			0, & \text{Otherwise.}
		  \end{cases} .
	\end{equation*}
\end{restatable}

Our goal is thus to have the perfect parallelism strategy be a dominant strategy:

\begin{restatable}{prop}{propertyperfectparallelism}
	\label{property_perfect_parallelism}
	The perfect parallelism strategy of the \textit{BA} trader is a dominant strategy:
	\begin{equation*}
		\forall \pi^{BA}:
		U^{\textit{BA}}(\pi^{BA}, \poolcontentvector{}{}{}, \uac{}{}{\textit{BA}}) 
		\leq 
		U^{\textit{BA}}(\hat\tau^{BA}, \poolcontentvector{}{}{}, \uac{}{}{\textit{BA}})
		\,\, .
	\end{equation*}
\end{restatable}

Using multiple CPMMs does not satisfy the perfect parallelism property and would be counterproductive: Each trader would split her transactions among all AMM contracts.
Thus, although the total number of trades increases due to parallelism, the satisfied trade demand is not higher than a single AMM contract and possibly lower since the total throughput might only increase sublinearly in the number of AMM contracts. 
\S\ref{app:proof_uniswap_splitting} provides details of this analysis.

			\subsection{SAMM Equilibrium}\label{sec:GT:equilibrium}
We analyze the SPNE of the above game.
We only provide a roadmap here, from trader strategy to liquidity provider strategy, and defer the full analysis to Appendix~\ref{app:gtAnalysis}.

\subsubsection{Trader Strategy}
Consider the case that the system state is $\poolcontentvector{}{}{}= \left(
	\left(\poolcontent{1}{}{A},\poolcontent{1}{}{B}, \poolcontent{1}{}{S} \right),
	\cdots \left(\poolcontent{n}{}{A},\poolcontent{n}{}{B}, 
	\poolcontent{n}{}{S}	 \right) \right)$.
As discussed in Section~\ref{sec:SAMM_parameter}, the SAMM gross amount satisfies the $c$-non-splitting property and $c$-smaller-better property for a certain $0 < c < 1$.
We assume that the required amount of \textit{token~A}, $\uac{}{}{\textit{BA}}$, is at most a fraction~$c$ of the amount of deposited \textit{token~A} in all shards, i.e.,~${\forall 1 \leq i \leq n, \uac{}{}{\textit{BA}} \leq c\poolcontent{i}{}{A} \,\,}.$

The $c$-non-splitting property and $c$-smaller-better property give a trader the incentive to randomly select one of the smallest shards to trade all her required tokens.
Recall that $a_i^{\text{BA}}(\uac{}{}{\textit{BA}})$ is the action of acquiring all $\uac{}{}{\textit{BA}}$ \textit{token~A} in $\textit{shard}_i$ (Equation~\ref{eq:single_action}).
We define the set of actions that trade in one of the smallest shards:
\begin{restatable}{defi}{smallestpoolactionset}
	The \emph{Smallest Shard Action Set} is the set of actions that acquire all $\uac{}{}{\textit{BA}}$ \textit{token~A} in one of the smallest shards under state $\poolcontentvector{}{}{}$: $\SmallestAction(\uac{}{}{\textit{BA}}, \poolcontentvector{}{}{}) = \left\{
		a_i^{\textit{BA}}(\uac{}{}{\textit{BA}})| \forall j, \poolcontent{i}{}{A} \leq \poolcontent{j}{}{A}
		\right\}.$

\end{restatable}
The cardinality of $\SmallestAction(\uac{}{}{\textit{BA}}, \poolcontentvector{}{}{})$ is the number of smallest shards in $\poolcontentvector{}{}{}$.
We denote this by $\SmallestNum(\poolcontentvector{}{}{}) = \left|\SmallestAction(\uac{}{}{\textit{BA}}, \poolcontentvector{}{}{}) \right|$.

We prove that randomly selecting a shard to trade without splitting is a dominant strategy for the trader, and it is strictly better than splitting or trading in a larger shard:
\begin{restatable}{thm}{theoremstrictsingle}\label{theorem:strict_single}
	In $\Gamma_n(tf_{\textit{SAMM}})$, considering the following dominant strategy of the \textit{BA} trader which randomly selects one of the smallest shards to acquire all required tokens:
	\begin{align*}
		\tau^{BA}(\poolcontentvector{}{}{}, \uac{}{}{\textit{BA}}, a^{\textit{BA}}) = \begin{cases}
			\frac{1}{n_{\min}(\poolcontentvector{}{}{})}, & \text{if } a^{\textit{BA}} \in \SmallestAction(\uac{}{}{\textit{BA}}, \poolcontentvector{}{}{})  \\
			0, & \text{Otherwise.}
		  \end{cases},
	\end{align*}
	then for all strategies $\pi^{BA}$ that have a positive probability of actions not trading in one of the smallest shards, i.e., ${\exists a^{\textit{BA}} = \left(\uac{1}{}{\textit{BA}}, \cdots, \uac{i}{}{\textit{BA}}, \cdots, \uac{n}{}{\textit{BA}} \right)\notin \SmallestAction(\uac{}{}{\textit{BA}}, \poolcontentvector{}{}{})}$, ${\pi^{BA}(\poolcontentvector{}{}{}, \uac{}{}{\textit{BA}}, a^{\textit{BA}}) > 0}$, the utility of the \textit{BA} trader is strictly lower than with strategy $\tau^{BA}$:
	\begin{align*}
		U^{\textit{BA}}(\tau^{BA}, \poolcontentvector{}{}{}, \uac{}{}{\textit{BA}}) > U^{\textit{BA}}(\pi^{BA}, \poolcontentvector{}{}{}, \uac{}{}{\textit{BA}}) \,\, .
	\end{align*}
\end{restatable}

\begin{proof}[Proof Sketch]
	
	Due to the $c$-non-splitting property, trading in a single shard is strictly better than trading in multiple shards.
	Then the revenue of trading in one of the smallest shards is strictly better than that in any other shard due to the $c$-smaller-better property.
\end{proof}

If shards are balanced, which means all shards have the same amount of deposited tokens, we have $n_{\min}(\poolcontentvector{}{}{}) = n$.
Then it is a dominant strategy for the trader to randomly select one of the $n$ shards to trade, as we intended:
\newcommand{\contenttheoremperfectparallelism}{In $\Gamma_n(tf_{\textit{SAMM}})$, the system state is $\poolcontentvector{}{}{} = \left(
	\left(\poolcontent{1}{}{A},\poolcontent{1}{}{B}, \poolcontent{1}{}{S} \right),
	\cdots \left(\poolcontent{n}{}{A},\poolcontent{n}{}{B}, \poolcontent{n}{}{S}	 \right) \right)$.
If $\forall i, j, \poolcontent{i}{}{A} = \poolcontent{j}{}{A}$ and $\poolcontent{i}{}{B} = \poolcontent{j}{}{B}$, then the perfect parallelism strategy
\begin{equation*}
	\hat\tau^{BA}(\poolcontentvector{}{}{}, \uac{}{}{\textit{BA}}, a^{\textit{BA}}) = \begin{cases}
		\frac{1}{n}, & \text{if } a^{\textit{BA}} \in \SingleAction(\uac{}{}{\textit{BA}})  \\
		0, & \text{Otherwise.}
	  \end{cases}
\end{equation*}
is a dominant strategy for the \textit{BA} trader.}
\begin{restatable}{cor}{theoremperfectparallelism}\label{theorem:perfect_parallelism}
	\contenttheoremperfectparallelism
\end{restatable}

\subsubsection{Liquidity Provider Strategy}

Having shown that if the shards are balanced traders behave as intended, we consider the strategies of liquidity providers in equilibrium.
They should maintain the shard balance. 
Moreover, their incentives should rebalance the system state even in the face of attacks that break this balance.

We want a liquidity provider to fill up smaller shards to keep shards balanced.
We call such an action the \emph{fillup action}, where if the liquidity provider adds tokens to a shard, then the shard is the smallest shard after this action.
We denote the fillup action by~$a_{lp}^{\textit{fill}}(\poolcontentvector{}{}{}, \lpac{}{}{A}, \lpac{}{}{B})$.
Note that there is only one such action.

\newcommand{\deffillupaction}{The fillup action of a liquidity provider $a_{lp}^{\textit{fill}}(\poolcontentvector{}{}{}, \lpac{}{}{A}, \lpac{}{}{B}) = \left(\left(\fillupac{1}{}{A}, \fillupac{1}{}{B}\right), \cdots, \left(\fillupac{n}{}{A}, \fillupac{n}{}{B}\right)\right)$ is the action where if the liquidity provider adds tokens to a shard, then the shard is one of the smallest shards after this action:

\begin{align*}
	\forall 1 \leq i \leq n:\;\;& \fillupac{i}{}{A} \geq 0,
	\sum_{i=1}^n\fillupac{i}{}{A} = \lpac{}{}{A}\,\, , \nonumber \\
	&\forall \fillupac{i}{}{A}>0, \forall j: \fillupac{i}{}{A} + \poolcontent{i}{}{A} \leq \fillupac{j}{}{A} + \poolcontent{j}{}{A} \,\, .
\end{align*}
}
\begin{restatable}{defi}{deffillup_action}\label{def:fillup_action}
	\deffillupaction
\end{restatable}

The \emph{fillup strategy} only uses the fillup action: 
\begin{restatable}{defi}{fillupstrategy}
	The fillup strategy of a liquidity provider $\tau^{\textit{fill}}_{lp}(\poolcontentvector{}{}{}, \lpac{}{}{A}, \lpac{}{}{B})$ is the strategy that only takes the fillup action: 
	\begin{equation*}
		\tau^{\textit{fill}}_{lp}( \poolcontentvector{}{}{}, \lpac{}{}{A}, \lpac{}{}{B}, a_{lp}) = \begin{cases}
			1, & \text{if } a_{lp} = \hat{a}_{lp}  \\
			0, & \text{Otherwise}
		  \end{cases} \,\, . 
	\end{equation*}
\end{restatable}

\begin{figure}[t]
	\centering
	\includegraphics[width=0.45\textwidth]{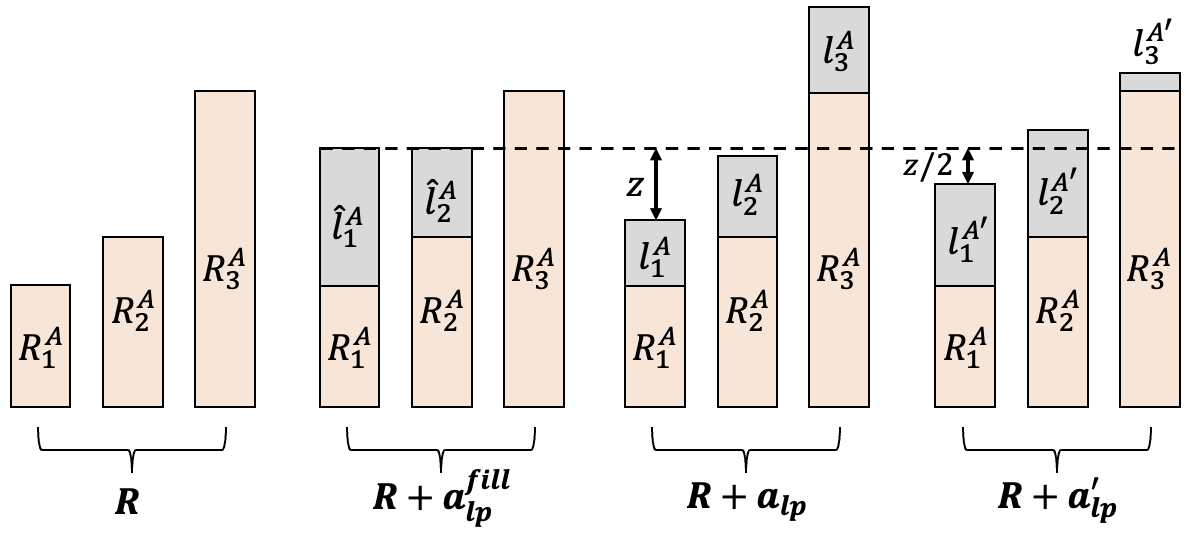}
	\caption{An example construction of $a_{lp}'$.}
	\label{fig:action}
\end{figure}

We first prove that, when all shards have identical sizes, the fillup action is to add tokens to all shards evenly, which is the best response of the liquidity provider:
\newcommand{\contenttheoremperfectparallelismandlp}{Denote by $\hat{a}_{lp} = \left(\left(\frac{1}{n}\lpac{}{}{A},  \frac{1}{n}\lpac{}{}{B}\right),\cdots\right)$ the action of evenly depositing tokens in all shards.
	In $\Gamma_n(tf_{\textit{SAMM}})$, if for all $i$ and $j$ that the liquidity amounts are the same, $\poolcontent{i}{}{A} = \poolcontent{j}{}{A}$ and $\poolcontent{i}{}{B} = \poolcontent{j}{}{B}$, the liquidity provider strategy which only takes action $\hat{a}_{lp}$,
	\begin{equation*}
		\tau_{lp}( \poolcontentvector{}{}{}, \lpac{}{}{A}, \lpac{}{}{B}, a_{lp}) = \begin{cases}
			1, & \text{if } a_{lp} = \hat{a}_{lp}  \\
			0, & \text{Otherwise.}\nonumber 
		  \end{cases}\nonumber \\
	\end{equation*}
	and any best response of the trader constitutes an SPNE.}

\begin{restatable}{thm}{theoremperfectparallelismandlp}\label{theorem_perfect_parallelism_and_lp}
	\contenttheoremperfectparallelismandlp
\end{restatable}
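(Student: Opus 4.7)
The plan is to verify the SPNE by checking only the liquidity provider's condition---the trader's condition is immediate from the theorem's ``any best response'' quantifier, since by Corollary~\ref{lemma:optimal_trader_strategy_set} every best response places probability only on actions that trade in a single smallest post-deposit pool. In the balanced starting state, the action $\hat{a}_{lp}$ coincides with the fillup action of Definition~\ref{def:fillup_action}: all pools already have the same \textit{token~A} balance, so equalizing post-deposit sizes forces an even split of $\lpac{}{}{A}$, and the assumption $\lpac{}{}{A}/\lpac{}{}{B} = p^B/p^A$ guarantees the ratio constraint defining $\mathcal{A}_{lp}(\lpac{}{}{A},\lpac{}{}{B})$ is respected. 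After $\hat{a}_{lp}$, every pool has common size $\poolcontent{}{}{A}+\lpac{}{}{A}/n$, so by Corollary~\ref{theorem:perfect_parallelism} any trader best response trades all $\uac{}{}{BA}$ in a single pool with some probability distribution over pools, and LP's share $\lpac{}{}{A}/n / (\poolcontent{}{}{A}+\lpac{}{}{A}/n)$ and the trading fee $\textit{tf}_{\textit{SAMM}}(\poolcontent{}{}{A}+\lpac{}{}{A}/n, \poolcontent{}{}{B}+\lpac{}{}{B}/n, \uac{}{}{BA})$ are identical across pools, so the LP expected utility under $\hat{a}_{lp}$ is well-defined independent of which best response the trader chooses.

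To show $\hat{a}_{lp}$ is the LP's best response, I would take any $a_{lp}' \neq \hat{a}_{lp}$ in $\mathcal{A}_{lp}(\lpac{}{}{A},\lpac{}{}{B})$ and set $m = \min_i \lpac{i}{}{A}$. Lemma~\ref{lemma:unique_fillup} gives $\rho^A(\poolcontentvector{}{}{}+a_{lp}') < \rho^A(\poolcontentvector{}{}{}+\hat{a}_{lp})$, which in the balanced starting state reduces to $m < \lpac{}{}{A}/n$. Because all $\poolcontent{i}{}{A}$ were equal, every post-deposit smallest pool has size exactly $\poolcontent{}{}{A}+m$ with LP contribution exactly $m$, so Corollary~\ref{lemma:optimal_trader_strategy_set} forces the trader's best response to trade only in those pools. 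Plugging into Equations~\ref{TFBA} and~\ref{Ulp_action_strategy_simplified}, LP's expected revenue under $a_{lp}'$ collapses (symmetrically across the smallest pools) to a single term of the form $p^B \cdot E_{\uac{}{}{BA}}[\textit{tf}_{\textit{SAMM}}(\poolcontent{}{}{A}+m, (p^A/p^B)(\poolcontent{}{}{A}+m), \uac{}{}{BA}) \cdot m / (\poolcontent{}{}{A}+m)]$, while the expression for $\hat{a}_{lp}$ is identical with $m$ replaced by $\lpac{}{}{A}/n$. Since $x \mapsto x/(\poolcontent{}{}{A}+x)$ is strictly increasing in $x \geq 0$, and by Lemma~\ref{lemma:trading_fee} the trading fee is strictly increasing in pool size (with the price ratio pinned to $p^A/p^B$ by arbitrage and by the fact that the LP's ratio-preserving action keeps $\poolcontent{}{}{B}/\poolcontent{}{}{A}$ fixed), both multiplicative factors are strictly larger at $\lpac{}{}{A}/n$ than at $m$. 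Therefore $U_{lp}(\hat{a}_{lp}) > U_{lp}(a_{lp}')$ for every competing action, confirming that $\tau_{lp}$ is a best response against $\tau^{BA}, \tau^{AB}$.

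The main obstacle I anticipate is the corner case $m=0$, where the LP deposits nothing into at least one pool: that pool is then strictly smallest post-deposit and the trader's best response is forced onto a zero-share pool, so the ``both factors strictly smaller'' argument degenerates to one factor being zero. This is actually easier to handle, since $U_{lp}(a_{lp}')=0 < U_{lp}(\hat{a}_{lp})$ directly, but the write-up needs a small case split. A secondary subtlety is that Lemma~\ref{lemma:trading_fee} is phrased for pools already satisfying the arbitrage-driven price ratio; verifying this for the post-deposit states $\poolcontentvector{}{}{}+\hat{a}_{lp}$ and $\poolcontentvector{}{}{}+a_{lp}'$ is immediate because $\hat{a}_{lp}$ and every feasible $a_{lp}'$ preserve the ratio $\poolcontent{i}{}{B}/\poolcontent{i}{}{A} = p^A/p^B$ by the action-space constraint. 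Symmetry between \textit{BA} and \textit{AB} traders (as noted in Section~\ref{GT:game_model}) lets the same argument cover $\tau^{AB}$ without repeating it.
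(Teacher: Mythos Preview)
Your proposal is correct and takes essentially the same approach as the paper: reduce the LP's best-response check to comparing the post-deposit smallest-pool size, then use Lemma~\ref{lemma:trading_fee} for the trading-fee factor and the monotonicity of $x \mapsto x/(\poolcontent{}{}{A}+x)$ for the share factor. The paper argues only the non-strict inequality $U_{lp}(\hat a_{lp}) \ge U_{lp}(a_{lp}')$ via $l^A_{\min} \le \lpac{}{}{A}/n$ from the sum constraint, whereas you invoke Lemma~\ref{lemma:unique_fillup} to get $m < \lpac{}{}{A}/n$ and hence strict inequality (with the $m=0$ corner case handled separately); this is slightly stronger than needed but harmless.
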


\begin{proof}[Proof Sketch]
	Traders prefer trading in smaller shards to reduce their costs.
	At the same time, fees are higher in larger shards.
	So the liquidity provider should increase her share in the smallest shards. 
	This dual objective is optimally achieved by uniformly distributing tokens across all shards.
\end{proof}

The above theorem indicates that once the system reaches a balanced state, this state is stable.
We now show that even if the system reaches an unbalanced state, maybe due to an attacker, it converges to a balanced state since the liquidity provider uses the fillup strategy.

We show that the fillup strategy is the only best response in all SPNE:

\newcommand{\contenttheoremalwaysfillup}{In $\Gamma_n(\textit{tf}_{\textit{SAMM}})$, in all SPNE, the liquidity provider's best response is the fillup strategy:
	\begin{equation*}
		\tau^{\textit{fill}}_{lp}( \poolcontentvector{}{}{}, \lpac{}{}{A}, \lpac{}{}{B}, a_{lp}) = \begin{cases}
			1, & \text{if } a_{lp} = a_{lp}^{\textit{fill}}(\poolcontentvector{}{}{}, \lpac{}{}{A}, \lpac{}{}{B})  \\
			0, & \text{Otherwise.}
		  \end{cases}
		\,\,.
	\end{equation*}}

\begin{restatable}{thm}{theoremalwaysfillup}\label{theorem:always_fillup}
	\contenttheoremalwaysfillup
\end{restatable}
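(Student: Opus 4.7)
The plan is to prove the theorem by contradiction. Assume some SPNE has LP's strategy place positive probability on an action $a_{lp} \neq a_{lp}^{\textit{fill}}(\poolcontentvector{}{}{}, \lpac{}{}{A}, \lpac{}{}{B})$. I will then construct a strictly profitable deviation to $a_{lp}^{\textit{fill}}$, contradicting that LP is playing a best response.

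I first invoke Corollary~\ref{lemma:optimal_trader_strategy_set}: in any SPNE, the trader's best response to LP's action places positive probability only on trades in the smallest pools of the resulting state $\poolcontentvector{}{}{} + a_{lp}$. Hence LP's revenue from $a_{lp}$ comes only from her deposits into pools that end up smallest. From this I extract two structural properties that any utility-maximizing LP action must satisfy: (i) all $\lpac{}{}{A}$ tokens are deposited, and (ii) all deposits go into pools that end up smallest. For (i), any held-back tokens can be added to an already-smallest pool, which by Lemma~\ref{lemma:trading_fee} strictly increases the per-trade fee and also strictly increases LP's share fraction $\lpac{i}{}{A}/(\lpac{i}{}{A} + \poolcontent{i}{}{A})$ there. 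For (ii), tokens in non-smallest pools earn no revenue by the corollary. Together with Lemma~\ref{lemma:unique_fillup}, which identifies $a_{lp}^{\textit{fill}}$ as the unique action achieving the maximum possible smallest reserve $\rho^A$, properties (i) and (ii) pin down the utility-maximizing action as $a_{lp}^{\textit{fill}}$. To turn this into a strict comparison for any non-fillup $a_{lp}$, Lemma~\ref{lemma:unique_fillup} gives $\rho^A(\poolcontentvector{}{}{} + a_{lp}) < \rho^A(\poolcontentvector{}{}{} + a_{lp}^{\textit{fill}})$, and Lemma~\ref{lemma:trading_fee} then yields a strictly larger per-trade fee under fillup.

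The main obstacle I anticipate is the interaction between LP's deposits across tied smallest pools and the trader's non-unique best-response distribution over them. Naively, comparing two LP actions reduces to comparing sums of the form $\sum_i w_i \cdot \textit{tf}(\cdot) \cdot \lpac{i}{}{A}/(\lpac{i}{}{A} + \poolcontent{i}{}{A})$, with weights $w_i$ depending on the specific trader best response fixed in the SPNE. I plan to handle this by summing LP's total contribution over smallest pools---exactly $\lpac{}{}{A}$ under fillup but strictly less under any non-fillup action that violates (i) or (ii)---and by exploiting that fillup equalizes reserves across all smallest pools, so the trader's specific tie-breaking becomes inconsequential for the share-weighted fee sum and the strict inequality survives regardless of which trader best response is in force.
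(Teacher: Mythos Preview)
Your plan correctly invokes the key supporting results (Corollary~\ref{lemma:optimal_trader_strategy_set}, Lemma~\ref{lemma:unique_fillup}, Lemma~\ref{lemma:trading_fee}), but the specific deviation you propose---directly to $a_{lp}^{\textit{fill}}$---does not yield a strict improvement in general, and your claim that ``the trader's specific tie-breaking becomes inconsequential'' under fillup is precisely where the argument breaks. After fillup, all smallest pools share the same reserve $\rho^*$, so the per-trade fee is indeed equalized across them; but the LP's share $\hat{l}_i^A/\rho^*$ in each such pool still varies, because the initial reserves $R_i^A$ differ. The SPNE may fix a trader best response that, at the post-fillup state, puts all weight on a smallest pool where the LP's deposit is minimal (even zero). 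Concretely, take two pools with initial reserves $100$ and $90$ and $l^A=10$: fillup deposits everything into pool~2, both pools end at $100$, and a trader best response that always picks pool~1 at that state gives the LP zero revenue. Meanwhile the non-fillup action depositing $5$ in each pool (reserves $105,95$) forces the trader into the now-unique smallest pool~2 and gives the LP strictly positive revenue. So deviating from this $a_{lp}$ to fillup is strictly \emph{unprofitable}, and your contradiction collapses.

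The paper sidesteps this by deviating not to $a_{lp}^{\textit{fill}}$ but to an intermediate action $a_{lp}'$: starting from the fillup allocation, it removes $z/2$ from the deposit into pool $i^*$ (the smallest-index pool among those with minimal initial reserve) and spreads that amount over the remaining pools, where $z=\rho^A(\poolcontentvector{}{}{}+a_{lp}^{\textit{fill}})-\rho^A(\poolcontentvector{}{}{}+a_{lp})>0$ by Lemma~\ref{lemma:unique_fillup}. This makes pool $i^*$ the \emph{unique} smallest pool after the action, so every trader best response must trade there, eliminating tie-breaking entirely. Because $i^*$ has the least initial reserve, the LP's share in pool $i^*$ under $a_{lp}'$ strictly exceeds her share in whichever smallest pool the trader selects under $a_{lp}$; and because the new minimum reserve strictly exceeds that under $a_{lp}$, Lemma~\ref{lemma:trading_fee} gives a strictly larger fee. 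Both factors combine for a strict improvement against whatever trader best response the SPNE fixes.
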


\begin{proof}[Proof Sketch]
	Given any action $a_{lp}$ that is not the fillup action, we can construct a new action $a_{lp}'$ that is strictly better than~$a_{lp}$.
	By ensuring the smallest shard in $\poolcontentvector{}{}{} + a_{lp}'$ is larger than that in $\poolcontentvector{}{}{} + a_{lp}$, we increase trading fees garnered from each transaction. 
	Moreover, this smallest shard is also the smallest before the action, maximizing the liquidity provider's share. 
	Consequently, the liquidity provider earns higher revenue under $a_{lp}'$ than under $a_{lp}$. 
	Thus, any strategy incorporating an action other than the fillup action is not optimal. Figure~\ref{fig:action} illustrates an example of the~$a_{lp}'$ construction. 
\end{proof}

Theorem~\ref{theorem:always_fillup} does not prove the existence of an SPNE.
We prove one exists by constructing an SPNE for a general (perhaps unbalanced) starting state~(\S\ref{app:gtAnalysis}): Liquidity providers follow the fillup strategy, and traders use the smallest shard with the largest share of the previous liquidity provider.

In summary, we showed that the system achieves a stable state with perfect parallelism.
Moreover, following events that result in heterogeneous shard sizes, liquidity providers rebalance the shards as soon as they have introduced sufficient liquidity, showing robustness against attacks.

    \section{Evaluation}\label{sec:evaluation}

To evaluate the performance we use the state-of-the-art blockchains Sui and Solana (\S\ref{sec:evaluation_setup}).
We find that the throughput of a single-contract AMM is limited (\S\ref{sec:evaluation_single}), and that it improves with the number of SAMM shards (\S\ref{sec:evaluation_samm}).
Our analysis shows that further improvement is possible by increasing the platform's parallelism~(\S\ref{sec:evaluation_theory}).

		\subsection{Experimental Setup}\label{sec:evaluation_setup}

We conduct experiments on the Sui~\cite{blackshear2023sui} and Solana~\cite{yakovenko2018solana} blockchains, which support parallel execution.
We first introduce the setup of the two blockchains separately and then describe the setup of the performance tests.

\subsubsection{Sui Setup}
Smart contracts in Sui are independent \emph{objects}, and Sui executes transactions on different objects in parallel.
We implement SAMM\footnote{\url{https://github.com/MountainGold/SAMM-Sui-Evaluation}} in the Move language~\cite{blackshear2019move}.
We deploy a local testnet, which follows the default configuration, consisting of~4 validators maintaining the consensus of the blockchain.
The latency of transactions is always higher than~1 second due to Sui's consensus protocol.
We issue transactions using Sui's Rust RPC interface. 

As a baseline, we test the latency of simple token transfers. 
As expected, unencumbered by smart-contract coordination constraints, the latency is consistently smaller than 200~msec (Figure~\ref{fig:multiple_latency}) in Sui at $2360\textit{tps}$ (outside the figure range).
This throughput is approximately twice the maximum rate observed in our AMM experiments on Sui. The latency remains under one second because token transfers in Sui do not require immediate consensus among validators for confirmation.

\subsubsection{Solana Setup}
Each smart contract in Solana has an associated account, allowing transactions using different accounts to be executed in parallel.
We implement SAMM\footnote{\url{https://github.com/MountainGold/spl-samm/tree/main/token-swap}} in Rust and deploy it on a Solana testnet with one validator.
The average latency of transactions is around $0.3$ seconds with low demand. 
We issue transactions using Solana's JavaScript RPC interface. 

Solana's performance is artificially limited~\cite{solanagas2024}. 
Specifically, Solana's mainnet and testnet limit the gas used by a single account within a block, and the total gas used in a block to four times the single-account gas limit.
We conducted experiments using these standard settings, marked \emph{Solana-Default}. 
In addition, to evaluate the architecture's limits, we conducted experiments where we compiled Solana with the gas limit for the whole block removed, marked \emph{Solana-NoBlockLimit}.
We note that removing the gas limit for a single account results in unstable performance, with no useful results.

As with Sui, we tested the throughput of simple token transfers on Solana as a baseline, reaching a throughput above $2500\textit{tps}$ with latency under one second (\S\ref{app:solana}).
This is higher than the maximum observed in our AMM experiments.

			\subsubsection{Performance Test Setup}

For all reported results, we use a machine with~2TB of memory and~256 CPU cores.
We run~50 trader processes for Solana and~100 for Sui.
Several experiments deploying the testnet on one machine and sending transactions from another produce similar results. 
Several experiments with more traders did not affect the results. 
This shows that bottlenecks are not due to workload generation. 

Each trader sends transactions at random intervals following an exponential distribution with an expected frequency of~$\lambda$.
The traders send each transaction and wait for the transaction to be confirmed.
They can send another transaction before the previous one is confirmed.
Note this experiment is only for performance evaluation, so traders follow the perfect parallelism strategy. 
We vary the overall frequency of transactions by setting different values of the individual $\lambda$ values.
In each test, we set a target throughput. 
We first warm up the system by sending transactions for $500$ seconds and then measure actual frequencies and latencies for the following $100$ seconds. 
If the latency is stable within the~100 measurement seconds, we report the mean value.

		\subsection{Single-Contract Bottleneck}
		\label{sec:evaluation_single}

To demonstrate the bottleneck of a single AMM, we first deploy a standard CPMM.
We use the OmniSwap contract~\cite{OmniSwap} in Sui and the token-swap contract from the Solana Program Library~\cite{spl2024}, which are generalizations of Uniswap~v2.

For the Sui experiment, Figure~\ref{fig:multiple_latency} (${n=1}$) shows latencies of transaction processing (Y~axis) in workloads with varying transaction frequencies (X~axis) using a single OmniSwap contract.
We test each frequency~5 times and calculate the truncated average, excluding the two extreme values.
Error bars show all measured values.
The average latency increases gradually with the transaction frequency up to~$214\textit{tps}$ before crossing the 3-second line.
With higher frequencies, transactions frequently fail.

Our Solana experiments produce similar results~(\S\ref{app:solana}).
The throughput bottleneck of a single CPMM contract in both Solana-Default and Solana-NoBlockLimit is $129\textit{tps}$.

			\subsection{SAMM Evaluation}
			\label{sec:evaluation_samm}

We implement SAMM by modifying the trading fee mechanism of CPMM contracts and deploying a varying number of shards (contracts).
When a trader sends a transaction, she randomly selects a SAMM contract.

\subsubsection{Sui Evaluation}

Figure~\ref{fig:multiple_latency} shows the average latency for varying demand (\textit{tps}) on different numbers of SAMM contracts.
The latency in the case of one SAMM contract is 
indistinguishable from a single CPMM contract.
In some instances, more than half of the transactions failed, which is marked with~$\times$ in the graph.
As the number of SAMM contracts increases, the system can process higher demand.

To quantify SAMM's performance enhancements in Sui, we evaluate the throughput with a varying number of shards. 
We set a latency cap of $3$ seconds.
As depicted in Figure~\ref{fig:multiple_latency}, the latency rises rapidly once it surpasses $2$ seconds.
Therefore, choosing other latency caps beyond $2$ seconds does not significantly affect the results.
The throughput with~$n$ shards is thus the highest frequency that produced a latency lower than 3 seconds.
Figure~\ref{fig:multiple_tps} shows the throughput increases almost linearly in the beginning and then converges to a bound.
With $32$ shards, the maximal throughput exceeds $1185\textit{tps}$, more than five times the throughput of a single OmniSwap contract. 

			\subsubsection{Solana Evaluation}

The results of the Solana experiments show a similar latency behavior (\S\ref{app:solana}).
With~${n \leq 4}$ contracts, the latencies in Solana-Default and Solana-NoBlockLimit are indistinguishable. 
To analyze the system's capacity, we again set a latency cap of $3$ seconds to determine the throughput.
Figure~\ref{fig:multiple_tps_solana} shows that Solana-NoBlockLimit achieves a linear throughput increase up to~16 shards, while Solana-Default scales linearly up to~4 shards, reflecting the unencumbered architecture's capacity and gas limit, respectively.
Beyond 4 shards, Solana-Default's throughput remains flat even up to 32 shards (beyond the range), whereas Solana-NoBlockLimit's throughput declines after 16 shards, with 17 shards failing to reach the maximum \textit{tps} of 16 shards, marked with $\times$ in the figure.

\begin{figure}[t]
	\centering
	\begin{subfigure}{0.46\textwidth}
		\centering
		\includegraphics[width=\linewidth]{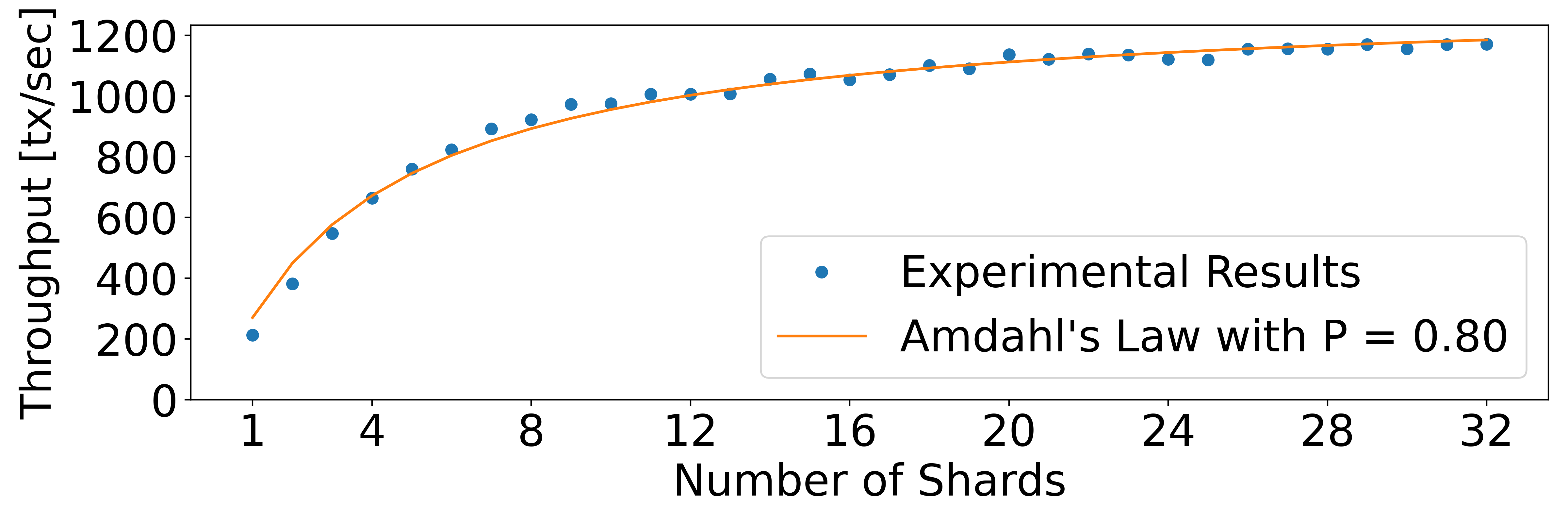}
		\caption{SAMM's experimental throughput in Sui.}
		\label{fig:multiple_tps}
	\end{subfigure}
	\hfill
	\begin{subfigure}{0.46\textwidth}
		\centering
		\includegraphics[width=\linewidth]{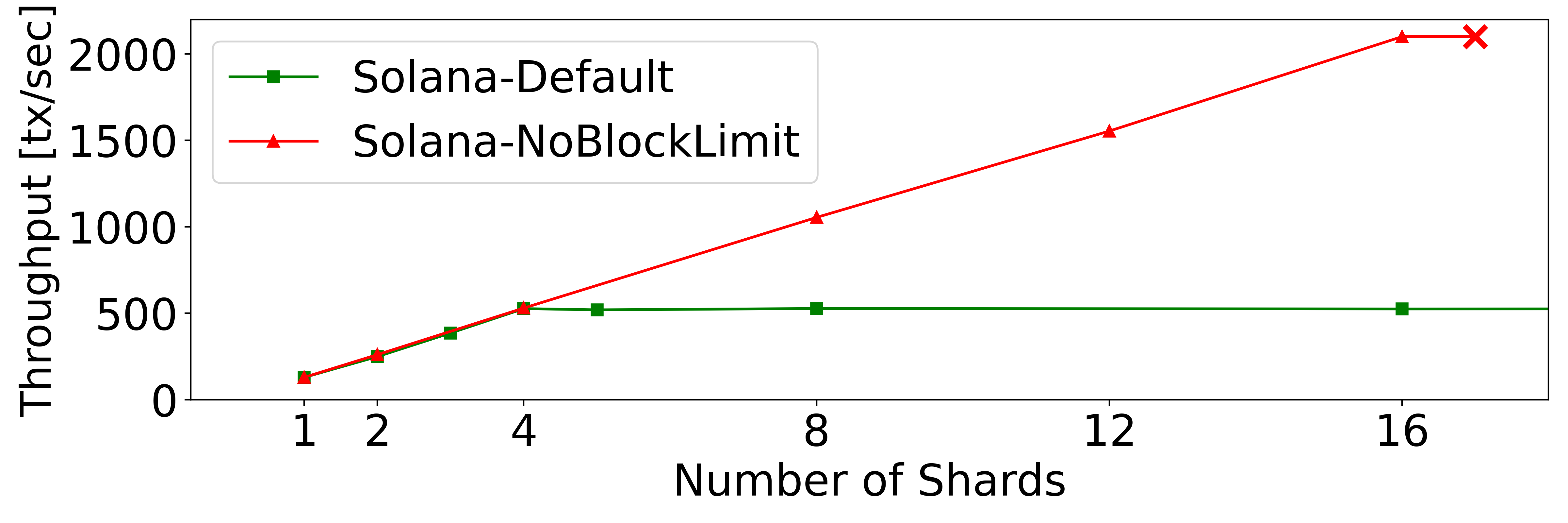}
		\caption{SAMM's experimental throughput in Solana.}
		\label{fig:multiple_tps_solana}
	\end{subfigure}
	
	\caption{
		Maximal throughput as a function of the number of SAMM shards in Sui and Solana.
	}
	\label{fig:samm_tps}
\end{figure}

			\subsection{Parallelization in the Underlying Platform}\label{sec:evaluation_theory}

When there are $n$ concurrently operating AMM shards, each with a maximum throughput of $T_{\text{max}}$, the total maximum throughput,~$T_{\text{total}}(n)$, is influenced by the fraction $P$ of the transaction that can be parallelized, according to Amdahl's Law. 
This law states that the speedup ratio, $S(n)$, is the total throughput relative to a single AMM's throughput, according to the expression $S(n) = \frac{1}{(1-P) + \frac{P}{n}}$. 
Consequently, the effective system throughput, $T_{\text{total}}(n)$, is calculated as $T_{\text{max}} \times S(n)$. For fully sequential systems like Ethereum, $P = 0$, resulting in no throughput gain.
For stability, Solana sets a gas limit that keeps throughput well below its serial bottleneck. 
Its throughput is thus linear in~$n$, up to the artificial limit (Figure~\ref{fig:multiple_tps_solana}). 

In Sui, we fit the experimental data with Amdahl's law and conclude the parallelizable part of a transaction is ${P = 0.80}$ (${R^{2}=0.99}$).
Since the serial components of transactions are invariant to the number of shards, throughput improvements are inherently limited. 
According to the fitted curve, the improvement is bounded by $1330\textit{tps}$.
We find that if transactions in Sui have a larger parallelizable portion, the throughput improvement due to SAMM is even better (\S\ref{app:evaluation_heavier}).

\section{Incentive-Compatibility Validation and Secondary Effects}\label{sec:simulation}

Our game-theoretic analysis makes non-trivial assumptions.
To validate SAMM's incentive compatibility, we conduct simulations using real trading data~(\S\ref{sec:simulation:setup}) and our measured performance gains. 
We confirm~(\S\ref{sec:simulation:distribution}) that incentive compatibility is maintained for the majority of trades, and that throughput balance is maintained without our theoretical assumption of perfect arbitrageurs.
We further analyze secondary effects of sharding~(\S\ref{sec:simulation:secondary}):
We observe that SAMM traders' per-trade cost is similar to a CPMM, but the larger throughput results in more liquidity provider revenue; it is not a zero-sum game. 
We also find that sharding does not incentivize sandwich attacks or increase the loss incurred by price fluctuations.

			\subsection{Simulation Setup}
			\label{sec:simulation:setup}

We simulate SAMM with workload from public CPMM trading data, namely the Uniswap~v2 Ethereum AMM for the pair of tokens USDC and ETH (hereinafter Uniswap), from block~\num{12000000} to~\num{19500000} (from 2021-03-08 to 2024-03-23, about 3 years). 
We conservatively choose the throughput of SAMM according to the performance evaluation of Sui since its throughput improves sublinearly with the number of shards.
We simulate Uniswap v2 and SAMM using~1 to~32 shards.
We test three different values of~$c$, namely~$0.003$, $0.005$, and~$0.01$, and choose parameters accordingly~(\S\ref{app:simulation_parameter}). 

In the simulation, we do not add synthetic arbitrage transactions.
There are some trades that have a larger than $c$ ratio of the shard size, especially with large shard numbers, when the shard sizes are small.
We still do not consider the gas fee in the simulation, which would only make the results better since it discourages trade splitting (See Section~\ref{GT:game_model}).

We run each simulation instance as follows.
We uniformly at random select a point in Uniswap v2's history (before~\num{16500000}, to ensure there are enough trades) as the starting point.
For SAMM with~$n$ shards, we evenly distribute the liquidity of the Uniswap at that point among the~$n$ shards to establish the initial state of SAMM. 
We then simulate the real trades from that time point onward. 
Each trade in the real data is simulated as a trading demand with the required amount of tokens matching the output amount and tokens of the actual trade. 
To minimize costs, the trader selects the shard offering the lowest price and may split the trade into several smaller trades if this reduces her costs.
We assume both the Uniswap and SAMM operate at maximal throughput measured in our performance evaluation. 
We count trade splits, the number of transactions in different SAMM shards, liquidity provider revenue, and trader costs over 1 second, following a 1-second warm-up period.
We repeat each simulation 100 times and calculate averages.
Note that a $1$-second simulation corresponds to several hours' trade in the real world.

\begin{figure}[t!]
	\centering
	\begin{subfigure}[t]{0.235\textwidth}
		\centering
		\includegraphics[width=\linewidth]{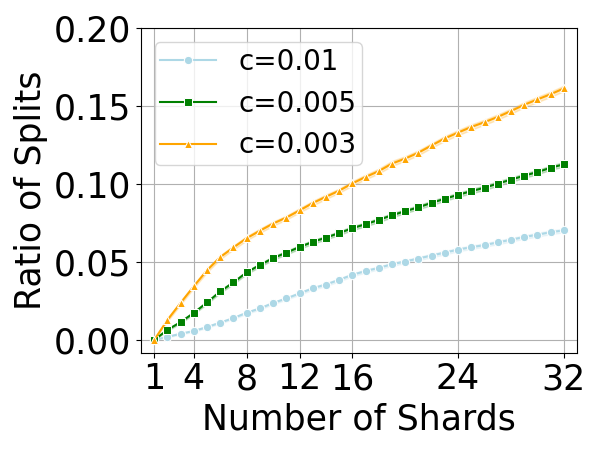}
		\caption{The ratio of transaction splits.}
		\label{fig:splits:ratio}
	\end{subfigure}
	\hfill
	\begin{subfigure}[t]{0.235\textwidth}
		\centering
		\includegraphics[width=\linewidth]{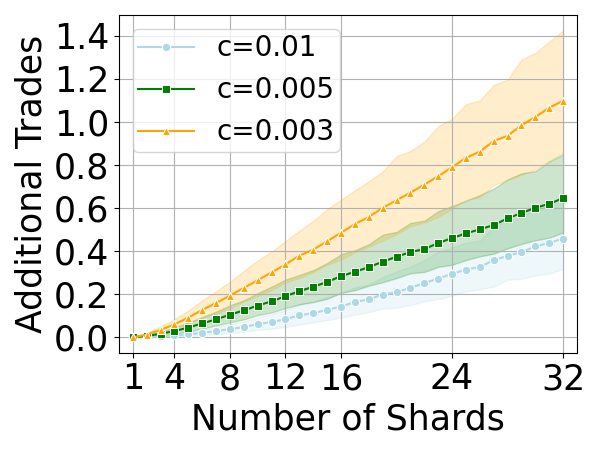}
		\caption{Additional trades due to splits.}
		\label{fig:splits:trades}
	\end{subfigure}
	\hfill
	\begin{subfigure}[t]{0.235\textwidth}
		\centering
		\includegraphics[width=\linewidth]{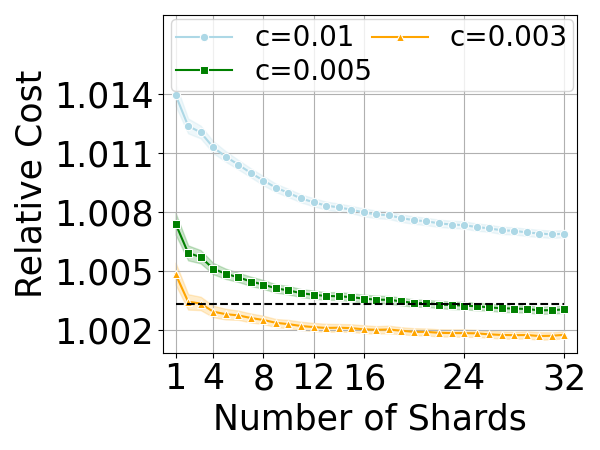}
		\caption{Traders' costs relative to the external market.}
		\label{fig:tradercost}
	\end{subfigure}
	\hfill
	\begin{subfigure}[t]{0.235\textwidth}
		\centering
		\includegraphics[width=\linewidth]{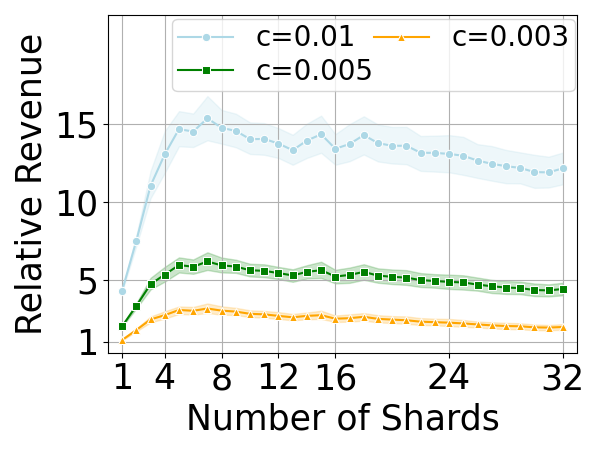}
		\caption{Relative liquidity providers' revenue to Uniswap.}
		\label{fig:lprevenue}
	\end{subfigure}
	
	\caption{
		Trade splits, trader costs, and liquidity provider revenue in SAMM.
	}
	\label{fig:samm_combined}
\end{figure}

			\subsection{Incentive-Compatibility Validation}
			\label{sec:simulation:distribution}

The simulation setting differs from the game-theoretic analysis in two key aspects.
First, it does not assume perfect arbitrage, as there are no synthetic arbitrage transactions.
Second, the trade amount of some transactions exceeds the $c$ fraction of a single shard's liquidity. 
This is particularly relevant in our simulation, where each shard holds only $1/n$ of the total liquidity, potentially resulting in trade splits.

We analyze the proportion of trades that are split into multiple transactions within SAMM. 
With $c=0.01$, trade splits are infrequent, occurring in less than $8\%$ of trades, even with 32 shards (Figure~\ref{fig:splits:ratio}). 
Furthermore, the total number of trades (the trader might split a trade to more than~2 parts) results in less than a~$45\%$ increase in overall trade volume~(Figure~\ref{fig:splits:trades}), a relatively minor increment given the five-fold throughput improvement.
When the number of shards is larger or $c$ is smaller, the proportion of trade splits is higher.
We also confirm that the distribution of trades in SAMM is balanced (\S\ref{app:simulation_distribution}).

Our simulation confirms our theoretical analysis, showing that SAMM is incentive compatible with real workloads.

\subsection{Secondary Effects of Sharding}\label{sec:simulation:secondary}

Sharding results in smaller shards, since liquidity is split.
This results in worse slippage for trades.
We analyze how this affects participant revenues and AMM vulnerabilities.

\paragraph{Traders' Costs and Liquidity Provider Revenue}
Our simulation also quantifies the economic impact on participants (\S\ref{app:revenue}). 
Figure~\ref{fig:tradercost} illustrates traders' costs: while increasing with the fee parameter~$c$, costs decrease as the number of shards increases, remaining comparable to Uniswap or even lower. 
This shows how higher throughput can offset increased slippage. 
Figure~\ref{fig:lprevenue} reveals a significant increase in total liquidity provider revenue; it climbs with the number of shards (reaching up to 15 times that of Uniswap) due to higher volumes, before eventually declining as individual fees per trade diminish in very small shards. 
These trends demonstrate a key advantage of SAMM: It increases the exchange volume, benefiting all participants, and is not a zero-sum game that simply reallocates costs.
Hence, SAMM attracts participants when competing with other AMMs.

We also observe that SAMM outperforms the Uniswap v2 on volume capacity, which measures the ability of handling large trades with small price impact (\S\ref{app:volume_capacity}).

The analysis of revenue, volume capacity and incentive-compatibility (\S\ref{sec:simulation:distribution}) imply a trade-off between participant incentives and system performance.
A larger $c$ results in fewer trade splits and higher revenue for liquidity providers, but the cost for traders increases.
Fixing $c$ and increasing the number of shards results in lower trader costs but more trade splits, which can negatively impact system performance.
Additionally, liquidity provider revenue peaks at a certain number of shards.
System designers can tune~$c$ and the number of shards to balance participant incentives and system performance.

\paragraph{Sandwich Attacks and Losses from Price Fluctuations}
A critical consideration is whether the increased slippage inherent in smaller shards exacerbates known AMM vulnerabilities, specifically sandwich attacks and loss due to price fluctuation. 
Our analysis demonstrates that, counter-intuitively, the profitability of sandwich attacks decreases as liquidity decreases, rendering smaller shards less sensitive (\S\ref{app:sandwich}). 
When the price of tokens in the external market fluctuates, arbitrageurs can arbitrage between the external market and the AMM, causing loss for the liquidity providers called loss-versus-rebalancing (LVR)~\cite{milionis2022automated, fritsch2024measuring, milionis2023automated}.
For CPMM-based AMMs like SAMM, with perfect arbitrageurs, the loss per unit of liquidity is independent of the size of the AMM~\cite[Example 3]{milionis2022automated}, meaning the loss ratio is unaffected by sharding.
Thus, SAMM enhances throughput without amplifying these economic risks.

    \section{Conclusion} \label{sec:conclusion}

We present SAMM, a scalable AMM, by employing smart-contract sharding.
The security of SAMM is based on the design of a trading fee mechanism that incentivizes parallel operations.
We analyze trader and liquidity provider behaviors as a game, showing that parallel operations are the best response, and validate by simulation with real trade traces.
We implement and deploy SAMM in local testnets of Sui and Solana, demonstrating more than 5x and 16x throughput improvement, up to the underlying system's limits. 
Our results indicate that reducing serial bottlenecks of independent contracts should be a focus of smart-contract platforms to allow for AMM scaling (See Appendix~\ref{app:evaluation_heavier}).
Meanwhile, SAMM can be directly deployed to scale AMMs on existing platforms, for direct use and as part of the DeFi ecosystem.

\section*{Acknowledgments}{This work was supported in part by IC3, the Sui Foundation, and the Avalanche Foundation.}

\bibliographystyle{plain}
\bibliography{reference}

\begin{thebibliography}{10}

\bibitem{abraham2023colordag}
Ittai Abraham, Danny Dolev, Ittay Eyal, and Joseph~Y Halpern.
\newblock Colordag: An incentive-compatible blockchain.
\newblock {\em arXiv preprint arXiv:2308.11379}, 2023.

\bibitem{adams2024layer}
Austin Adams.
\newblock Layer 2 be or layer not 2 be: Scaling on uniswap v3.
\newblock {\em arXiv preprint arXiv:2403.09494}, 2024.

\bibitem{uniswapx}
Hayden Adams.
\newblock Introducing the uniswapx protocol.
\newblock \url{https://blog.uniswap.org/uniswapx-protocol}, 2023.
\newblock Accessed: 2025-08-22.

\bibitem{adams2021uniswap}
Hayden Adams, Noah Zinsmeister, Moody Salem, River Keefer, and Dan Robinson.
\newblock Uniswap v3 core.
\newblock {\em Tech. rep., Uniswap, Tech. Rep.}, 2021.

\bibitem{adams2020uniswap}
Hayden Adams, Noah Zinsmeister, River Salem, and Dan Robinson.
\newblock Uniswap v2 core.
\newblock {\em Tech. rep., Uniswap, Tech. Rep.}, 2020.

\bibitem{angeris2020improved}
Guillermo Angeris and Tarun Chitra.
\newblock Improved price oracles: Constant function market makers.
\newblock In {\em Proceedings of the 2nd ACM Conference on Advances in Financial Technologies}, pages 80--91, 2020.

\bibitem{angeris2022optimal}
Guillermo Angeris, Alex Evans, Tarun Chitra, and Stephen Boyd.
\newblock Optimal routing for constant function market makers.
\newblock In {\em Proceedings of the 23rd ACM Conference on Economics and Computation}, pages 115--128, 2022.

\bibitem{asef2024x}
Mohammad~Ali Asef and Seyed Mojtaba~Hosseini Bamakan.
\newblock From x* y= k to uniswap hooks: A comparative review of decentralized exchanges (dex).
\newblock {\em arXiv preprint arXiv:2410.10162}, 2024.

\bibitem{bagaria2019prism}
Vivek Bagaria, Sreeram Kannan, David Tse, Giulia Fanti, and Pramod Viswanath.
\newblock Prism: Deconstructing the blockchain to approach physical limits.
\newblock In {\em Proceedings of the 2019 ACM SIGSAC Conference on Computer and Communications Security}, pages 585--602, 2019.

\bibitem{bartoletti2022maximizing}
Massimo Bartoletti, James Hsin-yu Chiang, and Alberto Lluch~Lafuente.
\newblock Maximizing extractable value from automated market makers.
\newblock In {\em International Conference on Financial Cryptography and Data Security}, pages 3--19. Springer, 2022.

\bibitem{polygon}
Mihailo Bjelic, Sandeep Nailwal, Amit Chaudhary, and Wenxuan Deng.
\newblock Pol: One token for all polygon chains.
\newblock \url{https://polygon.technology/papers/pol-whitepaper}, 2023.
\newblock Accessed, April 2024.

\bibitem{blackshear2019move}
Sam Blackshear, Evan Cheng, David~L Dill, Victor Gao, Ben Maurer, Todd Nowacki, Alistair Pott, Shaz Qadeer, Dario~Russi Rain, Stephane Sezer, et~al.
\newblock Move: A language with programmable resources.
\newblock {\em Libra Assoc}, page~1, 2019.

\bibitem{blackshear2023sui}
Same Blackshear, Andrey Chursin, George Danezis, Anastasios Kichidis, Lefteris Kokoris-Kogias, Xun Li, Mark Logan, Ashok Menon, Todd Nowacki, Alberto Sonnino, et~al.
\newblock Sui lutris: A blockchain combining broadcast and consensus.
\newblock {\em arXiv preprint arXiv:2310.18042}, 2023.

\bibitem{arbitrum}
Lee Bousfield, Rachel Bousfield, Chris Buckland, Ben Burgess, Joshua Colvin, Edward~W. Felten, Steven Goldfeder, Daniel Goldman, Braden Huddleston, Harry Kalodner, Frederico~Arnaud Lacs, Harry Ng, Aman Sanghi, Tristan Wilson, Valeria Yermakova, and Tsahi Zidenberg.
\newblock Arbitrum nitro: A second-generation optimistic rollup.
\newblock \url{https://github.com/OffchainLabs/nitro/blob/master/docs/Nitro-whitepaper.pdf}, 2022.
\newblock Accessed, April 2024.

\bibitem{brugger2023checkmate}
Lea~Salome Brugger, Laura Kov{\'a}cs, Anja Petkovic~Komel, Sophie Rain, and Michael Rawson.
\newblock Checkmate: Automated game-theoretic security reasoning.
\newblock In {\em Proceedings of the 2023 ACM SIGSAC Conference on Computer and Communications Security}, pages 1407--1421, 2023.

\bibitem{canidio2023batching}
Andrea Canidio and Robin Fritsch.
\newblock Batching trades on automated market makers.
\newblock In {\em 5th Conference on Advances in Financial Technologies (AFT 2023)}. Schloss-Dagstuhl-Leibniz Zentrum f{\"u}r Informatik, 2023.

\bibitem{carlsten2016instability}
Miles Carlsten, Harry Kalodner, S~Matthew Weinberg, and Arvind Narayanan.
\newblock On the instability of bitcoin without the block reward.
\newblock In {\em Proceedings of the 2016 ACM SIGSAC conference on computer and communications security}, pages 154--167, 2016.

\bibitem{chan2024mechanism}
TH~Chan, Ke~Wu, and Elaine Shi.
\newblock Mechanism design for automated market makers.
\newblock {\em arXiv preprint arXiv:2402.09357}, 2024.

\bibitem{chitra2022differential}
Tarun Chitra, Guillermo Angeris, and Alex Evans.
\newblock Differential privacy in constant function market makers.
\newblock In {\em International Conference on Financial Cryptography and Data Security}, pages 149--178. Springer, 2022.

\bibitem{Curve}
{Curve.fi}.
\newblock Curve documentation release 1.0.0.
\newblock \url{https://docs.sushi.com/pdf/whitepaper.pdf}, 2022.
\newblock Accessed: April 2024.

\bibitem{daian2020flash}
Philip Daian, Steven Goldfeder, Tyler Kell, Yunqi Li, Xueyuan Zhao, Iddo Bentov, Lorenz Breidenbach, and Ari Juels.
\newblock Flash boys 2.0: Frontrunning in decentralized exchanges, miner extractable value, and consensus instability.
\newblock In {\em 2020 IEEE Symposium on Security and Privacy (SP)}, pages 910--927. IEEE, 2020.

\bibitem{danezis2022narwhal}
George Danezis, Lefteris Kokoris-Kogias, Alberto Sonnino, and Alexander Spiegelman.
\newblock Narwhal and tusk: a dag-based mempool and efficient bft consensus.
\newblock In {\em Proceedings of the Seventeenth European Conference on Computer Systems}, pages 34--50, 2022.

\bibitem{tvl2024}
DefiLlama.
\newblock Total value locked all chains.
\newblock \url{https://defillama.com/chains}, 2024.
\newblock Accessed, July 2024.

\bibitem{dickerson2017adding}
Thomas Dickerson, Paul Gazzillo, Maurice Herlihy, and Eric Koskinen.
\newblock Adding concurrency to smart contracts.
\newblock In {\em Proceedings of the ACM Symposium on Principles of Distributed Computing}, pages 303--312, 2017.

\bibitem{engel2021composing}
Daniel Engel and Maurice Herlihy.
\newblock Composing networks of automated market makers.
\newblock In {\em Proceedings of the 3rd ACM Conference on Advances in Financial Technologies}, pages 15--28, 2021.

\bibitem{eyal2016bitcoin}
Ittay Eyal, Adem~Efe Gencer, Emin~G{\"u}n Sirer, and Robbert Van~Renesse.
\newblock {Bitcoin-NG}: A scalable blockchain protocol.
\newblock In {\em 13th USENIX symposium on networked systems design and implementation (NSDI 16)}, pages 45--59, 2016.

\bibitem{ferreira2021dynamic}
Matheus~VX Ferreira, Daniel~J Moroz, David~C Parkes, and Mitchell Stern.
\newblock Dynamic posted-price mechanisms for the blockchain transaction-fee market.
\newblock In {\em Proceedings of the 3rd ACM Conference on Advances in Financial Technologies}, pages 86--99, 2021.

\bibitem{fritsch2024measuring}
Robin Fritsch and Andrea Canidio.
\newblock Measuring arbitrage losses and profitability of amm liquidity.
\newblock In {\em Companion Proceedings of the ACM on Web Conference 2024}, pages 1761--1767, 2024.

\bibitem{fuller2011computing}
Samuel~H Fuller and Lynette~I Millett.
\newblock Computing performance: Game over or next level?
\newblock {\em Computer}, 44(1):31--38, 2011.

\bibitem{garamvolgyi2022utilizing}
P{\'e}ter Garamv{\"o}lgyi, Yuxi Liu, Dong Zhou, Fan Long, and Ming Wu.
\newblock Utilizing parallelism in smart contracts on decentralized blockchains by taming application-inherent conflicts.
\newblock In {\em Proceedings of the 44th International Conference on Software Engineering}, pages 2315--2326, 2022.

\bibitem{ghosh2023automated}
Bikramaditya Ghosh, Hayfa Kazouz, and Zaghum Umar.
\newblock Do automated market makers in defi ecosystem exhibit time-varying connectedness during stressed events?
\newblock {\em Journal of Risk and Financial Management}, 16(5):259, 2023.

\bibitem{goyal2023finding}
Mohak Goyal, Geoffrey Ramseyer, Ashish Goel, and David Mazi{\`e}res.
\newblock Finding the right curve: Optimal design of constant function market makers.
\newblock In {\em Proceedings of the 24th ACM Conference on Economics and Computation}, pages 783--812, 2023.

\bibitem{gudgeon2020sok}
Lewis Gudgeon, Pedro Moreno-Sanchez, Stefanie Roos, Patrick McCorry, and Arthur Gervais.
\newblock Sok: Layer-two blockchain protocols.
\newblock In {\em Financial Cryptography and Data Security: 24th International Conference, FC 2020, Kota Kinabalu, Malaysia, February 10--14, 2020 Revised Selected Papers 24}, pages 201--226. Springer, 2020.

\bibitem{gueta2019sbft}
Guy~Golan Gueta, Ittai Abraham, Shelly Grossman, Dahlia Malkhi, Benny Pinkas, Michael Reiter, Dragos-Adrian Seredinschi, Orr Tamir, and Alin Tomescu.
\newblock Sbft: A scalable and decentralized trust infrastructure.
\newblock In {\em 2019 49th Annual IEEE/IFIP international conference on dependable systems and networks (DSN)}, pages 568--580. IEEE, 2019.

\bibitem{heimbach2022eliminating}
Lioba Heimbach and Roger Wattenhofer.
\newblock Eliminating sandwich attacks with the help of game theory.
\newblock In {\em Proceedings of the 2022 ACM on Asia Conference on Computer and Communications Security}, pages 153--167, 2022.

\bibitem{hou2019squirrl}
Charlie Hou, Mingxun Zhou, Yan Ji, Phil Daian, Florian Tram{\`{e}}r, Giulia Fanti, and Ari Juels.
\newblock Squirrl: Automating attack analysis on blockchain incentive mechanisms with deep reinforcement learning.
\newblock In {\em 28th Annual Network and Distributed System Security Symposium, {NDSS} 2021, virtually, February 21-25, 2021}. The Internet Society, 2021.

\bibitem{kokoris2018omniledger}
Eleftherios Kokoris-Kogias, Philipp Jovanovic, Linus Gasser, Nicolas Gailly, Ewa Syta, and Bryan Ford.
\newblock Omniledger: A secure, scale-out, decentralized ledger via sharding.
\newblock In {\em 2018 IEEE symposium on security and privacy (SP)}, pages 583--598. IEEE, 2018.

\bibitem{kulkarni2023routing}
Kshitij Kulkarni, Theo Diamandis, and Tarun Chitra.
\newblock Routing mev in constant function market makers.
\newblock In {\em International Conference on Web and Internet Economics}, pages 456--473. Springer, 2023.

\bibitem{zkswap}
L2~Lab.
\newblock Zkswap: A layer-2 token swap protocol based on zk-rollup.
\newblock \url{https://github.com/l2labs/zkswap-whitepaper/blob/master/zkswap_en.pdf}, 2020.
\newblock Accessed: April 2024.

\bibitem{aptos2022}
Aptos Labs.
\newblock The aptos blockchain: Safe, scalable, and upgradeable web3 infrastructure.
\newblock \url{https://aptosfoundation.org/whitepaper/aptos-whitepaper_en.pdf}, 2022.
\newblock Accessed, April 2024.

\bibitem{li2020decentralized}
Chenxin Li, Peilun Li, Dong Zhou, Zhe Yang, Ming Wu, Guang Yang, Wei Xu, Fan Long, and Andrew Chi-Chih Yao.
\newblock A decentralized blockchain with high throughput and fast confirmation.
\newblock In {\em 2020 $\{$USENIX$\}$ Annual Technical Conference ($\{$USENIX$\}$$\{$ATC$\}$ 20)}, pages 515--528, 2020.

\bibitem{li2023demystifying}
Zihao Li, Jianfeng Li, Zheyuan He, Xiapu Luo, Ting Wang, Xiaoze Ni, Wenwu Yang, Xi~Chen, and Ting Chen.
\newblock Demystifying defi mev activities in flashbots bundle.
\newblock In {\em Proceedings of the 2023 ACM SIGSAC Conference on Computer and Communications Security}, pages 165--179, 2023.

\bibitem{spl2024}
Solana~Program Library.
\newblock Token swap program.
\newblock \url{https://spl.solana.com/}, 2024.
\newblock Accessed, July 2024.

\bibitem{lindrea2023uniswap}
Brayden Lindrea.
\newblock Uniswap tops \$2t in trading volume, larger than australia's gdp.
\newblock \url{https://cointelegraph.com/news/uniswap-tops-two-trillion-trading-volume}, 2024.
\newblock Accessed: April 2024.

\bibitem{liu2021parallel}
Jian Liu, Peilun Li, Raymond Cheng, N~Asokan, and Dawn Song.
\newblock Parallel and asynchronous smart contract execution.
\newblock {\em IEEE Transactions on Parallel and Distributed Systems}, 33(5):1097--1108, 2021.

\bibitem{uniswaplayer2}
MK~Manoylov.
\newblock Arbitrum surpasses \$150 billion in total transaction volume on uniswap.
\newblock \url{https://www.theblock.co/post/292655/}, 2024.
\newblock Accessed: June 2024.

\bibitem{mccorry2019smart}
Patrick McCorry, Alexander Hicks, and Sarah Meiklejohn.
\newblock Smart contracts for bribing miners.
\newblock In {\em Financial Cryptography and Data Security: FC 2018 International Workshops, BITCOIN, VOTING, and WTSC, Nieuwpoort, Cura{\c{c}}ao, March 2, 2018, Revised Selected Papers 22}, pages 3--18. Springer, 2019.

\bibitem{milionis2023automated}
Jason Milionis, Ciamac~C Moallemi, and Tim Roughgarden.
\newblock Automated market making and arbitrage profits in the presence of fees.
\newblock {\em arXiv preprint arXiv:2305.14604}, 2023.

\bibitem{milionis2022automated}
Jason Milionis, Ciamac~C Moallemi, Tim Roughgarden, and Anthony~Lee Zhang.
\newblock Automated market making and loss-versus-rebalancing.
\newblock {\em arXiv preprint arXiv:2208.06046}, 2022.

\bibitem{mirkin2020bdos}
Michael Mirkin, Yan Ji, Jonathan Pang, Ariah Klages-Mundt, Ittay Eyal, and Ari Juels.
\newblock Bdos: Blockchain denial-of-service.
\newblock In {\em Proceedings of the 2020 ACM SIGSAC conference on Computer and Communications Security}, pages 601--619, 2020.

\bibitem{nakamoto2008bitcoin}
Satoshi Nakamoto.
\newblock Bitcoin: A peer-to-peer electronic cash system.
\newblock {\em Decentralized business review}, 2008.

\bibitem{OmniSwap}
OmniBTC.
\newblock Omniswap.
\newblock \url{https://github.com/OmniBTC/OmniSwap}, 2024.
\newblock Accessed, April 2024.

\bibitem{pirlea2021}
George Prlea, Amrit Kumar, and Ilya Sergey.
\newblock Practical smart contract sharding with ownership and commutativity analysis.
\newblock In {\em Proceedings of the 42nd ACM SIGPLAN International Conference on Programming Language Design and Implementation}, pages 1327--1341, 2021.

\bibitem{QuickSwap}
QuickSwap.
\newblock Quickswap documentation.
\newblock \url{https://docs.quickswap.exchange/}, 2023.
\newblock Accessed: April 2024.

\bibitem{rocket2019scalable}
Team Rocket, Maofan Yin, Kevin Sekniqi, Robbert van Renesse, and Emin~G{\"u}n Sirer.
\newblock Scalable and probabilistic leaderless bft consensus through metastability.
\newblock {\em arXiv preprint arXiv:1906.08936}, 2019.

\bibitem{roughgarden2021transaction}
Tim Roughgarden.
\newblock Transaction fee mechanism design.
\newblock {\em ACM SIGecom Exchanges}, 19(1):52--55, 2021.

\bibitem{SchlegelKM23}
Jan~Christoph Schlegel, Mateusz Kwasnicki, and Akaki Mamageishvili.
\newblock Axioms for constant function market makers.
\newblock In Kevin Leyton{-}Brown, Jason~D. Hartline, and Larry Samuelson, editors, {\em Proceedings of the 24th {ACM} Conference on Economics and Computation, {EC} 2023, London, United Kingdom, July 9-12, 2023}, page 1079. {ACM}, 2023.

\bibitem{selten1965spieltheoretische}
Reinhard Selten.
\newblock Spieltheoretische behandlung eines oligopolmodells mit nachfragetr{\"a}gheit: Teil i: Bestimmung des dynamischen preisgleichgewichts.
\newblock {\em Zeitschrift f{\"u}r die gesamte Staatswissenschaft/Journal of Institutional and Theoretical Economics}, (H. 2):301--324, 1965.

\bibitem{sergey2017concurrent}
Ilya Sergey and Aquinas Hobor.
\newblock A concurrent perspective on smart contracts.
\newblock In {\em Financial Cryptography and Data Security: FC 2017 International Workshops, WAHC, BITCOIN, VOTING, WTSC, and TA, Sliema, Malta, April 7, 2017, Revised Selected Papers 21}, pages 478--493. Springer, 2017.

\bibitem{solanagas2024}
Solana.
\newblock How to optimize compute usage on solana.
\newblock \url{https://solana.com/developers/guides/advanced/how-to-optimize-compute\#what-are-the-current-compute-limitations}, 2024.
\newblock Accessed, July 2024.

\bibitem{spiegelman2022bullshark}
Alexander Spiegelman, Neil Giridharan, Alberto Sonnino, and Lefteris Kokoris-Kogias.
\newblock Bullshark: Dag bft protocols made practical.
\newblock In {\em Proceedings of the 2022 ACM SIGSAC Conference on Computer and Communications Security}, pages 2705--2718, 2022.

\bibitem{suigasfee}
suiscan.
\newblock Average gas fee.
\newblock \url{https://suiscan.xyz/mainnet/analytics/avg-gas-fee}, 2024.
\newblock Accessed, September 2024.

\bibitem{SushiSwap}
Sushi.
\newblock Be a crypto chef with sushi.
\newblock \url{https://docs.sushi.com/pdf/whitepaper.pdf}, 2023.
\newblock Accessed: April 2024.

\bibitem{suiprice}
TheBlock.
\newblock Sui price (sui): Price index and live chart.
\newblock \url{https://www.theblock.co/price/248445/sui-sui-eur}, 2024.
\newblock Accessed, September 2024.

\bibitem{tsabary2021mad}
Itay Tsabary, Matan Yechieli, Alex Manuskin, and Ittay Eyal.
\newblock Mad-htlc: because htlc is crazy-cheap to attack.
\newblock In {\em 2021 IEEE Symposium on Security and Privacy (SP)}, pages 1230--1248. IEEE, 2021.

\bibitem{von2010market}
Heinrich Von~Stackelberg.
\newblock {\em Market structure and equilibrium}.
\newblock Springer Science \& Business Media, 2010.

\bibitem{wang2023n}
Jianhuan Wang, Jichen Li, Zecheng Li, Xiaotie Deng, and Bin Xiao.
\newblock n-mvtl attack: Optimal transaction reordering attack on defi.
\newblock In {\em European Symposium on Research in Computer Security}, pages 367--386. Springer, 2023.

\bibitem{wang2019monoxide}
Jiaping Wang and Hao Wang.
\newblock Monoxide: Scale out blockchains with asynchronous consensus zones.
\newblock In {\em 16th USENIX symposium on networked systems design and implementation (NSDI 19)}, pages 95--112, 2019.

\bibitem{wang2022cyclic}
Ye~Wang, Yan Chen, Haotian Wu, Liyi Zhou, Shuiguang Deng, and Roger Wattenhofer.
\newblock Cyclic arbitrage in decentralized exchanges.
\newblock In {\em Companion Proceedings of the Web Conference 2022}, pages 12--19, 2022.

\bibitem{wood2014ethereum}
Gavin Wood et~al.
\newblock Ethereum: A secure decentralised generalised transaction ledger.
\newblock {\em Ethereum project yellow paper}, 151(2014):1--32, 2014.

\bibitem{xavier2023credible}
Matheus~Venturyne Xavier~Ferreira and David~C Parkes.
\newblock Credible decentralized exchange design via verifiable sequencing rules.
\newblock In {\em Proceedings of the 55th Annual ACM Symposium on Theory of Computing}, pages 723--736, 2023.

\bibitem{yahoo2024}
YahooFinance.
\newblock Ethereum usd (eth-usd).
\newblock \url{https://finance.yahoo.com/quote/ETH-USD/history/}, 2024.
\newblock Accessed, April 2024.

\bibitem{yaish2023uncle}
Aviv Yaish, Gilad Stern, and Aviv Zohar.
\newblock Uncle maker:(time) stamping out the competition in ethereum.
\newblock In {\em Proceedings of the 2023 ACM SIGSAC Conference on Computer and Communications Security}, pages 135--149, 2023.

\bibitem{yakovenko2018solana}
Anatoly Yakovenko.
\newblock Solana: A new architecture for a high performance blockchain v0. 8.13.
\newblock \url{https://solana.com/solana-whitepaper.pdf}, 2018.
\newblock Accessed, April 2024.

\bibitem{zamani2018rapidchain}
Mahdi Zamani, Mahnush Movahedi, and Mariana Raykova.
\newblock Rapidchain: Scaling blockchain via full sharding.
\newblock In {\em Proceedings of the 2018 ACM SIGSAC conference on computer and communications security}, pages 931--948, 2018.

\bibitem{zhang2019lay}
Ren Zhang and Bart Preneel.
\newblock Lay down the common metrics: Evaluating proof-of-work consensus protocols' security.
\newblock In {\em 2019 IEEE Symposium on Security and Privacy (SP)}, pages 175--192. IEEE, 2019.

\bibitem{zhang2022nc}
Ren Zhang, Dingwei Zhang, Quake Wang, Shichen Wu, Jan Xie, and Bart Preneel.
\newblock {NC-max}: Breaking the security-performance tradeoff in nakamoto consensus.
\newblock In {\em Proceedings 2022 Network and Distributed System Security Symposium}, pages 1--18. NDSS, 2022.

\bibitem{zhang2018formal}
Yi~Zhang, Xiaohong Chen, and Daejun Park.
\newblock Formal specification of constant product (xy= k) market maker model and implementation.
\newblock \url{https://github.com/runtimeverification/publications/blob/main/reports/smart-contracts/Uniswap-V1.pdf}, 2018.
\newblock Accessed, April 2024.

\bibitem{zhou2021high}
Liyi Zhou, Kaihua Qin, Christof~Ferreira Torres, Duc~V Le, and Arthur Gervais.
\newblock High-frequency trading on decentralized on-chain exchanges.
\newblock In {\em 2021 IEEE Symposium on Security and Privacy (SP)}, pages 428--445. IEEE, 2021.

\end{thebibliography}

\appendix

\section{Growing Demand for AMM}\label{app:growingdemand}
\begin{figure}[b]
	\centering
	\includegraphics[width=0.47\textwidth]{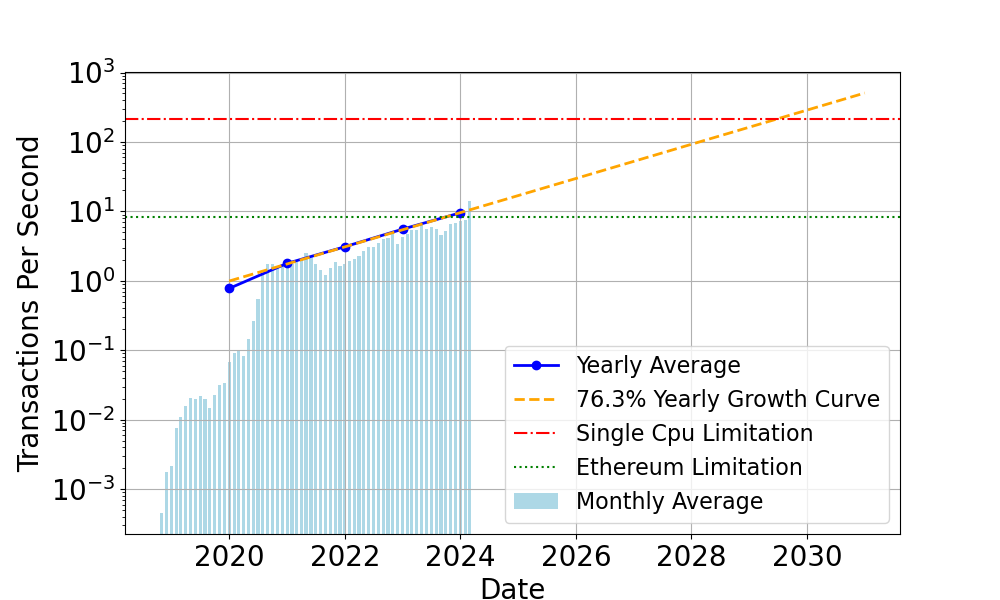}
	\caption{Monthly trades in Uniswap $1$-$3$}
	\label{fig:growingdemand}
\end{figure}

We analyze the demand for AMMs by calculating the number of trades per second on the prominent Uniswap versions 1-3, from its deployment in November 2018 to March 2024. We use data from Dune (\url{https://dune.com/}), a comprehensive database for blockchain data.
Figure~\ref{fig:growingdemand} illustrates the exponentially increasing demand, from $0.78$ average trades per second in 2020 to $9.54$ in 2024.
On the Ethereum blockchain, most Uniswap transactions are executed through versions 2 and 3, incurring gas costs of \num{152809} and \num{184523} respectively. 
Given that the average total gas per block is \num{15000000} and the block interval is $12$ seconds, Ethereum can facilitate up to $8.18$ trades per second for v2 and $6.77$ trades per second for v3. 
However, in March 2024, the average monthly trades on Uniswap reached $13.9$ per second, nearly doubling Ethereum's processing capacity.
Therefore, most demand of Uniswap is processed through off-chain solutions~\cite{adams2024layer}.

The demand curve matches an exponential function, reflecting its rise in popularity since 2020 and consistent growth rate, yielding a yearly demand growth of $76.3\%$ ($R^2 = 0.999$).
The fitted curve suggests that demand for Uniswap will surpass the single CPU processing capacity of $214\textit{tps}$ by 2029. 
As we show this is beyond what even the state-of-the-art Sui can sustain.

\section{Constant Product Market Maker (CPMM)} \label{app:preliminaries} 

The basis of SAMM is the Constant Product Market Maker AMM (CPMM, e.g.~\cite{adams2021uniswap,OmniSwap, SushiSwap, Curve}). 
We review its operation here. 
It uses a share-based solution to manage liquidity addition and removal operations (\S\ref{sec:preliminaries:liquidity}) and keeps the product of the deposited amount of two tokens constant in trade operations~(\S\ref{sec:preliminaries:trade}).
Liquidity providers earn revenue from the trading fees~(\S\ref{sec:preliminaries:tf}).

			\subsection{Liquidity Addition and Removal}\label{sec:preliminaries:liquidity}
Most CPMMs (e.g.,~\cite{zhang2018formal, adams2020uniswap, OmniSwap, QuickSwap}) use a fungible share token to manage liquidity addition and removal.
These tokens represent a liquidity provider's share in the AMM.

When liquidity providers add tokens to an AMM, they receive \emph{share tokens} which signify their portion of the AMM. 
To recall, $\poolcontent{}{}{A}$ and $\poolcontent{}{}{B}$ denote the amounts of \textit{token~A} and \textit{token~B} already deposited in the AMM.
Similarly, $\poolinput{}{}{A}$ and $\poolinput{}{}{B}$ represent the quantities of \textit{token~A} and \textit{token~B} that the liquidity provider contributes through a liquidity addition operation.
Let $\poolcontent{}{}{S}$ represent the total amount of all share tokens distributed before the operation.
The amount of share tokens acquired by the liquidity provider in this operation, $\pooloutput{}{}{S}$, is given by 

\begin{equation*}
	\pooloutput{}{}{S} = \poolcontent{}{}{S} \times \min\left\{\frac{\poolinput{}{}{A}}{\poolcontent{}{}{A}}, \frac{\poolinput{}{}{B}}{\poolcontent{}{}{B}} \right\} \,\, .
\end{equation*}

The term $\min\left\{\frac{\poolinput{}{}{A}}{\poolcontent{}{}{A}}, \frac{\poolinput{}{}{B}}{\poolcontent{}{}{B}} \right\}$ signifies the ratio of the input token to the deposited token. 
The $\min$ function serves to ensure that the ownership accurately reflects the liquidity provider's contribution relative to the scarcer asset. 
It prevents situations where a liquidity provider adds a large amount of a certain token to unfairly obtain a larger share of tokens in the AMM.

Liquidity providers have the option to withdraw tokens from the AMM with the liquidity removal operation, which takes the number of share tokens as input and outputs \textit{token~A} and \textit{token~B}. 
Let $\poolinput{}{}{S}$ represent the amount of input share tokens, and $\poolcontent{}{}{S}$ denote the total amount of all share tokens referred to the AMM before the execution. 
The amounts of \textit{token~A} and \textit{token~B} withdrawn are $\pooloutput{}{}{A}$ and $\pooloutput{}{}{B}$:
\begin{equation*}
	\pooloutput{}{}{A}= \frac{\poolinput{}{}{S}}{\poolcontent{}{}{S}} \times \poolcontent{}{}{A}, 
	\pooloutput{}{}{B}= \frac{\poolinput{}{}{S}}{\poolcontent{}{}{S}}\times \poolcontent{}{}{B} \,\, .
\end{equation*}

Note that $\poolcontent{}{}{S}$ is not the amount of share tokens deposited in the AMM, but the total amount of share tokens owned by all liquidity providers.

			\subsection{CPMM Trades}\label{sec:preliminaries:trade}
Recall that in a trade operation, a trader sends $\poolinput{}{}{A}$ \textit{token~A} (resp., $\poolinput{}{}{B}$ \textit{token~B}) and gets $\pooloutput{}{}{B}$ \textit{token~B} (resp., $\pooloutput{}{}{A}$ \textit{token~A}).
A trade is thus defined by a tuple~$(\poolinput{}{}{A}, \pooloutput{}{}{A}, \poolinput{}{}{B}, \pooloutput{}{}{B})$, where all values are non-negative, 
$
	\poolinput{}{}{A}, \pooloutput{}{}{A}, \poolinput{}{}{B}, \pooloutput{}{}{B} \geq 0 
$. 
The trade is either \textit{token~A} for \textit{token~B} or \textit{token~B} for \textit{token~A}, i.e., \(\poolinput{}{}{A} = \pooloutput{}{}{B} = 0\) or \(\poolinput{}{}{B} = \pooloutput{}{}{A} = 0\).

After the trade, the amount of deposited tokens is updated to $\poolcontent{}{}{A} + \poolinput{}{}{A} - \pooloutput{}{}{A}$ and $\poolcontent{}{}{B} + \poolinput{}{}{B} - \pooloutput{}{}{B}$, respectively.
Ignoring fees, the CPMM chooses the outputs~($\pooloutput{}{}{A}$ and~$\pooloutput{}{}{B}$) by setting an invariant called the \emph{trading function} $\Phi^{\textit{net}}_\textit{CPMM}(\poolcontent{}{}{A}, \poolcontent{}{}{B}, \poolinput{}{}{A}, \pooloutput{}{}{A}, \poolinput{}{}{B}, \pooloutput{}{}{B})$~\cite{angeris2020improved} which is the product of the amount of \textit{token~A} and \textit{token~B} after the trade, i.e.,
\begin{multline*} 
	\Phi^{\textit{net}}_\textit{CPMM}(\poolcontent{}{}{A}, \poolcontent{}{}{B}, \poolinput{}{}{A}, \pooloutput{}{}{A}, \poolinput{}{}{B}, \pooloutput{}{}{B}) \coloneqq \\
	 (\poolcontent{}{}{A} + \poolinput{}{}{A} - \pooloutput{}{}{A}) \times (\poolcontent{}{}{B} + \poolinput{}{}{B}-\pooloutput{}{}{B}) \,\, . 
\end{multline*}

Note that $\Phi^{\textit{net}}_\textit{CPMM}(\poolcontent{}{}{A}, \poolcontent{}{}{B}, 0,0,0,0)$ is the product of the amount of \textit{token~A} and \textit{token~B} before the trade.

A trade $(\poolinput{}{}{A}, \pooloutput{}{}{A}, \poolinput{}{}{B}, \pooloutput{}{}{B})$ is legal if the trading function remains constant, i.e.,
\begin{equation} \label{leagaltrade1}
	\Phi^{\textit{net}}_\textit{CPMM}(\poolcontent{}{}{A}, \poolcontent{}{}{B}, \poolinput{}{}{A}, \pooloutput{}{}{A}, \poolinput{}{}{B}, \pooloutput{}{}{B})  
	= 
	\Phi^{\textit{net}}_\textit{CPMM}(\poolcontent{}{}{A}, \poolcontent{}{}{B}, 0,0,0,0) \,\, . 
\end{equation}

It indicates that the product of the amounts of \textit{token~A} and \textit{token~B} after the trade is the same as the product of the amounts before the trade, hence then names Constant Product Market Maker, i.e.,
\begin{equation} \label{leagaltrade2}
	(\poolcontent{}{}{A} +  \poolinput{}{}{A} - \pooloutput{}{}{A}) \times (\poolcontent{}{}{B}+ \poolinput{}{}{B}-\pooloutput{}{}{B}) 
	= \poolcontent{}{}{A} \times \poolcontent{}{}{B} \,\, . 
\end{equation}

If the trader gets $\pooloutput{}{}{A}$ \textit{token~A} (resp., $\pooloutput{}{}{B}$ \textit{token~B}), according to Equation~\ref{leagaltrade2}, she pays
\begin{align}\label{net_amount_pre}
	\poolinput{}{}{B} = \frac{\poolcontent{}{}{A} \times \poolcontent{}{}{B}}{\poolcontent{}{}{A} - \pooloutput{}{}{A}} - \poolcontent{}{}{B} =  \frac{\poolcontent{}{}{B} \times \pooloutput{}{}{A}}{\poolcontent{}{}{A} - \pooloutput{}{}{A}} \,\, , 
\end{align}
and similarly for an AB trader. 

Note that we ignored fees in the above equations. 
We call the payment without fees (Equation~\ref{net_amount_pre}) \emph{net amount}, and denote by
\begin{equation*}
	\textit{net}(\poolcontent{}{}{A}, \poolcontent{}{}{B}, \pooloutput{}{}{A}) = \frac{\poolcontent{}{}{B} \times \pooloutput{}{}{A}}{\poolcontent{}{}{A} - \pooloutput{}{}{A}} \,\, .
\end{equation*}

Denote the amount of \textit{token~B} that the trader needs to pay to get a single \textit{token~A} by $p^{AB} = \frac{\poolinput{}{}{B}}{\pooloutput{}{}{A}}$.
From the above equation, $p^{AB} = \frac{\poolcontent{}{}{B}}{\poolcontent{}{}{A} - \pooloutput{}{}{A}}$.
This value increases as the output amount of \textit{token~A}, $\pooloutput{}{}{A}$, increases, which is the \emph{Slippage} of the trade.
When the output amount of \textit{token~A}, $\pooloutput{}{}{A}$, approaches zero, the token price is not influenced by the slippage.
We call it the \emph{reported price} of \textit{token~A} relative to \textit{token~B} and denote it by
\begin{equation*}
	p_{\textit{reported}}^{AB} \coloneqq  \lim_{\pooloutput{}{}{A} \to 0} \frac{\poolcontent{}{}{B}}{\poolcontent{}{}{A} - \pooloutput{}{}{A}}  = \frac{\poolcontent{}{}{B}}{\poolcontent{}{}{A}}\,\, . 
\end{equation*}
When the reported price of an AMM is different from the price in the external market without trading fees, i.e. $p_{\textit{reported}}^{AB} \neq \frac{p^A}{p^B}$, there is an arbitrage opportunity for arbitrageurs to make profits.
Therefore, due to the arbitrageurs, the reported price of the AMM is always equal to the price in the external market without trading fees~\cite{milionis2022automated}.
That is:
\begin{equation}\label{eq:reportedpriceisprice}
	\frac{p^A}{p^B} = p_{\textit{reported}}^{AB} =
	\frac{\poolcontent{}{}{B}}{\poolcontent{}{}{A}} \,\, .
\end{equation}

Since trading fees are not added to the AMM (as defined in Section~\ref{sec:model}), the product of $\poolcontent{}{}{A}$ and $\poolcontent{}{}{B}$ remains constant after each trade (Equation~\ref{leagaltrade2}).
Then, arbitrageurs keep the ratio of~$\poolcontent{}{}{A}$ and~$\poolcontent{}{}{B}$ equal to~$\frac{p^A}{p^B}$ (Equation~\ref{eq:reportedpriceisprice}).
Therefore, $\poolcontent{}{}{A}$ and $\poolcontent{}{}{B}$ remain the same after the trade and arbitrage.

			\subsection{CPMM Trading Fee}\label{sec:preliminaries:tf}

AMMs charge a trading fee for each trade operation, which the trader pays.
These trading fees form the revenue of liquidity providers.
In CPMMs, the trading fee is a constant fraction ${1 - \gamma \in [0,1]}$ of input tokens~\cite{angeris2020improved}.
This is achieved by selecting the trading function $\Phi_\textit{CPMM}(\poolcontent{}{}{A}, \poolcontent{}{}{B}, \poolinput{}{}{A}, \pooloutput{}{}{A}, \poolinput{}{}{B}, \pooloutput{}{}{B})$ as~\cite{angeris2020improved}
\begin{multline*}
	\Phi_\textit{CPMM}(\poolcontent{}{}{A}, \poolcontent{}{}{B}, \poolinput{}{}{A}, \pooloutput{}{}{A}, \poolinput{}{}{B}, \pooloutput{}{}{B}) \coloneqq \\
	(\poolcontent{}{}{A} + \gamma \poolinput{}{}{A} - \pooloutput{}{}{A}) \times (\poolcontent{}{}{B}+\gamma \poolinput{}{}{B}-\pooloutput{}{}{B}) \,\, . 
\end{multline*}

Since a trade $(\poolinput{}{}{A}, \pooloutput{}{}{A}, \poolinput{}{}{B}, \pooloutput{}{}{B})$ is legal if the trading function remains constant (Equation~\ref{leagaltrade1}), to get $\pooloutput{}{}{A}$ \textit{token~A}, the trader pays the gross amount
\begin{align}\label{net_amount_cpmm}
	G_{\textit{CPMM}}(\poolcontent{}{}{A}, \poolcontent{}{}{B}, \pooloutput{}{}{A}) &= \frac{1}{\gamma}	\left(
	\frac{\poolcontent{}{}{A} \times \poolcontent{}{}{B}}{\poolcontent{}{}{A} - \pooloutput{}{}{A}} - \poolcontent{}{}{B}
	\right) \nonumber \\
	&= \frac{1}{\gamma}	\left(
	\frac{\poolcontent{}{}{B} \times \pooloutput{}{}{A}}{\poolcontent{}{}{A} - \pooloutput{}{}{A}}
	\right)\,\, .
\end{align}
Compared to the net amount (Equation~\ref{net_amount_pre}), the trader pays additional tokens to complete the trade; this is the \emph{trading fee}. 
In the CPMM case, it is 
\begin{equation*}
	 \frac{1-\gamma}{\gamma}	\left(
		\frac{\poolcontent{}{}{B} \times \pooloutput{}{}{A}}{\poolcontent{}{}{A} - \pooloutput{}{}{A}}
		\right)\,\, .
\end{equation*}

In the prominent Uniswap v2~\cite{adams2020uniswap}, the ratio is~${1 - \gamma = 0.003}$.

\section{Uniswap v2 Statistics}\label{app:uniswapdata}
We analyze the five Uniswap~v2 AMMs with different pairs of tokens with the highest number of trade transactions from Ethereum block 12,000,000 to 19,500,000 (from 2021-03-08 to 2024-03-23, about 3 years).
First, we find that more than $99.5\%$ of the transactions are trade operations.
Therefore, we only focus on the throughput of trade operations.
Second, we calculate the ratio of output tokens to deposited tokens for each transaction and find that most transactions are small compared with the liquidity size.
Among all trades, the average ratio of output tokens to deposited tokens is less than $0.036\%$, and more than $99\%$ of the trades have a ratio of less than $0.52\%$.
Such a phenomenon is consistent in all token pairs.
Specifically, in the most active ones, USDC-ETH and USDT-ETH, over $99\%$ of trades exhibit a ratio of output to deposited tokens below $0.00128\%$.
Figure~\ref{fig:uniswap_v2_ratio} shows in log scale the one-complement of the cumulative distribution function of the ratio of output tokens to deposited tokens in all token pairs and in five selected pairs.

\begin{figure}
	\centering
	\includegraphics[width=0.47\textwidth]{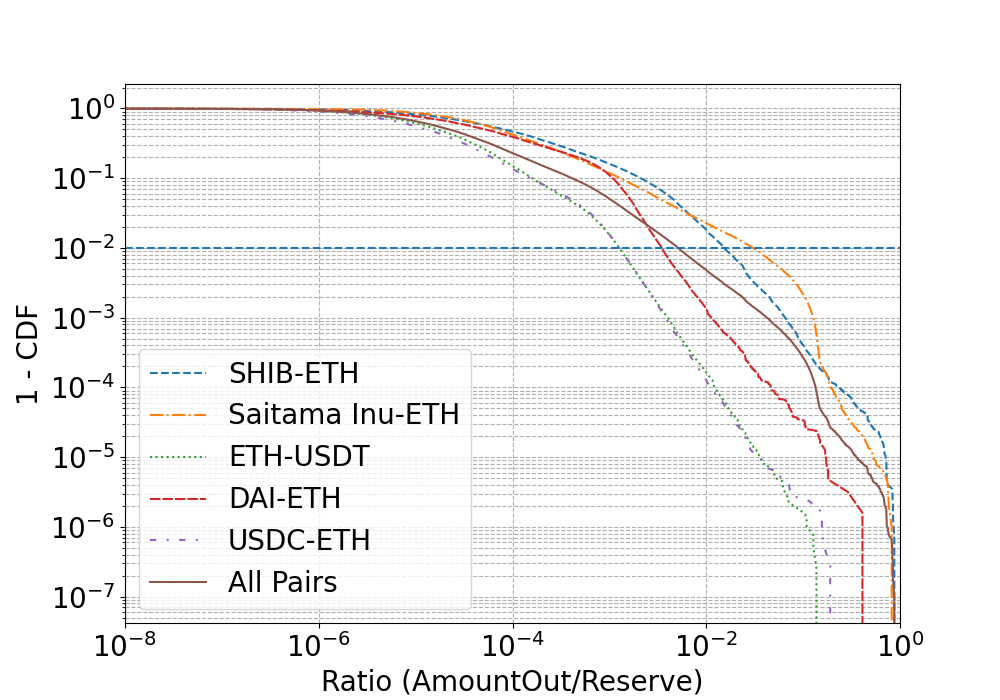}
	\caption{The ratio of output tokens to deposited tokens in Uniswap v2}
	\label{fig:uniswap_v2_ratio}
\end{figure}

\section{CPMM Does Not Satisfy Either Property}\label{app:limitation_cpmm}
Simply deploying multiple CPMM shards does not satisfy our desired properties.

\begin{restatable}{thm}{thmcpmmlimitation} \label{thm:cpmm_limitation}
	For any value of $0 < c < 1$, the CPMM cost function $G_{\textit{CPMM}}(\poolcontent{}{}{A},\poolcontent{}{}{B}, \pooloutput{}{}{A})$ does not satisfy either the $c$-Non-Splitting or the $c$-smaller-better properties.
\end{restatable}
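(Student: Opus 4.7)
The plan is to refute both properties by exhibiting, for an arbitrary $c \in (0,1)$, concrete violating instances obtained from elementary analytic facts about the CPMM cost
\[
G_{\textit{CPMM}}(\poolcontent{}{}{A},\poolcontent{}{}{B},\pooloutput{}{}{A}) \;=\; \frac{1}{\gamma}\cdot\frac{\poolcontent{}{}{B}\,\pooloutput{}{}{A}}{\poolcontent{}{}{A}-\pooloutput{}{}{A}}.
\]
Both properties will be seen to fail in opposite directions: CPMM's cost is \emph{superadditive} in the output amount, and \emph{decreasing} in the pool size (at fixed reported price). These are properties of the CPMM curve itself; there is no real obstacle, only a careful choice of parameters that simultaneously witnesses each failure while satisfying the properties' hypotheses.

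For the $c$-Non-Splitting failure, I would first compute the partial derivatives with respect to $\pooloutput{}{}{A}$, obtaining $\partial_O G = R^A R^B/[\gamma (R^A-O)^2]$ and $\partial^2_O G = 2 R^A R^B/[\gamma (R^A-O)^3] > 0$ on $[0, R^A)$. Thus $G$ is strictly convex in $\pooloutput{}{}{A}$ and satisfies $G|_{\pooloutput{}{}{A}=0}=0$. The standard fact that a strictly convex function vanishing at $0$ is strictly \emph{super}additive on the positive reals then gives, for any $O_1,O_2>0$ with $O_1+O_2 < R^A$,
\[
G_{\textit{CPMM}}(R^A,R^B,O_1+O_2) \;>\; G_{\textit{CPMM}}(R^A,R^B,O_1)+G_{\textit{CPMM}}(R^A,R^B,O_2),
\]
which is the reverse of the inequality $c$-Non-Splitting demands. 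Since for any $c \in (0,1)$ one can pick $\tilde O^A$ and a split so that $\tilde O^A/R^A \le c$, the property is violated regardless of $c$.

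For the $c$-smaller-better failure, I would fix the common price ratio $k = R^B/R^A$ and rewrite
\[
G_{\textit{CPMM}}(R^A,kR^A,O)\;=\;\frac{kO}{\gamma}\Bigl(1+\frac{O}{R^A-O}\Bigr),
\]
whose second factor is strictly decreasing in $R^A$ for any fixed $O\in(0,R^A)$. Hence given two pools with $\poolcontent{i}{}{A} < \poolcontent{j}{}{A}$ but equal price ratio, trading the same $\pooloutput{}{}{A}$ is strictly \emph{cheaper} in the larger pool, again reversing the property. Choosing $\pooloutput{}{}{A}$ so small that $\pooloutput{}{}{A}/\poolcontent{i}{}{A}\le c$ ensures the hypothesis holds, so this counterexample too works for every $c\in(0,1)$.

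The main thing to be careful about is quantifier handling: both properties are universal over admissible trades, so I only need one admissible counterexample per property. The strictness of both inequalities follows from strict convexity and strict monotonicity respectively, so the violations are not merely borderline. Writing the proof is then just assembling these two one-line calculus observations, noting in each case that the hypothesis on the ratio $\pooloutput{}{}{A}/\poolcontent{}{}{A}$ can always be met by shrinking $\pooloutput{}{}{A}$.
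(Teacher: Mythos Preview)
Your proposal is correct and follows essentially the same approach as the paper: exhibit, for each property, a single admissible instance where the CPMM inequality goes the wrong way. The only cosmetic difference is that the paper picks a concrete split $\tilde O^A = cR^A$, $O_1^A = O_2^A = \tfrac{c}{2}R^A$ and computes the inequalities directly, whereas you invoke strict convexity in $\pooloutput{}{}{A}$ (hence superadditivity) and strict monotonicity in $\poolcontent{}{}{A}$ at fixed price ratio; both arguments amount to the same elementary observation about $\frac{R^B O}{R^A - O}$.
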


\begin{proof}
	For any $0< c < 1$, it is sufficient to find a single case where each property does not hold.
	For the $c$-Non-Splitting property, we consider the cost of getting $\tilde{O}^A = c\poolcontent{}{}{A}$ \textit{token~A} and $\pooloutput{1}{}{A} = \pooloutput{2}{}{A} = \frac{c}{2}\poolcontent{}{}{A}$, the gross amount of getting $\tilde{O}^A = c\poolcontent{}{}{A}$ is larger than getting $\pooloutput{1}{}{A}$ and $\pooloutput{2}{}{A}$ respectively:
	\begin{align*}
		&G_{\textit{CPMM}}(\poolcontent{}{}{A},\poolcontent{}{}{B}, \tilde{O}^A) \\
		\stackrel{(\ref{net_amount_cpmm})}{=} &\frac{1}{\gamma}	\left(
		\frac{\poolcontent{}{}{B} \times c\poolcontent{}{}{A}}{\poolcontent{}{}{A} - c\poolcontent{}{}{A}}
		\right)\\
		=& \frac{1}{\gamma}	\left(
			\frac{\poolcontent{}{}{B} \times \frac{c}{2}\poolcontent{}{}{A}}{\poolcontent{}{}{A} - c\poolcontent{}{}{A}}
			\right)+ \frac{1}{\gamma}	\left(
				\frac{\poolcontent{}{}{B} \times \frac{c}{2}\poolcontent{}{}{A}}{\poolcontent{}{}{A} - c\poolcontent{}{}{A}}
				\right) \\
		>&\frac{1}{\gamma}	\left(
			\frac{\poolcontent{}{}{B} \times \frac{c}{2}\poolcontent{}{}{A}}{\poolcontent{}{}{A} - \frac{c}{2}\poolcontent{}{}{A}}
			\right)+ \frac{1}{\gamma}	\left(
				\frac{\poolcontent{}{}{B} \times \frac{c}{2}\poolcontent{}{}{A}}{\poolcontent{}{}{A} - \frac{c}{2}\poolcontent{}{}{A}}
				\right) \\
			\stackrel{(\ref{net_amount_cpmm})}{=}& G_{\textit{CPMM}}(\poolcontent{}{}{A},\poolcontent{}{}{B}, \pooloutput{1}{}{A}) + G_{\textit{CPMM}}(\poolcontent{}{}{A},\poolcontent{}{}{B}, \pooloutput{2}{}{A}) \,\, .
	\end{align*}
	Therefore, the CPMM cost function does not satisfy the $c$-Non-Splitting property.
	
	Next, we turn to the $c$-smaller-better property.
	Considering two shards with deposited token amounts $(\poolcontent{i}{}{A},\poolcontent{i}{}{B})$ and $(\poolcontent{j}{}{A},\poolcontent{j}{}{B})$, respectively,  where $\poolcontent{i}{}{A} < \poolcontent{j}{}{A}, \frac{\poolcontent{i}{}{A}}{\poolcontent{i}{}{B}} =  \frac{\poolcontent{j}{}{A}}{\poolcontent{j}{}{B}}$, for any output amount $0 < \pooloutput{}{}{A} \leq c\poolcontent{i}{}{A}$, consider the gross amount of getting $\pooloutput{}{}{A}$ \textit{token~A}, we have
	\begin{align*}
		G_{\textit{CPMM}}(\poolcontent{i}{}{A},\poolcontent{i}{}{B}, \pooloutput{}{}{A}) \stackrel{(\ref{net_amount_cpmm})}{=}& \frac{1}{\gamma}	\left(
		\frac{\poolcontent{i}{}{B} \times \pooloutput{}{}{A}}{\poolcontent{i}{}{A} - \pooloutput{}{}{A}}
		\right) \nonumber \\
		=& \frac{1}{\gamma}	\left(
		\frac{ \frac{\poolcontent{i}{}{B}}{\poolcontent{i}{}{A}}
			 \times \pooloutput{}{}{A}}{ 1 - \frac{\pooloutput{}{}{A}}{\poolcontent{i}{}{A}}}\right) \nonumber \\
		=&\frac{1}{\gamma}	\left(
		\frac{ \frac{\poolcontent{j}{}{B}}{\poolcontent{j}{}{A}}
				 \times \pooloutput{}{}{A}}{ 1 - \frac{\pooloutput{}{}{A}}{\poolcontent{i}{}{A}}}\right) \nonumber \\
		=&\frac{1}{\gamma}	\left(
			\frac{ \poolcontent{j}{}{B}
			\times \pooloutput{}{}{A}}{ \poolcontent{j}{}{A} - \frac{\poolcontent{j}{}{A}}{\poolcontent{i}{}{A}}\pooloutput{}{}{A}}\right) \nonumber \\	 
		>& \frac{1}{\gamma}	\left(
		\frac{\poolcontent{j}{}{B} \times \pooloutput{}{}{A}}{\poolcontent{j}{}{A} - \pooloutput{}{}{A}}
		\right) \nonumber \\
		\stackrel{(\ref{net_amount_cpmm})}{=}& G_{\textit{CPMM}}(\poolcontent{j}{}{A},\poolcontent{j}{}{B}, \pooloutput{}{}{A}) \,\, .
	\end{align*}
	Therefore, the CPMM cost function does not satisfy the $c$-smaller-better property as well.
\end{proof}

\section{Proof of Proposition~\ref{theorem:parameter1}}\label{app:proof_necessary}
\begin{restate}{Proposition~\ref{theorem:parameter1}}
	Let $\textit{tf}_{\textit{SAMM}}(\poolcontent{}{}{A}, \poolcontent{}{}{B},  \pooloutput{}{}{A}) = \textit{tf}_{\textit{BRP}}(\poolcontent{}{}{A}, \poolcontent{}{}{B},  \pooloutput{}{}{A})$, then the following conditions are necessary for $c$-smaller-better
	to hold for $G_{\textit{SAMM}}(\poolcontent{}{}{A},\poolcontent{}{}{B}, \pooloutput{}{}{A})$:
	\begin{enumerate}
		\item $\beta_3 = 0$, \label{itm:c3Zero}\label{itm1}
		\item $\beta_2 + \beta_4 = 0$, \label{itm:c2c4Zero}
		\item $\beta_1 < 0$, \label{itm:c1Negative} 	
		\item $0 < \beta_4 \leq 1$, \label{itm:c4range}
		\item $r_{\min} < \beta_5 \leq r_{\max}$,\label{itm:c5range} and 
		\item $\frac{\beta_5 - r_{\min}}{-\beta_1} \geq c^{\beta_4}$.  \label{itm:51c4}\label{itmLast}
	\end{enumerate}
\end{restate}

\begin{proof}
	
	We would first give some common prefixes for the required items, and then obtain them one by one.
	
	The $c$-smaller-better property considers two shards~$i$ and~$j$, assuming their reported prices are identical. 
	Denote the inverse of this price by~$c^\textit{AB}$, so $\frac{\poolcontent{i}{}{A}}{\poolcontent{i}{}{B}}=\frac{\poolcontent{j}{}{A}}{\poolcontent{j}{}{B}} = \frac{1}{c^{\textit{AB}}} > 0$. The property requirement (Equation~\ref{smallerpoolsmallercost}) becomes

	\begin{equation*}
		G_{\textit{SAMM}}(\poolcontent{i}{}{A},c^{\textit{AB}}\poolcontent{i}{}{A}, \pooloutput{}{}{A}) < G_{\textit{SAMM}}(\poolcontent{j}{}{A},c^{\textit{AB}}\poolcontent{j}{}{A}, \pooloutput{}{}{A}) \,\, .
	\end{equation*}
	The function $G_{\textit{SAMM}}(\poolcontent{}{}{A},c^{\textit{AB}}\poolcontent{}{}{A}, \pooloutput{}{}{A})$ is differentiable, and by assumption~$\poolcontent{i}{}{A} < \poolcontent{i}{}{B}$, therefore a necessary condition 
	for this inequality to hold is
	\begin{equation} \label{differentialNonNegative}
		\frac{dG_{\textit{SAMM}}(\poolcontent{}{}{A},c^{\textit{AB}}\poolcontent{}{}{A}, \pooloutput{}{}{A})}{d\poolcontent{}{}{A}} \geq 0 \,\, . 
	\end{equation}
	
	If the polynomial value is greater than the bound, $\beta_1 (\poolcontent{}{}{A}) ^{\beta_2}(\poolcontent{}{}{B})^{\beta_3} (\pooloutput{}{}{A})^{\beta_4} + \beta_5 >  r_{\max}$, then from Equations~\ref{eq:netamount},~\ref{cost_function}, and~\ref{tf_bounded_ratio}, the gross amount becomes 
	\begin{equation} \label{eq:grossAmountAboveBound}
		G_{\textit{SAMM}}(\poolcontent{}{}{A},c^{\textit{AB}}\poolcontent{}{}{A}, \pooloutput{}{}{A}) = c^{\textit{AB}}r_{\max}\pooloutput{}{}{A} + \frac{c^{\textit{AB}}\poolcontent{}{}{A} \times \pooloutput{}{}{A}}{\poolcontent{}{}{A} - \pooloutput{}{}{A}} \,\, ,
	\end{equation}
	so the derivative is 
	\begin{equation}\label{impossible_dif}
		\frac{dG_{\textit{SAMM}}(\poolcontent{}{}{A},c^{\textit{AB}}\poolcontent{}{}{A}, \pooloutput{}{}{A})}{d\poolcontent{}{}{A}} = - \frac{c^{\textit{AB}}(\pooloutput{}{}{A})^2}{(\poolcontent{}{}{A} - \pooloutput{}{}{A})^2} < 0 \,\, ,
	\end{equation}
	contradicting Equation~\ref{differentialNonNegative}.
	Therefore, for the property to hold we need $\beta_1  (\poolcontent{}{}{A}) ^{\beta_2} (\poolcontent{}{}{B})^{\beta_3}  (\pooloutput{}{}{A})^{\beta_4} + \beta_5 \leq  r_{\max}$.
	A similar situation occurs when $\beta_1 = 0$ or ${\beta_1  (\poolcontent{}{}{A}) ^{\beta_2} (\poolcontent{}{}{B})^{\beta_3}  (\pooloutput{}{}{A})^{\beta_4} + \beta_5 <  r_{\min}}$, where 
	
	\begin{multline*}
		G_{\textit{SAMM}}(\poolcontent{}{}{A},c^{\textit{AB}}\poolcontent{}{}{A}, \pooloutput{}{}{A}) = \\c^{\textit{AB}} \max\{
			r_{\min} , 
			\min\{
			r_{\max}, \beta_5
			\}	
			\}	\pooloutput{}{}{A} + \frac{c^{\textit{AB}}\poolcontent{}{}{A} \times \pooloutput{}{}{A}}{\poolcontent{}{}{A} - \pooloutput{}{}{A}} \,\, ,
	\end{multline*}
	or
	\begin{equation*}
		G_{\textit{SAMM}}(\poolcontent{}{}{A},c^{\textit{AB}}\poolcontent{}{}{A}, \pooloutput{}{}{A}) = c^{\textit{AB}}r_{\min}\pooloutput{}{}{A} + \frac{c^{\textit{AB}}\poolcontent{}{}{A} \times \pooloutput{}{}{A}}{\poolcontent{}{}{A} - \pooloutput{}{}{A}} \,\, .
	\end{equation*}
	Therefore, we also need $\beta_1 \neq 0$, and ${\beta_1  (\poolcontent{}{}{A}) ^{\beta_2} (\poolcontent{}{}{B})^{\beta_3}  (\pooloutput{}{}{A})^{\beta_4} + \beta_5 \geq  r_{\min}}$, to avoid a similar contradiction with Equation~\ref{impossible_dif}. 
	Thus, we require~$\beta_1 \neq 0$ and the polynomial to be in the range 
	\begin{equation}\label{limited_poly}
	r_{\min} \leq \beta_1  (\poolcontent{}{}{A}) ^{\beta_2} (c^{\textit{AB}}\poolcontent{}{}{A})^{\beta_3}  (\pooloutput{}{}{A})^{\beta_4} + \beta_5 \leq  r_{\max} \,\, .
	\end{equation}
	
	Since the output amount~$\pooloutput{}{}{A}$ can be arbitrarily close to zero, for~$(\pooloutput{}{}{A})^{\beta_4}$ to be bounded we require 
	\begin{equation}\label{c4pos}
		\beta_4 \geq 0 \,\, . 
	\end{equation}
	
	Since $\pooloutput{}{}{A}$ can be arbitrarily close to zero, $\beta_1  (\poolcontent{}{}{A}) ^{\beta_2} (c^{\textit{AB}}\poolcontent{}{}{A})^{\beta_3}  (\pooloutput{}{}{A})^{\beta_4}$ can be arbitrarily close to zero, so from Equation~\ref{limited_poly} we require that
	\begin{equation}\label{c51}
		r_{\min} \leq \beta_5 \leq  r_{\max} \,\, .
	\end{equation}
	
	Here, we start to derive necessary items from the properties.
	We rewrite Equation~\ref{limited_poly} as
	\begin{equation*}
		r_{\min} \leq \left(c^{\textit{AB}}\right)^{\beta_3} \beta_1 (\poolcontent{}{}{A}) ^{\beta_2 + \beta_3+\beta_4} \left(
			\frac{\pooloutput{}{}{A}}{\poolcontent{}{}{A}}
		\right)^{\beta_4} + \beta_5 \leq  r_{\max} \,\, . 
	\end{equation*}
	The c-Smaller-Better property should hold for all reported prices, that is, this inequality should hold for all $c^{\textit{AB}}$.
	But if~$\beta_3 \neq 0$, the first element $\left(c^{\textit{AB}}\right)^{\beta_3}$ can be arbitrarily large, and since $\beta_1 (\poolcontent{}{}{A}) ^{\beta_2 + \beta_3+\beta_4} \left( \frac{\pooloutput{}{}{A}}{\poolcontent{}{}{A}} \right)^{\beta_4} \neq 0$, the whole expression $\left(c^{\textit{AB}}\right)^{\beta_3} \beta_1 (\poolcontent{}{}{A}) ^{\beta_2 + \beta_3+\beta_4} \left(
		\frac{\pooloutput{}{}{A}}{\poolcontent{}{}{A}}
	\right)^{\beta_4} + \beta_5$ is unbounded. 
	Therefore, we obtain Item~\ref{itm:c3Zero}: 
	\begin{equation}\label{c3value}
		\beta_3 = 0 \,\,.
	\end{equation}
	
	Similarly, we need to bound the expression $(\poolcontent{}{}{A}) ^{\beta_2 + \beta_3+\beta_4} \left( \frac{\pooloutput{}{}{A}}{\poolcontent{}{}{A}} \right)^{\beta_4}$. Since all positive values for~$\poolcontent{}{}{A}$ are possible and all output ratios are bounded~$0 < \frac{\pooloutput{}{}{A}}{\poolcontent{}{}{A}} < c$, to keep the expression bounded for all such values we require $\beta_2 + \beta_3 + \beta_4 = 0$, and since we already saw~$\beta_3 = 0$, we obtain Item~\ref{itm:c2c4Zero}:
	\begin{equation}\label{c2c4}
		\beta_2 + \beta_4 = 0 \,\, .
	\end{equation}
	
	Now, if~$\beta_2 = \beta_4 = 0$, the expression of Equation~\ref{limited_poly} becomes 
	$\beta_1  (\poolcontent{}{}{A}) ^{\beta_2} (c^{\textit{AB}}\poolcontent{}{}{A})^{\beta_3}  (\pooloutput{}{}{A})^{\beta_4} + \beta_5 = \beta_1 + \beta_5$, so the gross amount is 
	\begin{multline*}
		G_{\textit{SAMM}}(\poolcontent{}{}{A},c^{\textit{AB}}\poolcontent{}{}{A}, \pooloutput{}{}{A}) = \\ c^{\textit{AB}} \max\{
			r_{\min} , 
			\min\{
			r_{\max}, \beta_1 + \beta_5
			\}	
			\}	\pooloutput{}{}{A} + \frac{c^{\textit{AB}}\poolcontent{}{}{A} \times \pooloutput{}{}{A}}{\poolcontent{}{}{A} - \pooloutput{}{}{A}} \,\, . 
	\end{multline*}
	So, as in Equation~\ref{eq:grossAmountAboveBound}, the derivative is negative, a contradiction. 
	Therefore, we have~$\beta_4 \neq 0$, and due to Equation~\ref{c4pos} we obtain
	\begin{equation}\label{c2c4label}
		\beta_4 >0, \beta_2 < 0 \,\, .
	\end{equation}
	
	Combining the constraints we just found (Equations~\ref{c3value} and~\ref{c2c4}) into the gross amount expression (Equations~\ref{cost_function} and \ref{tf_bounded_ratio}) we have
	\begin{multline*}
		G_{\textit{SAMM}}(\poolcontent{}{}{A},c^{\textit{AB}}\poolcontent{}{}{A}, \pooloutput{}{}{A}) = \\
		c^{\textit{AB}}\pooloutput{}{}{A} \times \left(\beta_1  (\poolcontent{}{}{A}) ^{-\beta_4}  (\pooloutput{}{}{A})^{\beta_4} + \beta_5\right)
		+ \frac{c^{\textit{AB}}\poolcontent{}{}{A} \times \pooloutput{}{}{A}}{\poolcontent{}{}{A} - \pooloutput{}{}{A}} 
	\end{multline*}
	and its derivative is 
	\begin{multline}\label{eq:simpliedderive}
	\frac{dG_{\textit{SAMM}}(\poolcontent{}{}{A},c^{\textit{AB}}\poolcontent{}{}{A}, \pooloutput{}{}{A})}{d\poolcontent{}{}{A}}=\\
		-c^{\textit{AB}}\beta_1\beta_4  \left(
			\frac{\pooloutput{}{}{A}}{\poolcontent{}{}{A}}
		\right)^{\beta_4+1} - \frac{c^{\textit{AB}}(\pooloutput{}{}{A})^2}{(\poolcontent{}{}{A} - \pooloutput{}{}{A})^2} 
	\end{multline}
	
	The second element~${\frac{c^{\textit{AB}}(\pooloutput{}{}{A})^2}{(\poolcontent{}{}{A} - \pooloutput{}{}{A})^2}}$ is positive, so to keep the derivative non-negative, the first element 
	$c^{\textit{AB}}\beta_1\beta_4  \left( \frac{\pooloutput{}{}{A}}{\poolcontent{}{}{A}} 	\right)^{\beta_4+1}$
	must be negative. 
	Since $c^{\textit{AB}} > 0$, $\beta_4 >0$, and $\left(\frac{\pooloutput{}{}{A}}{\poolcontent{}{}{A}} \right)^{\beta_4+1} > 0$, we obtain Item~\ref{itm:c1Negative}:
	\begin{equation*}
		\beta_1 < 0 \,\, . 
	\end{equation*}
	
	Since the derivative is non-negative, from Equation~\ref{eq:simpliedderive} we have
	\begin{equation*}
		-c^{\textit{AB}}\beta_1\beta_4  \left(
			\frac{\pooloutput{}{}{A}}{\poolcontent{}{}{A}}
		\right)^{\beta_4+1} \geq \frac{c^{\textit{AB}}(\pooloutput{}{}{A})^2}{(\poolcontent{}{}{A} - \pooloutput{}{}{A})^2} 
	\end{equation*}
	Since $\pooloutput{}{}{A} < c \times \poolcontent{}{}{A} < \poolcontent{}{}{A}$, we have $(\poolcontent{}{}{A} - \pooloutput{}{}{A})^2 < (\poolcontent{}{}{A})^2$.
	Therefore, we have 
	\begin{equation*}
		-c^{\textit{AB}}\beta_1\beta_4  \left(
			\frac{\pooloutput{}{}{A}}{\poolcontent{}{}{A}}
		\right)^{\beta_4+1} \geq \frac{c^{\textit{AB}}(\pooloutput{}{}{A})^2}{(\poolcontent{}{}{A})^2} \,\, .
	\end{equation*}
	By multiplying both sides by $\frac{(\poolcontent{}{}{A})^{\beta_4 + 1}}{c^{\textit{AB}}(\pooloutput{}{}{A})^{\beta_4 + 1}}$ which is positive, the above inequality is equal to:
	\begin{equation*}
		-\beta_1\beta_4 \geq \left(
			\frac{\pooloutput{}{}{A}}{\poolcontent{}{}{A}}
		\right)^{1-\beta_4} \,\, .
	\end{equation*}
	Since $\frac{\pooloutput{}{}{A}}{\poolcontent{}{}{A}}$ can be arbitrarily close to zero, if $\beta_4 > 1$, the right side of the above inequality can be arbitrarily large, so we require $\beta_4 <= 1$.
	Combining equation~\ref{c2c4label}, we obtain Item~\ref{itm:c4range}:
	\begin{equation*}
		0 < \beta_4 \leq 1 \,\, .
	\end{equation*}

	From Equation~\ref{limited_poly}, we have 
	\begin{equation*}
		\beta_1  (\poolcontent{}{}{A}) ^{\beta_2} (\poolcontent{}{}{B})^{\beta_3}  (\pooloutput{}{}{A})^{\beta_4} + \beta_5  \geq r_{\min}.
	\end{equation*}
	And since $\beta_1 < 0$, we have $\beta_1  (\poolcontent{}{}{A}) ^{\beta_2} (\poolcontent{}{}{B})^{\beta_3}  (\pooloutput{}{}{A})^{\beta_4} < 0$ so $\beta_5 > r_{\min}$. This allows us to make the first inequality of Equation~\ref{c51} strict, which gives us Item~\ref{itm:c5range}: 
	\begin{equation*}
		r_{\min} < \beta_5 \leq r_{\max} \,\, .
	\end{equation*}
	
	From Equations~\ref{limited_poly} and \ref{c2c4}, we have
	\begin{equation*}
		r_{\min} \leq \beta_1 \left(
			\frac{\pooloutput{}{}{A}}{\poolcontent{}{}{A}}
		\right)^{\beta_4} + \beta_5 \leq r_{\max} \,\, .
	\end{equation*}
	
	Since $\beta_1 <0$ and~$\beta_5 \leq r_{\max}$, the right side of the above inequality always holds.
	From the left side, we have
	\begin{equation*}
		\frac{\beta_5 - r_{\min}}{-\beta_1} \geq \left(
			\frac{\pooloutput{}{}{A}}{\poolcontent{}{}{A}}
		\right)^{\beta_4} \,\, .
	\end{equation*}
	Since the above equation holds for all $\frac{\pooloutput{}{}{A}}{\poolcontent{}{}{A}} \in (0,c)$, we obtain Item~\ref{itm:51c4}:
	\begin{equation*}
		\frac{\beta_5 - r_{\min}}{-\beta_1} \geq c^{\beta_4} \,\, . 
	\end{equation*}
	
	We have now shown all constraints~\ref{itm1}--\ref{itmLast} hold.

	\end{proof}

\section{Proof of Theorem~\ref{theorem:parameter2}}\label{app:proof_sufficient}

\begin{restate}{Theorem~\ref{theorem:parameter2}}
	Let $\textit{tf}_{\textit{SAMM}}(\poolcontent{}{}{A}, \poolcontent{}{}{B},  \pooloutput{}{}{A}) = \textit{tf}_{\textit{BRP}}(\poolcontent{}{}{A}, \poolcontent{}{}{B},  \pooloutput{}{}{A})$, if $\beta_1 < 0, \beta_2 + \beta_4 = 0, \beta_3 =0, 0 <\beta_4 \leq 1,  r_{\min} < \beta_5 \leq r_{\max}$ and $\frac{\beta_5 - r_{\min}}{-\beta_1} \geq c^{\beta_4}$, then following items are sufficient for the $c$-Non-Splitting and $c$-smaller-better properties to hold for $G_{\textit{SAMM}}(\poolcontent{}{}{A},\poolcontent{}{}{B}, \pooloutput{}{}{A})$:
	\begin{enumerate}
		\item $\beta_1\beta_4(\beta_4+1)c^{\beta_4-1}(1-c)^3\leq -2$\label{itm:beta1beta4}
		\item $-\beta_1\beta_4 \geq\frac{c^{1-\beta_4}}{(1-c)^2}$\label{itm:beta1beta4c}
	\end{enumerate}
\end{restate}

\begin{proof}
	Initially, we expand and simplify the form of the gross amount function according to our assumptions.
	Then, we prove that Item~\ref{itm:beta1beta4} is sufficient for the $c$-Non-Splitting property to hold.
	Finally, we prove that Item~\ref{itm:beta1beta4c} is sufficient for the $c$-smaller-better property to hold.

	Since $\beta_1< 0, r_{\min} < \beta_5$, we have $\beta_1  (\poolcontent{}{}{A}) ^{\beta_2} (c^{\textit{AB}}\poolcontent{}{}{A})^{\beta_3}  (\pooloutput{}{}{A})^{\beta_4} + \beta_5 \leq  r_{\max}$.
	Since $\frac{\beta_5 - r_{\min}}{-\beta_1} \geq c^{\beta_4}$, we have $r_{\min} \leq \beta_5 + \beta_1c^{\beta_4}$.

	Since $\beta_2 = - \beta_4, \beta_3 =0, \frac{\pooloutput{}{}{A}}{\poolcontent{}{}{A}}\leq c$, we have

	\begin{align*}
		\beta_1  (\poolcontent{}{}{A}) ^{\beta_2} (c^{\textit{AB}}\poolcontent{}{}{A})^{\beta_3}  (\pooloutput{}{}{A})^{\beta_4} + \beta_5 =& \beta_1\times \left(\frac{\pooloutput{}{}{A}}{\poolcontent{}{}{A}} \right)^{\beta_4} + \beta_5\\
		&\geq \beta_5 + \beta_1c^{\beta_4}\\
		&\geq r_{\min}
	\end{align*}
	Then we can expand the trading fee function and the gross amount function to:
	\begin{equation*}
		\textit{tf}_{\textit{SAMM}}(\poolcontent{}{}{A}, \poolcontent{}{}{B},  \pooloutput{}{}{A}) = \frac{\poolcontent{}{}{B}}{\poolcontent{}{}{A}} \times \pooloutput{}{}{A} \times \left(
		\beta_1\times \left(\frac{\pooloutput{}{}{A}}{\poolcontent{}{}{A}} \right)^{\beta_4} + \beta_5
		\right)
	\end{equation*}
	and
	\begin{multline}\label{proof_sufficient_cost}
		G_{\textit{SAMM}}(\poolcontent{}{}{A},\poolcontent{}{}{B}, \pooloutput{}{}{A}) = \\
		\frac{\poolcontent{}{}{B}}{\poolcontent{}{}{A}} \times \pooloutput{}{}{A} \times \left(
			\beta_1\times \left(\frac{\pooloutput{}{}{A}}{\poolcontent{}{}{A}} \right)^{\beta_4} + \beta_5
			\right)
		+ \frac{\poolcontent{}{}{B} \times \pooloutput{}{}{A}}{\poolcontent{}{}{A} - \pooloutput{}{}{A}} \,\, .
	\end{multline}

	Now we start to prove the $c$-Non-Splitting property holds.
	We first show that if the $c$-Non-Splitting property holds for $m=2$, then it holds for any $m>2$.
	Then, we prove that it holds for $m=2$ when Item~\ref{itm:beta1beta4} holds.
	
	Consider the case of $m=2$, where for $\pooloutput{1}{}{A}, \pooloutput{2}{}{A}> 0$, the gross amount of acquiring $\pooloutput{1}{}{A}+ \pooloutput{2}{}{A}$ is less than the sum of the gross amounts of acquiring $\pooloutput{1}{}{A}$ and $\pooloutput{2}{}{A}$:
	\begin{multline}\label{eq:2non_splitting}
		G_{\textit{SAMM}}(\poolcontent{}{}{A},\poolcontent{}{}{B}, \pooloutput{1}{}{A}+ \pooloutput{2}{}{A}) <\\
		  G_{\textit{SAMM}}(\poolcontent{}{}{A},\poolcontent{}{}{B}, \pooloutput{1}{}{A}) + G_{\textit{SAMM}}(\poolcontent{}{}{A},\poolcontent{}{}{B}, \pooloutput{2}{}{A}) \,\, .
	\end{multline}
	For $m=3$, we can get that the total gross amount of acquiring $\pooloutput{1}{}{A}$, $\pooloutput{2}{}{A}$ and $\pooloutput{3}{}{A}$ separately is less than the gross amounts of acquiring $\pooloutput{1}{}{A} + \pooloutput{2}{}{A} + \pooloutput{3}{}{A}$ in one time:
	\begin{align*}
		&\sum_{j=1}^3 G_{\textit{SAMM}}(\poolcontent{}{}{A},\poolcontent{}{}{B}, \pooloutput{j}{}{A})\nonumber \\ = &\left(
			G_{\textit{SAMM}}(\poolcontent{}{}{A},\poolcontent{}{}{B}, \pooloutput{1}{}{A}) + G_{\textit{SAMM}}(\poolcontent{}{}{A},\poolcontent{}{}{B}, \pooloutput{2}{}{A})
		\right) \nonumber \\
		& + G_{\textit{SAMM}}(\poolcontent{}{}{A},\poolcontent{}{}{B}, \pooloutput{3}{}{A})\nonumber\\
		\stackrel{(\ref{eq:2non_splitting})}{>} & G_{\textit{SAMM}}(\poolcontent{}{}{A},\poolcontent{}{}{B}, \pooloutput{1}{}{A} + \pooloutput{2}{}{A}) + G_{\textit{SAMM}}(\poolcontent{}{}{A},\poolcontent{}{}{B}, \pooloutput{3}{}{A})\nonumber\\
		\stackrel{(\ref{eq:2non_splitting})}{>}& G_{\textit{SAMM}}(\poolcontent{}{}{A},\poolcontent{}{}{B}, \pooloutput{1}{}{A} + \pooloutput{2}{}{A} + \pooloutput{3}{}{A}) \,\, .
	\end{align*}

	This can be easily generalized to any $m > 2$.
	Therefore, we only need to prove the $c$-Non-Splitting property for $m=2$, which is shown in Equation~\ref{eq:2non_splitting}.

	Since $G_{SAMM}(\poolcontent{}{}{A},\poolcontent{}{}{B}, 0) = 0$, Equation~\ref{eq:2non_splitting} is equivalent to
	\begin{multline}
		G_{SAMM}(\poolcontent{}{}{A},\poolcontent{}{}{B}, 0) + G_{\textit{SAMM}}(\poolcontent{}{}{A},\poolcontent{}{}{B}, \pooloutput{1}{}{A}+ \pooloutput{2}{}{A}) < \\  G_{\textit{SAMM}}(\poolcontent{}{}{A},\poolcontent{}{}{B}, \pooloutput{1}{}{A}) + G_{\textit{SAMM}}(\poolcontent{}{}{A},\poolcontent{}{}{B}, \pooloutput{2}{}{A}) \nonumber \,\, .
	\end{multline}
	The above inequality holds when $G_{\textit{SAMM}}(\poolcontent{}{}{A},\poolcontent{}{}{B}, \pooloutput{}{}{A})$ is strictly concave over $\pooloutput{}{}{A}$.
	A sufficient condition for strict concavity is the second derivative of $G_{\textit{SAMM}}(\poolcontent{}{}{A},\poolcontent{}{}{B}, \pooloutput{}{}{A})$ to be negative for all $0 < \pooloutput{}{}{A} < c\times \poolcontent{}{}{A}$:
	\begin{equation*}
		\frac{d^2G_{\textit{SAMM}}(\poolcontent{}{}{A},\poolcontent{}{}{B}, \pooloutput{}{}{A})}{d\left(\pooloutput{}{}{A}\right)^2} < 0 \,\, .
	\end{equation*}
	From Equation~\ref{proof_sufficient_cost}, the second derivative of the gross amount function is
	\begin{align*}
		&\frac{d^2G_{\textit{SAMM}}(\poolcontent{}{}{A},\poolcontent{}{}{B}, \pooloutput{}{}{A})}{d\left(\pooloutput{}{}{A}\right)^2} \\
		=& \beta_1\beta_4(\beta_4+1)\left(
			\poolcontent{}{}{A}
		\right)^{-\beta_4-1}\poolcontent{}{}{B}\left(\pooloutput{}{}{A}\right)^{\beta_4-1} + \frac{2\poolcontent{}{}{A}\poolcontent{}{}{B}}{\left(\poolcontent{}{}{A} - \pooloutput{}{}{A}\right)^3} \\
		< & \beta_1\beta_4(\beta_4+1)\left(
			\poolcontent{}{}{A}
		\right)^{-\beta_4-1}\poolcontent{}{}{B}\left(\pooloutput{}{}{A}\right)^{\beta_4-1}  + \frac{2\poolcontent{}{}{A}\poolcontent{}{}{B}}{\left(\poolcontent{}{}{A} - c\poolcontent{}{}{A}\right)^3}\\
		=& \beta_1\beta_4(\beta_4+1)\left(
			\poolcontent{}{}{A}
		\right)^{-\beta_4-1}\poolcontent{}{}{B}\left(\pooloutput{}{}{A}\right)^{\beta_4-1} + \frac{2\poolcontent{}{}{B}}{(1-c)^3\left(\poolcontent{}{}{A}\right)^2}.
	\end{align*} 
	Therefore, a sufficient condition to make the second derivative negative is
	\begin{equation*}
		\beta_1\beta_4(\beta_4+1)\left(
			\poolcontent{}{}{A}
		\right)^{-\beta_4-1}\poolcontent{}{}{B}\left(\pooloutput{}{}{A}\right)^{\beta_4-1} + \frac{2\poolcontent{}{}{B}}{(1-c)^3\left(\poolcontent{}{}{A}\right)^2} \leq 0 .
	\end{equation*}
	The above inequality is equal to 
	\begin{equation}\label{eq:sufficient_form}
		\beta_1\beta_4(\beta_4+1)(1-c)^3\left(
			\frac{\pooloutput{}{}{A}}{\poolcontent{}{}{A}}	
		\right)^{\beta_4-1}\leq -2 \,\, .
	\end{equation}
	The above condition is sufficient for the $c$-Non-Splitting property.

	Since $\beta_4 \leq 1$ and $\frac{\pooloutput{}{}{A}}{\poolcontent{}{}{A}} < c$, we have 
	\begin{multline*}
		\beta_1\beta_4(\beta_4+1)(1-c)^3\left(
			\frac{\pooloutput{}{}{A}}{\poolcontent{}{}{A}}	
		\right)^{\beta_4-1} \leq \\ \beta_1\beta_4(\beta_4+1)c^{\beta_4-1}(1-c)^3 \,\, .
	\end{multline*}

	Therefore, Item~\ref{itm:beta1beta4} is sufficient for Equation~\ref{eq:sufficient_form}, which indicates that the $c$-Non-Splitting property holds under Item~\ref{itm:beta1beta4}.

	Now we turn to the $c$-smaller-better property.
	The $c$-smaller-better property considers two shards~$i$ and~$j$, assuming their reported prices are identical. 
	Denote the inverse of this price by~$c^\textit{AB}$, so $\frac{\poolcontent{i}{}{A}}{\poolcontent{i}{}{B}}=\frac{\poolcontent{j}{}{A}}{\poolcontent{j}{}{B}} = \frac{1}{c^{\textit{AB}}} > 0$.
	A sufficient condition for the $c$-smaller-better property is the derivative of the gross amount function over $\poolcontent{}{}{A}$ to be positive for all $0 < \pooloutput{}{}{A} < c\times \poolcontent{}{}{A}$:
	\begin{equation}\label{differential3}
		\frac{dG_{\textit{SAMM}}(\poolcontent{}{}{A},c^{\textit{AB}}\poolcontent{}{}{A}, \pooloutput{}{}{A})}{d\poolcontent{}{}{A}} > 0 \,\,
	\end{equation}
	From Equation~\ref{proof_sufficient_cost}, we have the derivative:
	\begin{multline*}
		\frac{dG_{\textit{SAMM}}(\poolcontent{}{}{A},c^{\textit{AB}}\poolcontent{}{}{A}, \pooloutput{}{}{A})}{d\poolcontent{}{}{A}} = \\
		-c^{\textit{AB}}\beta_1\beta_4  \left(
			\frac{\pooloutput{}{}{A}}{\poolcontent{}{}{A}}
		\right)^{\beta_4+1} - \frac{c^{\textit{AB}}(\pooloutput{}{}{A})^2}{(\poolcontent{}{}{A} - \pooloutput{}{}{A})^2} \,\, .
	\end{multline*}
	Since $\pooloutput{}{}{A} < c \times \poolcontent{}{}{A}$, we have a lower bound of the derivative:
	\begin{multline*}
		\frac{dG_{\textit{SAMM}}(\poolcontent{}{}{A},c^{\textit{AB}}\poolcontent{}{}{A}, \pooloutput{}{}{A})}{d\poolcontent{}{}{A}} > \\
		-c^{\textit{AB}}\beta_1\beta_4  \left(
			\frac{\pooloutput{}{}{A}}{\poolcontent{}{}{A}}
		\right)^{\beta_4+1} - \frac{c^{\textit{AB}}(\pooloutput{}{}{A})^2}{(\poolcontent{}{}{A} - c\poolcontent{}{}{A})^2} \,\, .
	\end{multline*}
	Therefore, it is a sufficient condition for Equation~\ref{differential3} to hold if the lower bound is non-navigate:
	\begin{equation*}
		-c^{\textit{AB}}\beta_1\beta_4  \left(
			\frac{\pooloutput{}{}{A}}{\poolcontent{}{}{A}}
		\right)^{\beta_4+1} - \frac{c^{\textit{AB}}(\pooloutput{}{}{A})^2}{(\poolcontent{}{}{A} - c\poolcontent{}{}{A})^2} \geq 0 \,\, .
	\end{equation*}
	The above equation is equivalent to
	\begin{equation}\label{eq:sufficient_form2}
		-\beta_1\beta_4 \geq \frac{1}{(1-c)^2} \left(
			\frac{\pooloutput{}{}{A}}{\poolcontent{}{}{A}}
		\right)^{1-\beta_4} \,\, .
	\end{equation}
	Since $ 0 < \frac{\pooloutput{}{}{A}}{\poolcontent{}{}{A}} < c$ and $\beta_4 \leq 1$, we have
	\begin{equation*}
		\frac{c^{1-\beta_4}}{(1-c)^2}
		\geq 
		\frac{1}{(1-c)^2} \left(
			\frac{\pooloutput{}{}{A}}{\poolcontent{}{}{A}}
		\right)^{1-\beta_4}
		 \,\, .
	\end{equation*}

	Therefore, Item~\ref{itm:beta1beta4c} is sufficient for Equation~\ref{eq:sufficient_form2} to hold, which indicates the $c$-smaller-better property holds under Item~\ref{itm:beta1beta4c}.

	In summary, we have shown that Item~\ref{itm:beta1beta4} is sufficient for the $c$-Non-Splitting property to hold, and Item~\ref{itm:beta1beta4c} is sufficient for the $c$-smaller-better property to hold.

\end{proof}

\section{CPMM Equilibrium}\label{app:proof_uniswap_splitting}

In CPMM, the gross amount is linear to the net amount.
Since traders can suffer less from slippage by splitting a trade, the gross amount of splitting a trade is less than the gross amount of trading the same amount at one time.
\begin{restatable}{thm}{theoremuniswapsplitting}
	\label{theorem_uniswap_splitting}
	In $\Gamma_n(\textit{tf}_\textit{CPMM})$, the following strategy $\tau^{BA}(\poolcontentvector{}{}{}, \uac{}{}{\textit{BA}})$ is the only dominant strategy for the \textit{BA} trader, where
	\begin{equation*}
		\tau^{BA}(\poolcontentvector{}{}{}, \uac{}{}{\textit{BA}}) = \begin{cases}
			1, & \text{if } a^{\textit{BA}} = \left(\frac{\poolcontent{1}{}{A}}{\sum \poolcontent{i}{}{A}}\uac{}{}{\textit{BA}}, \cdots, \frac{\poolcontent{n}{}{A}}{\sum \poolcontent{i}{}{A}}\uac{}{}{\textit{BA}} \right)  \\
			0, & \text{Otherwise.}
		  \end{cases}
	\end{equation*}	
\end{restatable}

\begin{proof}
	We start by calculating the best response for the \textit{BA} trader.
	Since the utility $U^{BA}(\poolcontentvector{}{}{},  \uac{}{}{\textit{BA}}, \pi^{BA})$ is a linear combination of the revenue over actions $U^{BA}(\poolcontentvector{}{}{},  \uac{}{}{\textit{BA}}, a^{\textit{BA}})$ (Equation~\ref{utilitybapi}), we can first calculate the optimal action for the \textit{BA} trader.
	Combining the CPMM gross amount (Equation~\ref{net_amount_cpmm}) and the revenue function (Equation~\ref{utilityba}), we obtain
	\begin{align*}
		U^{\textit{BA}}(	\poolcontentvector{}{}{}, \ a^{\textit{BA}}) &= -\sum_i\frac{1}{\gamma}\frac{\frac{p^A}{p^B}\poolcontent{i}{}{A} \uac{i}{}{\textit{BA}}}{\poolcontent{i}{}{A} - \uac{i}{}{\textit{BA}}} \nonumber \\
		&= - \frac{p^A}{\gamma p_B}\sum_i\frac{\poolcontent{i}{}{A} \uac{i}{}{\textit{BA}}}{\poolcontent{i}{}{A} - \uac{i}{}{\textit{BA}}} \,\, ,
	\end{align*}
	where the sum of $\uac{i}{}{\textit{BA}}$ is the total required amount of \textit{token~A}	(Equation~\ref{restriction_ba}).
	
	We first consider the case of two shards and then extend it to the general case.
	We define the function $f(\cdot, \cdot)$ which is proportional to the utility of the trader in $\textit{shard}_i$ and $\textit{shard}_j$:
	\begin{equation*}
		f(\uac{i}{}{\textit{BA}}, \uac{j}{}{\textit{BA}}) \coloneqq
		 - \left(\frac{\poolcontent{i}{}{A} \uac{i}{}{\textit{BA}}}{\poolcontent{i}{}{A} - \uac{i}{}{\textit{BA}}} + \frac{\poolcontent{j}{}{A} \uac{j}{}{\textit{BA}}}{\poolcontent{j}{}{A} - \uac{j}{}{\textit{BA}}}\right) \,\, . 
	\end{equation*}
	Denote by $z \coloneqq \uac{i}{}{\textit{BA}} + \uac{j}{}{\textit{BA}}\leq \uac{}{}{\textit{BA}}$.
	Then, we have
	\begin{equation*}
		f(\uac{i}{}{\textit{BA}}, z - \uac{i}{}{\textit{BA}}) = - \left(\frac{\poolcontent{i}{}{A} \uac{i}{}{\textit{BA}}}{\poolcontent{i}{}{A} - \uac{i}{}{\textit{BA}}} + \frac{\poolcontent{j}{}{A} (z - \uac{i}{}{\textit{BA}})}{\poolcontent{j}{}{A} - (z - \uac{i}{}{\textit{BA}})}\right)\nonumber \,\, .
	\end{equation*}
	The derivative of the above function is
	\begin{equation*}
		\frac{df(\uac{i}{}{\textit{BA}}, z - \uac{i}{}{\textit{BA}})}{d\uac{i}{}{\textit{BA}}} = \frac{(\poolcontent{i}{}{A})^2}{(\poolcontent{i}{}{A} - \uac{i}{}{\textit{BA}})^2} - \frac{(\poolcontent{j}{}{A})^2}{(\poolcontent{j}{}{A} - (z - \uac{i}{}{\textit{BA}}))^2}\nonumber  \,\, .
	\end{equation*}
	If $0 \leq \uac{i}{}{\textit{BA}} < \frac{\poolcontent{i}{}{A}}{\poolcontent{i}{}{A} + \poolcontent{j}{}{A}} z$, then $\frac{df(\uac{i}{}{\textit{BA}}, z - \uac{i}{}{\textit{BA}})}{d\uac{i}{}{\textit{BA}}} > 0$; if ${\frac{\poolcontent{i}{}{A}}{\poolcontent{i}{}{A} + \poolcontent{j}{}{A}} z < \uac{i}{}{\textit{BA}} \leq z}$, then $\frac{df(\uac{i}{}{\textit{BA}}, z - \uac{i}{}{\textit{BA}})}{d\uac{i}{}{\textit{BA}}} < 0$.
	Therefore, $f(\uac{i}{}{\textit{BA}}, z - \uac{i}{}{\textit{BA}})$ is maximized only when $\uac{i}{}{\textit{BA}} = \frac{\poolcontent{i}{}{A}}{\poolcontent{i}{}{A} + \poolcontent{j}{}{A}} z = \frac{\poolcontent{i}{}{A}}{\poolcontent{i}{}{A} + \poolcontent{j}{}{A}}(\uac{i}{}{\textit{BA}} + \uac{j}{}{\textit{BA}})$.
	
	The revenue $U^{\textit{BA}}(	\poolcontentvector{}{}{}, \uac{}{}{\textit{BA}}, a^{\textit{BA}})$ reaches the maximum only when $\forall i, j , \frac{\uac{i}{}{\textit{BA}}}{\uac{j}{}{\textit{BA}}} = \frac{\poolcontent{i}{}{A}}{\poolcontent{j}{}{A}}$, or the trader can replace $\uac{i}{}{\textit{BA}}$ and $\uac{j}{}{\textit{BA}}$ with $\frac{\poolcontent{i}{}{A}}{\poolcontent{i}{}{A} + \poolcontent{j}{}{A}}(\uac{i}{}{\textit{BA}} + \uac{j}{}{\textit{BA}})$ and $\frac{\poolcontent{j}{}{A}}{\poolcontent{i}{}{A} + \poolcontent{j}{}{A}}(\uac{i}{}{\textit{BA}} + \uac{j}{}{\textit{BA}})$ to get higher utility.
	
	Therefore, the only optimal action of the \textit{BA} trader is
	\begin{equation*}
		a^{\textit{BA}} = \left(\frac{\poolcontent{1}{}{A}}{\sum \poolcontent{i}{}{A}}, \cdots, \frac{\poolcontent{n}{}{A}}{\sum \poolcontent{i}{}{A}} \right) \,\, .
	\end{equation*}
	
	Then, the only dominant strategy of the \textit{BA} trader is
	\begin{equation*}
		\tau^{BA}(\poolcontentvector{}{}{}, \uac{}{}{\textit{BA}}) = \begin{cases}
			1, & \text{if } a^{\textit{BA}} = \left(\frac{\poolcontent{1}{}{A}}{\sum \poolcontent{i}{}{A}}, \cdots, \frac{\poolcontent{n}{}{A}}{\sum \poolcontent{i}{}{A}} \right)  \\
			0, & \text{Otherwise.}
		  \end{cases}
	\end{equation*}
	\end{proof}

\section{SAMM Equilibrium} \label{app:gtAnalysis}

We analyze the behavior of players in the game with SAMM.
We first prove that the trader randomly selects a shard to trade when the states of shards are balanced (\S\ref{gt:trader_strategy}).
Then, we show the system tends to the balanced state since liquidity providers invest their tokens in the smallest shards, reducing the difference in the volume of shards (\S\ref{gt:lp_strategy}).

		\subsection{Trader Strategy}\label{gt:trader_strategy}

Consider the case that the system state is $\poolcontentvector{}{}{}= \left(
	\left(\poolcontent{1}{}{A},\poolcontent{1}{}{B}, \poolcontent{1}{}{S} \right),
	\cdots \left(\poolcontent{n}{}{A},\poolcontent{n}{}{B}, 
	\poolcontent{n}{}{S}	 \right) \right)$.
As discussed in Section~\ref{sec:SAMM_parameter}, the SAMM gross amount satisfies the $c$-non-splitting property and $c$-smaller-better property for a certain $0 < c < 1$.
We assume that the required amount of \textit{token~A}, $\uac{}{}{\textit{BA}}$, is at most a fraction~$c$ of the amount of deposited \textit{token~A} in all shards, i.e., 
\begin{equation*}
	\forall 1 \leq i \leq n, \uac{}{}{\textit{BA}} \leq c\poolcontent{i}{}{A} \,\,.
\end{equation*}

\subsubsection{Traders' optimal action}
The $c$-non-splitting property and $c$-smaller-better property give a trader the incentive to randomly select one of the smallest shards to trade all her required tokens.
Recall that $a_i^{\text{BA}}(\uac{}{}{\textit{BA}})$ is the action of acquiring all $\uac{}{}{\textit{BA}}$ \textit{token~A} in $\textit{shard}_i$ (Equation~\ref{eq:single_action}).
We define the set of actions that trade in one of the smallest shards:
\begin{restatable}{defi}{smallestpoolactionset}
	The \emph{Smallest Shard Action Set} is the set of actions that acquire all $\uac{}{}{\textit{BA}}$ \textit{token~A} in one of the smallest shards under state $\poolcontentvector{}{}{}$:
	\begin{equation*}
		\SmallestAction(\uac{}{}{\textit{BA}}, \poolcontentvector{}{}{}) = \left\{
		a_i^{\textit{BA}}(\uac{}{}{\textit{BA}})| \forall j, \poolcontent{i}{}{A} \leq \poolcontent{j}{}{A}
		\right\} \,\, .
	\end{equation*}
\end{restatable}

The cardinality of $\SmallestAction(\uac{}{}{\textit{BA}}, \poolcontentvector{}{}{})$ is the number of smallest shards in $\poolcontentvector{}{}{}$.
We denote this by 

\begin{equation*}
	\SmallestNum(\poolcontentvector{}{}{}) = \left|\SmallestAction(\uac{}{}{\textit{BA}}, \poolcontentvector{}{}{}) \right| \,\, .
\end{equation*}

When the trader selects one of the actions in the smallest shard action set $\SmallestAction(\uac{}{}{\textit{BA}}, \poolcontentvector{}{}{})$, she gets the highest revenue:

\newcommand{\contentlemmaoptimaltraderaction}{In $\Gamma_n(tf_{\textit{SAMM}})$, a trader wants to get $\uac{}{}{\textit{BA}}$ \textit{token~A} when the system state is $\poolcontentvector{}{}{}$.
Then for the action which obtains all $\uac{}{}{\textit{BA}}$ \textit{token~A} in one of the smallest shards with index $i^{\ast}$, where $a_{i^{\ast}}^{\text{BA}}(\uac{}{}{\textit{BA}}) \in \SmallestAction(\uac{}{}{\textit{BA}}, \poolcontentvector{}{}{})$, the trader has no less than the revenue of any other actions:
\begin{equation*}
	\forall a^{\textit{BA}} \in \ActionSpaceBA(\uac{}{}{\textit{BA}}), U^{\textit{BA}}(a_{i^{\ast}}^{\text{BA}}, \poolcontentvector{}{}{}) \geq U^{\textit{BA}}(a^{\textit{BA}}, \poolcontentvector{}{}{}) \,\, .
\end{equation*}}

\begin{restatable}{lem}{lemmaoptimaltraderaction}\label{lemma_optimal_trader_action}
	\contentlemmaoptimaltraderaction
\end{restatable}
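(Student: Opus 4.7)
The plan is to show that for any action $a^{\textit{BA}} = (\uac{1}{}{\textit{BA}}, \ldots, \uac{n}{}{\textit{BA}}) \in \ActionSpaceBA(\uac{}{}{\textit{BA}})$, the total gross amount $\sum_i G_{\textit{SAMM}}(\poolcontent{i}{}{A}, \poolcontent{i}{}{B}, \uac{i}{}{\textit{BA}})$ is at least $G_{\textit{SAMM}}(\poolcontent{i^{\ast}}{}{A}, \poolcontent{i^{\ast}}{}{B}, \uac{}{}{\textit{BA}})$, the cost of trading entirely in a smallest pool with index $i^{\ast}$. Since $U^{\textit{BA}}(\poolcontentvector{}{}{}, a^{\textit{BA}}) = -p^B \sum_i G_{\textit{SAMM}}(\poolcontent{i}{}{A}, \poolcontent{i}{}{B}, \uac{i}{}{\textit{BA}})$, minimizing total cost is equivalent to maximizing utility, which is exactly what the lemma requires. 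The argument proceeds by two successive transformations of the action, each of which can only decrease (never increase) the total gross amount: first rerouting every nonzero piece to a smallest pool, then consolidating those pieces into a single trade.

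For the first transformation, I would apply the $c$-smaller-better property (Property~\ref{smallerpoolsmallercost}) pool by pool. Its equal-ratio precondition $\poolcontent{i}{}{A}/\poolcontent{i}{}{B} = \poolcontent{i^{\ast}}{}{A}/\poolcontent{i^{\ast}}{}{B}$ is automatic from the arbitrageur assumption (Equation~\ref{eq:reportedpriceisprice}), which forces every pool's reported price to equal $p^A/p^B$. Its ratio precondition $\uac{i}{}{\textit{BA}}/\poolcontent{i^{\ast}}{}{A} \leq c$ holds because $\uac{i}{}{\textit{BA}} \leq \uac{}{}{\textit{BA}} \leq c\,\poolcontent{i^{\ast}}{}{A}$ by the lemma's hypothesis. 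Thus for every $i$ with $\poolcontent{i}{}{A} > \poolcontent{i^{\ast}}{}{A}$ and $\uac{i}{}{\textit{BA}} > 0$ the gross amount strictly drops when the piece is moved to pool $i^{\ast}$; for tied-minimum pools the costs are already equal. Summing yields $\sum_i G_{\textit{SAMM}}(\poolcontent{i}{}{A}, \poolcontent{i}{}{B}, \uac{i}{}{\textit{BA}}) \geq \sum_{i:\uac{i}{}{\textit{BA}}>0} G_{\textit{SAMM}}(\poolcontent{i^{\ast}}{}{A}, \poolcontent{i^{\ast}}{}{B}, \uac{i}{}{\textit{BA}})$.

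For the second transformation, the $c$-Non-Splitting property (Property~\ref{concavity}) applied to pool $i^{\ast}$ with the multiset $\{\uac{i}{}{\textit{BA}} : \uac{i}{}{\textit{BA}} > 0\}$ yields $\sum_{i:\uac{i}{}{\textit{BA}}>0} G_{\textit{SAMM}}(\poolcontent{i^{\ast}}{}{A}, \poolcontent{i^{\ast}}{}{B}, \uac{i}{}{\textit{BA}}) \geq G_{\textit{SAMM}}(\poolcontent{i^{\ast}}{}{A}, \poolcontent{i^{\ast}}{}{B}, \uac{}{}{\textit{BA}})$ when two or more pieces are nonzero, and trivially with equality when only one is nonzero. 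Its precondition $\sum_{i} \uac{i}{}{\textit{BA}}/\poolcontent{i^{\ast}}{}{A} = \uac{}{}{\textit{BA}}/\poolcontent{i^{\ast}}{}{A} \leq c$ again follows from the lemma's hypothesis. Chaining the two inequalities delivers the required bound. The main care point is bookkeeping rather than genuine mathematical difficulty: one must verify the $c$-ratio preconditions at every invocation, handle $\uac{i}{}{\textit{BA}} = 0$ pools (which contribute zero cost and can be dropped from the sum), and correctly treat pools tied for the minimum $\poolcontent{}{}{A}$ so that the one-sided inequality in $c$-smaller-better is invoked only where it genuinely applies.
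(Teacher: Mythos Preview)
Your proposal is correct and follows essentially the same approach as the paper's proof: first apply the $c$-smaller-better property to reroute each piece to the smallest pool, then apply the $c$-Non-Splitting property to consolidate into a single trade. Your exposition is in fact somewhat more careful than the paper's, explicitly verifying the equal-ratio and $c$-ratio preconditions and handling the zero-piece and tied-minimum edge cases.
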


\begin{proof}[Proof Sketch]
	
	Due to the $c$-non-splitting property, trading in a single shard is better than trading in multiple shards.
	Then the revenue of trading in one of the smallest shards is no less than that in any other shard due to the $c$-smaller-better property.
\end{proof}

\begin{proof}
	Consider an arbitrary action acquiring $\uac{}{}{\textit{BA}}$ \textit{token~B}, $a^{\textit{BA}} = \left(\uac{1}{}{\textit{BA}}, \cdots, \uac{n}{}{\textit{BA}} \right)\in \ActionSpaceBA(\uac{}{}{\textit{BA}})$.
	Given the state of the shard ${\poolcontentvector{}{}{} = \left(
		\left(\poolcontent{1}{}{A},\poolcontent{1}{}{B}, \poolcontent{1}{}{S} \right),
		\cdots \left(\poolcontent{n}{}{A},\poolcontent{n}{}{B}, \poolcontent{n}{}{S}	 \right) \right)}$, the utility of the trader following action $a^{\textit{BA}}$ is
	\begin{equation}\label{lemma_optimal_trader_action_proof1}
		U^{\textit{BA}}(\poolcontentvector{}{}{}, a^{\textit{BA}}) 
		= -p^B \times \sum_{j=1}^{n}G_{\textit{SAMM}}(\poolcontent{j}{}{A},\frac{p^A}{p_B}\poolcontent{j}{}{A}, \uac{j}{}{\textit{BA}}) \,\, .
	\end{equation}
	Since $a_{i^{\ast}}^{\text{BA}}(\uac{}{}{\textit{BA}}) \in \SmallestAction(\uac{}{}{\textit{BA}}, \poolcontentvector{}{}{})$, we have $\poolcontent{i^{\ast}}{}{A} \leq \poolcontent{j}{}{A}$.
	From the $c$-smaller-better property, for all $j$, the gross amount of getting $\uac{j}{}{\textit{BA}}$ in $\textit{shard}_{i^{\ast}}$ is no larger than that in $\textit{shard}_j$:
	\begin{equation*}
		G_{\textit{SAMM}}(\poolcontent{i^{\ast}}{}{A},\frac{p^A}{p^B}\poolcontent{i^{\ast}}{}{A}, \uac{j}{}{\textit{BA}})
		\leq G_{\textit{SAMM}}(\poolcontent{j}{}{A},\frac{p^A}{p^B}\poolcontent{j}{}{A}, \uac{j}{}{\textit{BA}}) \,\, .
	\end{equation*}
	Summing over all $j$, we have
	\begin{equation}\label{lemma_optimal_trader_action_proof2}
		\sum_{j=1}^{n}G_{\textit{SAMM}}(\poolcontent{i^{\ast}}{}{A},\frac{p^A}{p^B}\poolcontent{i^{\ast}}{}{A}, \uac{j}{}{\textit{BA}})
		\leq \sum_{j=1}^{n}G_{\textit{SAMM}}(\poolcontent{j}{}{A},\frac{p^A}{p^B}\poolcontent{j}{}{A}, \uac{j}{}{\textit{BA}}) \,\, .
	\end{equation}
	From $c$-non-splitting property,	since $\sum_{j=1}^{n} \uac{j}{}{\textit{BA}} = \uac{}{}{\textit{BA}}$, the gross amount of trading $\uac{}{}{\textit{BA}}$ in a shard is no larger than the sum of gross amount of trading $\uac{j}{}{\textit{BA}}$ in the same shard:
	\begin{equation*}
		G_{\textit{SAMM}}(\poolcontent{i^{\ast}}{}{A},\frac{p^A}{p^B}\poolcontent{i^{\ast}}{}{A}, \uac{}{}{\textit{BA}})\leq 
		\sum_{j=1}^{n}G_{\textit{SAMM}}(\poolcontent{i^{\ast}}{}{A},\frac{p^A}{p^B}\poolcontent{i^{\ast}}{}{A}, \uac{j}{}{\textit{BA}})
	\end{equation*}
	Combining with Equation~\ref{lemma_optimal_trader_action_proof2}, we obtain 
	\begin{equation*}
		G_{\textit{SAMM}}(\poolcontent{i^{\ast}}{}{A},\frac{p^A}{p^B}\poolcontent{i^{\ast}}{}{A}, \uac{}{}{\textit{BA}})
		\leq \sum_{j=1}^{n}G_{\textit{SAMM}}(\poolcontent{j}{}{A},\frac{p^A}{p^B}\poolcontent{j}{}{A}, \uac{j}{}{\textit{BA}}) \,\, .
	\end{equation*} 
	We thus conclude that the revenue of action ${a_{i^{\ast}}^{\text{BA}}(\uac{}{}{\textit{BA}}) = \left(0, \cdots, \uac{i^{\ast}}{}{\textit{BA}} = \uac{}{}{\textit{BA}}, 0, \cdots, 0 \right)}$ is maximal:
	\begin{align*}
		U^{\textit{BA}}( \poolcontentvector{}{}{}, a_{i^{\ast}}^{\text{BA}}(\uac{}{}{\textit{BA}})) =& -p^B \times G(\poolcontent{i^{\ast}}{}{A},\frac{p^A}{p^B}\poolcontent{i^{\ast}}{}{A}, \uac{}{}{\textit{BA}}) \nonumber \\
		\geq & -p^B \times \sum_{j=1}^{n}G_{\textit{SAMM}}(\poolcontent{j}{}{A},\frac{p^A}{p^B}\poolcontent{j}{}{A}, \uac{j}{}{\textit{BA}})\\
		\stackrel{(\ref{lemma_optimal_trader_action_proof1})}{=} & U^{\textit{BA}}( \poolcontentvector{}{}{}, a^{\textit{BA}}) \,\, . 
	\end{align*}
\end{proof}

\subsubsection{Using smallest shards is a dominant strategy}

Lemma~\ref{lemma_optimal_trader_action} indicates that when multiple AMM shards have the same smallest amount of deposited tokens, acquiring all tokens in any one of them has the highest utility.
Since the utility of a trader's strategy is the linear combination of the utility of actions, it is a dominant strategy for the trader to randomly select one of the smallest shards to acquire all required tokens:

\begin{restatable}{cor}{lemmadominanttraderstrategy}
	In $\Gamma_n(tf_{\textit{SAMM}})$, a dominant strategy for a \textit{BA} trader is to randomly select one of the smallest shards to acquire all required tokens:
	\begin{align*}
		\tau^{BA}(\poolcontentvector{}{}{}, \uac{}{}{\textit{BA}}, a^{\textit{BA}}) = \begin{cases}
			\frac{1}{n_{\min}(\poolcontentvector{}{}{})}, & \text{if } a^{\textit{BA}} \in \SmallestAction(\uac{}{}{\textit{BA}}, \poolcontentvector{}{}{})  \\
			0, & \text{Otherwise.}
		  \end{cases}
	\end{align*}
\end{restatable}

If $n_{\min}(\poolcontentvector{}{}{}) = n$, then all shards have the same amount of deposited tokens, and the trader randomly selects one of the $n$ shards.

\begin{restate}{Corollary~\ref{theorem:perfect_parallelism}}
	\contenttheoremperfectparallelism
\end{restate}

\subsubsection{All dominant strategies use smallest shards}

We have shown that only trading in one of the smallest shards is the dominant strategy for the trader.
However, to later determine the best response of liquidity providers, we need to know whether there are other dominant strategies.
We show that if the trader has a positive probability of taking the action of splitting a transaction or trading in a shard with not the smallest amount of deposited tokens, then she has strictly lower utility than randomly selecting one of the smallest shards to trade, as we intended:
\begin{restate}{Theorem~\ref{theorem:strict_single}}
	In $\Gamma_n(tf_{\textit{SAMM}})$, considering the following dominant strategy of the \textit{BA} trader which randomly selects one of the smallest shards to acquire all required tokens:
	\begin{align}\label{theorem_strict_single_strategy_BA}
		\tau^{BA}(\poolcontentvector{}{}{}, \uac{}{}{\textit{BA}}, a^{\textit{BA}}) = \begin{cases}
			\frac{1}{n_{\min}(\poolcontentvector{}{}{})}, & \text{if } a^{\textit{BA}} \in \SmallestAction(\uac{}{}{\textit{BA}}, \poolcontentvector{}{}{})  \\
			0, & \text{Otherwise.}
		  \end{cases},
	\end{align}
	then for all strategies $\pi^{BA}$ that have a positive probability of actions not trading in one of the smallest shards, i.e., ${\exists a^{\textit{BA}} = \left(\uac{1}{}{\textit{BA}}, \cdots, \uac{i}{}{\textit{BA}}, \cdots, \uac{n}{}{\textit{BA}} \right)\notin \SmallestAction(\uac{}{}{\textit{BA}}, \poolcontentvector{}{}{})}$, ${\pi^{BA}(\poolcontentvector{}{}{}, \uac{}{}{\textit{BA}}, a^{\textit{BA}}) > 0}$, the utility of the \textit{BA} trader is strictly lower than with strategy $\tau^{BA}$:
	\begin{align*}
		U^{\textit{BA}}(\tau^{BA}, \poolcontentvector{}{}{}, \uac{}{}{\textit{BA}}) > U^{\textit{BA}}(\pi^{BA}, \poolcontentvector{}{}{}, \uac{}{}{\textit{BA}}) \,\, .
	\end{align*}
\end{restate}

\begin{proof}[Proof Sketch]
	Since the utility of the \textit{BA} trader is a linear combination of the utility of actions, we only need to show that the action of not trading in one of the smallest shards has strictly lower revenue than the action of trading in one of the smallest shards, which can be deduced from $c$-smaller-better property and $c$-non-splitting property.
\end{proof}

\begin{proof}
	Denote the minimal amount of deposited \textit{token~A} among all shards by $\poolcontent{\min}{}{A} = \min_{1 \leq i\leq n} \poolcontent{i}{}{A}$, then all smallest shards have $\poolcontent{\min}{}{A}$ deposited \textit{token~A} and $\frac{p^A}{p^B}\poolcontent{\min}{}{A}$ deposited 	\textit{token~B}.
	Then, from Equation~\ref{utilitybapi}, the utility of the \textit{BA} trader under strategy $\tau^{BA}$ is
	\begin{multline*}
		U^{\textit{BA}}(\poolcontentvector{}{}{}, \uac{}{}{\textit{BA}}, \tau^{BA})  = \\
		\sum_{a^{\textit{BA}} \in \ActionSpaceBA(\uac{}{}{\textit{BA}})} \tau^{BA}(\poolcontentvector{}{}{}, \uac{}{}{\textit{BA}}, a^{\textit{BA}}) \times U^{\textit{BA}}(\poolcontentvector{}{}{}, a^{\textit{BA}}) \,\, .
	\end{multline*}
	From the definition of $\tau^{BA}$ (Equation~\ref{theorem_strict_single_strategy_BA}), the above equation can be expanded to
	\begin{multline}\label{theorem_strict_single_proof1}
		U^{\textit{BA}}(\poolcontentvector{}{}{}, \uac{}{}{\textit{BA}}, \tau^{BA})  = \\
		\sum_{a_i^{\textit{BA}}(\uac{}{}{\textit{BA}}) \in \SmallestAction(\uac{}{}{\textit{BA}})} \frac{1}{n_{\min}(\poolcontentvector{}{}{})} \times U^{\textit{BA}}(\poolcontentvector{}{}{}, a_i^{\textit{BA}}) \,\, .
	\end{multline}
	The revenue of the \textit{BA} trader under action $a_i^{\textit{BA}} \in \SmallestAction(\uac{}{}{\textit{BA}})$ is (using Equation~\ref{utilityba})
	\begin{align*}
		U^{\textit{BA}}(\poolcontentvector{}{}{}, a_i^{\textit{BA}}) =& -p^B \times \sum_{i=1}^{n}G_{\textit{SAMM}}(\poolcontent{i}{}{A},\poolcontent{i}{}{B}, \uac{i}{}{\textit{BA}}) \nonumber \\
		=& -p^B \times G_{\textit{SAMM}}(\poolcontent{i}{}{A}, \frac{p^A}{p^B}\poolcontent{i}{}{B}, \uac{}{}{\textit{BA}}) \,\, .
	\end{align*}
	Since $a_i^{\textit{BA}} \in \SmallestAction(\uac{}{}{\textit{BA}})$, we have $\poolcontent{i}{}{A} = \poolcontent{\min}{}{A}$, which means
	\begin{equation}\label{theorem_strict_single_proof2}
		U^{\textit{BA}}(\poolcontentvector{}{}{}, a_i^{\textit{BA}}) 
		= -p^B \times G_{\textit{SAMM}}(\poolcontent{\min}{}{A}, \frac{p^A}{p^B}\poolcontent{\min}{}{A}, \uac{}{}{\textit{BA}}) \,\,.
	\end{equation}
	Combining Equation~\ref{theorem_strict_single_proof1} and~\ref{theorem_strict_single_proof2}, we find the utility of the \textit{BA} trader with strategy $\tau^{BA}$
	\begin{align*}
		&U^{\textit{BA}}(\poolcontentvector{}{}{}, \uac{}{}{\textit{BA}}, \tau^{BA}) \nonumber\\
		=& \sum_{a_i^{\textit{BA}}(\uac{}{}{\textit{BA}}) \in \SmallestAction(\uac{}{}{\textit{BA}})} \frac{\left(
			-p^B \times G_{\textit{SAMM}}(\poolcontent{\min}{}{A}, \frac{p^A}{p^B}\poolcontent{\min}{}{A}, \uac{}{}{\textit{BA}})\right)}{n_{\min}(\poolcontentvector{}{}{})}   \nonumber \\
		= & -p^B \times G_{\textit{SAMM}}(\poolcontent{\min}{}{A}, \frac{p^A}{p^B}\poolcontent{\min}{}{A}, \uac{}{}{\textit{BA}})  
		 \,\, .
	\end{align*}
	Combining Equation~\ref{theorem_strict_single_proof2}, we have
	\begin{equation}\label{theorem_strict_single_proof15}
		U^{\textit{BA}}(\poolcontentvector{}{}{}, a_i^{\textit{BA}}) = U^{\textit{BA}}(\poolcontentvector{}{}{}, \uac{}{}{\textit{BA}}, \tau^{BA}) \,\, .
	\end{equation}

	Now, consider the utility of a strategy $\pi^{BA}$.
	Considering any strategy $\pi^{BA}$ that splits the trade, i.e., $\exists \tilde{a}^{\textit{BA}} = \left(\uac{1}{}{\textit{BA}}, \cdots, \uac{i}{}{\textit{BA}}, \cdots, \uac{n}{}{\textit{BA}} \right)	 \in \ActionSpaceBA(\uac{}{}{\textit{BA}}), \pi^{BA}(\poolcontentvector{}{}{}, \uac{}{}{\textit{BA}}, a^{\textit{BA}}) > 0, \exists i \neq j, \uac{i}{}{\textit{BA}} > 0, \uac{j}{}{\textit{BA}} > 0$.

	Now we consider the revenue of the deviation actions.
	Considering any strategy $\pi^{BA}$ where $\exists \tilde{a}^{\textit{BA}} 
	\in \ActionSpaceBA(\uac{}{}{\textit{BA}}) \setminus \SmallestAction(\uac{}{}{\textit{BA}}, \poolcontentvector{}{}{})
	, \pi^{BA}(\poolcontentvector{}{}{}, \uac{}{}{\textit{BA}}, a^{\textit{BA}}) > 0.$
	There are two kinds of deviation actions.
	One is that the trader splits the transaction, namely $\tilde{a}^{\textit{BA}} \in \SingleAction(\uac{}{}{\textit{BA}}, \poolcontentvector{}{}{})$.
	$ \exists i \neq j, \uac{i}{}{\textit{BA}} > 0, \uac{j}{}{\textit{BA}} > 0$.
	The other is that the trader trades in a non-smallest shard, that is, $\tilde{a}^{\textit{BA}} \in \mathcal{A}^S(\uac{}{}{\textit{BA}}) \setminus \SmallestAction(\uac{}{}{\textit{BA}}, \poolcontentvector{}{}{})$.
	
	For the first case where $a^{\textit{BA}} \in \ActionSpaceBA(\uac{}{}{\textit{BA}}) \setminus \SmallestAction(\uac{}{}{\textit{BA}}, \poolcontentvector{}{}{})$, we have $\exists i \neq j, \uac{i}{}{\textit{BA}} > 0, \uac{j}{}{\textit{BA}} > 0$.
	The revenue of the \textit{BA} trader under this action is
	\begin{equation}\label{theorem_strict_single_proof6}
		U^{\textit{BA}}(\poolcontentvector{}{}{}, \tilde{a}^{\textit{BA}}) = -p^B \times \sum_{i = 1}^n \left(  G_{\textit{SAMM}}(\poolcontent{i}{}{A},\frac{p^A}{p^B}\poolcontent{i}{}{A}, \uac{i}{}{\textit{BA}}) \right) \,\, .
	\end{equation}

	From the $c$-smaller-better property, since $\forall 1 \leq i \leq n, \poolcontent{i}{}{A} \geq \poolcontent{\min}{}{A}$, we have
	\begin{multline*}
		\sum_{i = 1}^n \left(  G_{\textit{SAMM}}(\poolcontent{i}{}{A},\frac{p^A}{p^B}\poolcontent{i}{}{A}, \uac{i}{}{\textit{BA}}) \right) \geq \\  \sum_{i = 1}^n \left(  G_{\textit{SAMM}}(\poolcontent{\min}{}{A},\frac{p^A}{p^B}\poolcontent{\min}{}{A}, \uac{i}{}{\textit{BA}}) \right) \,\, .
	\end{multline*}
	Since $\exists i \neq j, \uac{i}{}{\textit{BA}} > 0, \uac{j}{}{\textit{BA}} > 0$, according to the $c$-non-splitting property, we have
	\begin{multline*}
		\sum_{i = 1}^n \left(  G_{\textit{SAMM}}(\poolcontent{\min}{}{A},\frac{p^A}{p^B}\poolcontent{\min}{}{A}, \uac{i}{}{\textit{BA}}) \right) >\\ G_{\textit{SAMM}}(\poolcontent{\min}{}{A},\frac{p^A}{p^B}\poolcontent{\min}{}{A}, \uac{}{}{\textit{BA}}) \,\, .
	\end{multline*}
	Therefore, $\forall \tilde{a}^{\textit{BA}} \in \ActionSpaceBA(\uac{}{}{\textit{BA}}) \setminus \SmallestAction(\uac{}{}{\textit{BA}}, \poolcontentvector{}{}{})$, the revenue of the action $\tilde{a}^{\textit{BA}}$ (Equation~\ref{theorem_strict_single_proof6}) can be expanded as
	\begin{align*}
		U^{\textit{BA}}(\poolcontentvector{}{}{}, \tilde{a}^{\textit{BA}})&=		-p^B \times \sum_{i = 1}^n \left(  G_{\textit{SAMM}}(\poolcontent{i}{}{A},\frac{p^A}{p^B} \nonumber \poolcontent{i}{}{A}, \uac{i}{}{\textit{BA}}) \right) \\
		<& -p^B \times G_{\textit{SAMM}}(\poolcontent{\min}{}{A},\frac{p^A}{p^B}\poolcontent{\min}{}{A}, \uac{}{}{\textit{BA}}) \,\, .
	\end{align*}
	Since the right part of the above inequality is the revenue of the action $a_i^{\textit{BA}}$ (Equation~\ref{theorem_strict_single_proof2}), the revenue of action $\tilde{a}^{\textit{BA}}$ is strictly smaller than the revenue of action $a_i^{\textit{BA}}$:
	\begin{equation}\label{theorem_strict_single_proof9}
		U^{\textit{BA}}(\poolcontentvector{}{}{}, \tilde{a}^{\textit{BA}}) < U^{\textit{BA}}(\poolcontentvector{}{}{}, a_i^{\textit{BA}}) \,\, .
	\end{equation}
	
	Now we turn to the second case of an action that acquiring all tokens in a non-smallest shard, i.e., ${\tilde{a}^{\textit{BA}} \in \mathcal{A}^S(\uac{}{}{\textit{BA}})\setminus \SmallestAction(\uac{}{}{\textit{BA}}, \poolcontentvector{}{}{})}$.
	Here, we rewrite $\tilde{a}^{\textit{BA}}$ as $\tilde{a}_j^{\textit{BA}}$, the action acquiring all $\uac{}{}{\textit{BA}}$ in $\textit{shard}_j$, where $\poolcontent{j}{}{A} > \poolcontent{\min}{}{A}$.
	Then, the revenue of the \textit{BA} trader is 
	\begin{equation}\label{theorem_strict_single_proof10}
		U^{\textit{BA}}(\poolcontentvector{}{}{}, \tilde{a}_j^{\textit{BA}}) = -p^B \times G_{\textit{SAMM}}(\poolcontent{j}{}{A},\frac{p^A}{p^B}\poolcontent{j}{}{A}, \uac{}{}{\textit{BA}}) \,\, .
	\end{equation}

	From the $c$-smaller-better property, since $\poolcontent{j}{}{A} > \poolcontent{\min}{}{A}$, we have
	\begin{equation*}
		G_{\textit{SAMM}}(\poolcontent{j}{}{A},\frac{p^A}{p^B}\poolcontent{j}{}{A}, \uac{}{}{\textit{BA}}) > G_{\textit{SAMM}}(\poolcontent{\min}{}{A},\frac{p^A}{p^B}\poolcontent{\min}{}{A}, \uac{}{}{\textit{BA}}) \,\, .
	\end{equation*}
	Therefore, from Equation~\ref{theorem_strict_single_proof10}, we have
	\begin{equation*}
		U^{\textit{BA}}(\poolcontentvector{}{}{}, \tilde{a}_j^{\textit{BA}})<
		-p^B \times G_{\textit{SAMM}}(\poolcontent{\min}{}{A},\frac{p^A}{p^B}\poolcontent{\min}{}{A}, \uac{}{}{\textit{BA}}) \,\, .
	\end{equation*}
	Since the right part of the above inequality is the utility of the action $a_i^{\textit{BA}}$ (Equation~\ref{theorem_strict_single_proof2}), the revenue of the action $\tilde{a}_j^{\textit{BA}}$ is strictly lower than the revenue of the action $a_i^{\textit{BA}}$ when $\tilde{a}_j^{\textit{BA}} \in \mathcal{A}^S(\uac{}{}{\textit{BA}}) \setminus \SmallestAction(\uac{}{}{\textit{BA}}, \poolcontentvector{}{}{})$:
	\begin{equation}\label{theorem_strict_single_proof11}
		U^{\textit{BA}}(\poolcontentvector{}{}{}, \tilde{a}_j^{\textit{BA}}) < U^{\textit{BA}}(\poolcontentvector{}{}{}, a_i^{\textit{BA}}) \,\, .
	\end{equation}

	Combining the conditions for Equation~\ref{theorem_strict_single_proof9} and~\ref{theorem_strict_single_proof11}, then $\forall \tilde{a}^{\textit{BA}} \in \ActionSpaceBA(\uac{}{}{\textit{BA}}) \setminus \SmallestAction(\uac{}{}{\textit{BA}}, \poolcontentvector{}{}{})$, we have
	\begin{equation}\label{theorem_strict_single_proof12}
		U^{\textit{BA}}(\poolcontentvector{}{}{}, \tilde{a}^{\textit{BA}}) < U^{\textit{BA}}(\poolcontentvector{}{}{}, a_i^{\textit{BA}}) \,\, .
	\end{equation}

	Now we return to the utility of the \textit{BA} trader with strategy~$\pi^{BA}$.
	We tease out the deviating action:
	\begin{align}\label{theorem_strict_single_proof4}
		&U^{\textit{BA}}(\poolcontentvector{}{}{}, \uac{}{}{\textit{BA}}, \pi^{BA}) \nonumber \\
		=& \sum_{a^{\textit{BA}} \in \ActionSpaceBA(\uac{}{}{\textit{BA}})} \pi^{BA}(\poolcontentvector{}{}{}, \uac{}{}{\textit{BA}}, a^{\textit{BA}}) \times U^{\textit{BA}}(\poolcontentvector{}{}{}, a^{\textit{BA}}) \nonumber \\
		=& \sum_{a^{\textit{BA}} \in \ActionSpaceBA(\uac{}{}{\textit{BA}}) \setminus \{\tilde{a}^{\textit{BA}}\}}  \pi^{BA}(\poolcontentvector{}{}{}, \uac{}{}{\textit{BA}}, a^{\textit{BA}}) \times U^{\textit{BA}}(\poolcontentvector{}{}{}, a^{\textit{BA}}) \nonumber \\
		+&  \pi^{BA}(\poolcontentvector{}{}{}, \uac{}{}{\textit{BA}}, \tilde{a}^{\textit{BA}}) \times U^{\textit{BA}}(\poolcontentvector{}{}{}, \tilde{a}^{\textit{BA}}) \,\,.
	\end{align}

	From lemma~\ref{lemma_optimal_trader_action}, the revenue of any action ${{a}_t^{\textit{BA}} \in \ActionSpaceBA(\uac{}{}{\textit{BA}})}$ is no larger than the revenue of the action $a_i^{\textit{BA}} \in \SmallestAction(\uac{}{}{\textit{BA}})$.
	Combining Equation~\ref{theorem_strict_single_proof2}, we have
	\begin{multline}\label{theorem_strict_single_proof5}
		\sum_{a^{\textit{BA}} \in \ActionSpaceBA(\uac{}{}{\textit{BA}}) \setminus \{\tilde{a}^{\textit{BA}}\}}  \pi^{BA}(\poolcontentvector{}{}{}, \uac{}{}{\textit{BA}}, a^{\textit{BA}}) \times U^{\textit{BA}}(\poolcontentvector{}{}{}, a^{\textit{BA}}) \leq \\ 
		\sum_{a^{\textit{BA}} \in \ActionSpaceBA(\uac{}{}{\textit{BA}}) \setminus \{\tilde{a}^{\textit{BA}}\}}  \pi^{BA}(\poolcontentvector{}{}{}, \uac{}{}{\textit{BA}}, a^{\textit{BA}}) \times U^{\textit{BA}}(\poolcontentvector{}{}{}, a_i^{\textit{BA}}) 
	\end{multline}
	Combining Equations~\ref{theorem_strict_single_proof12} anc~\ref{theorem_strict_single_proof5}, we expand the utility of the \textit{BA} trader under strategy $\pi^{BA}$ in Equation~\ref{theorem_strict_single_proof4} as
	
	\begin{align*}\label{theorem_strict_single_proof8}
		&U^{\textit{BA}}(\poolcontentvector{}{}{}, \uac{}{}{\textit{BA}}, \pi^{BA}) \nonumber \\
		=& \sum_{a^{\textit{BA}} \in \ActionSpaceBA(\uac{}{}{\textit{BA}})- \{\tilde{a}^{\textit{BA}}\}}  \pi^{BA}(\poolcontentvector{}{}{}, \uac{}{}{\textit{BA}}, a^{\textit{BA}}) \times U^{\textit{BA}}(\poolcontentvector{}{}{}, a^{\textit{BA}}) \nonumber \\
		+&  \pi^{BA}(\poolcontentvector{}{}{}, \uac{}{}{\textit{BA}}, \tilde{a}^{\textit{BA}}) \times U^{\textit{BA}}(\poolcontentvector{}{}{}, \tilde{a}^{\textit{BA}}) \nonumber \\
		\stackrel{(\ref{theorem_strict_single_proof5})}{\leq}
		 & \sum_{a^{\textit{BA}} \in \ActionSpaceBA(\uac{}{}{\textit{BA}})- \{\tilde{a}^{\textit{BA}}\}}  \pi^{BA}(\poolcontentvector{}{}{}, \uac{}{}{\textit{BA}}, a^{\textit{BA}}) \times U^{\textit{BA}}(\poolcontentvector{}{}{}, a_i^{\textit{BA}}) \nonumber \\
		+&  \pi^{BA}(\poolcontentvector{}{}{}, \uac{}{}{\textit{BA}}, \tilde{a}^{\textit{BA}}) \times U^{\textit{BA}}(\poolcontentvector{}{}{}, \tilde{a}^{\textit{BA}}) \nonumber \\
		\stackrel{(\ref{theorem_strict_single_proof12})}{<}& \sum_{a^{\textit{BA}} \in \ActionSpaceBA(\uac{}{}{\textit{BA}})- \{\tilde{a}^{\textit{BA}}\}}  \pi^{BA}(\poolcontentvector{}{}{}, \uac{}{}{\textit{BA}}, a^{\textit{BA}}) \times U^{\textit{BA}}(\poolcontentvector{}{}{}, a_i^{\textit{BA}}) \nonumber \\
		+&  \pi^{BA}(\poolcontentvector{}{}{}, \uac{}{}{\textit{BA}}, \tilde{a}^{\textit{BA}}) \times U^{\textit{BA}}(\poolcontentvector{}{}{}, a_i^{\textit{BA}}) \nonumber \\
		=& \sum_{a^{\textit{BA}} \in \ActionSpaceBA(\uac{}{}{\textit{BA}})} \pi^{BA}(\poolcontentvector{}{}{}, \uac{}{}{\textit{BA}}, a^{\textit{BA}}) \times U^{\textit{BA}}(\poolcontentvector{}{}{}, a_i^{\textit{BA}}) \nonumber \\
		=& U^{\textit{BA}}(\poolcontentvector{}{}{}, a_i^{\textit{BA}}) \times \sum_{a^{\textit{BA}} \in \ActionSpaceBA(\uac{}{}{\textit{BA}})} \pi^{BA}(\poolcontentvector{}{}{}, \uac{}{}{\textit{BA}}, a^{\textit{BA}}) \nonumber \\
		=& U^{\textit{BA}}(\poolcontentvector{}{}{}, a_i^{\textit{BA}})  \nonumber \\
		\stackrel{(\ref{theorem_strict_single_proof15})}{=} & U^{\textit{BA}}(\poolcontentvector{}{}{}, \uac{}{}{\textit{BA}}, \pi^{BA}) \,\, .
	\end{align*}

	The above inequality indicates that the utility of the \textit{BA} trader under strategy $\pi^{BA}$ is strictly lower than the utility of the \textit{BA} trader under strategy $\tau^{BA}$.
\end{proof}

From the above theorem, all the best responses of the trader should only have a positive probability of taking an action that trades in exactly one of the smallest shards:
\begin{restatable}{cor}{lemmaoptimaltraderstrategyset}\label{lemma:optimal_trader_strategy_set}
	Considering any best response strategy $\tau^{BA}(\poolcontentvector{}{}{}, \uac{}{}{\textit{BA}}, a^{\textit{BA}})$ of the \textit{BA} trader, the strategy should only have a positive probability of taking an action that trades in one of the smallest shards:
	\begin{equation*}
		\forall a_i^{\textit{BA}}(\uac{}{}{\textit{BA}}) \in \ActionSpaceBA(\uac{}{}{\textit{BA}}) \setminus \SmallestAction(\uac{}{}{\textit{BA}}), \tau^{BA}(\poolcontentvector{}{}{}, \uac{}{}{\textit{BA}}, a_i^{\textit{BA}}) = 0 \,\,.
	\end{equation*}
	In other words, the sum of the probabilities of all actions that trade in one of the smallest shards should be 1:
	\begin{equation*}
		\sum_{a_i^{\textit{BA}}(\uac{}{}{\textit{BA}}) \in \SmallestAction(\uac{}{}{\textit{BA}})}
		\tau^{BA}(\poolcontentvector{}{}{}, \uac{}{}{\textit{BA}}, a_i^{\textit{BA}}(\uac{}{}{\textit{BA}})) = 1 \,\, .
	\end{equation*}
\end{restatable}

\subsection{Liquidity Provider Strategy and SPNE}\label{gt:lp_strategy}

Our analysis of the trader strategy shows that if the shards are balanced, traders behave as intended. 
Next, we consider the strategies of liquidity providers in equilibrium.
They should maintain the shard balance. 
Moreover, their incentives should rebalance the system state even in the face of attacks that break this balance.

\subsubsection{Scaffolding}

We want a liquidity provider to fill up smaller shards to keep shards balanced.
We call such an action the \emph{fillup action}, where if the liquidity provider adds tokens to a shard, then the shard is the smallest shard after this action.
We denote the fillup action by~$a_{lp}^{\textit{fill}}(\poolcontentvector{}{}{}, \lpac{}{}{A}, \lpac{}{}{B})$:
\begin{restate}{Definition~\ref{def:fillup_action}}
	\deffillupaction
\end{restate}

We also define the strategy that only takes the fillup action as the \emph{fillup strategy}:
\begin{restatable}{defi}{fillupstrategy}
	The fillup strategy of a liquidity provider $\tau^{\textit{fill}}_{lp}(\poolcontentvector{}{}{}, \lpac{}{}{A}, \lpac{}{}{B})$ is the strategy that only takes the fillup action: 
	\begin{equation*}
		\tau^{\textit{fill}}_{lp}( \poolcontentvector{}{}{}, \lpac{}{}{A}, \lpac{}{}{B}, a_{lp}) = \begin{cases}
			1, & \text{if } a_{lp} = \hat{a}_{lp}  \\
			0, & \text{Otherwise.} 
		  \end{cases} 
	\end{equation*}
\end{restatable}

Denote the minimal amount of deposited \textit{token~A} among all shards of status $\poolcontentvector{}{}{}= \left(\left(\poolcontent{1}{}{A},\poolcontent{1}{}{B}, \poolcontent{1}{}{S} \right), \cdots \left(\poolcontent{n}{}{A},\poolcontent{n}{}{B}, \poolcontent{n}{}{S}	 \right) \right)$ by
\begin{equation*}
	\rho^A(\poolcontentvector{}{}{}) = \min_{1 \leq i\leq n} \poolcontent{i}{}{A}\,.
\end{equation*}

We show that $a_{lp}^{\textit{fill}}$ is unique and has the maximal volume of the smallest shard in the next step.

\newcommand{\contentlemmauniquefillup}{For any action of liquidity provider ${a_{lp} = \left(\left(\lpac{1}{}{A}, \lpac{1}{}{B}\right), \cdots, \left(\lpac{n}{}{A}, \lpac{n}{}{B}\right)\right) \in \mathcal{A}_{lp}(\lpac{}{}{A} , \lpac{}{}{B} )}$, if 
	$\rho^A(\poolcontentvector{}{}{} + a_{lp}) \geq \rho^A(\poolcontentvector{}{}{} + a_{lp}^{\textit{fill}})$, then $a_{lp} = a_{lp}^{\textit{fill}}$.}

\begin{restatable}{lem}{lemmauniquefillup}\label{lemma:unique_fillup}
	\contentlemmauniquefillup
\end{restatable}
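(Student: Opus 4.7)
The plan is to convert the definition of $a_{lp}^{\textit{fill}}$ into an explicit water-filling description and then derive $a_{lp} = a_{lp}^{\textit{fill}}$ from the hypothesis via a short volume-accounting chain. Set $L^{\ast} \coloneqq \rho^A(\poolcontentvector{}{}{} + a_{lp}^{\textit{fill}})$ and extract from Definition~\ref{def:fillup_action} two facts: (i)~every pool $i$ with $\fillupac{i}{}{A} > 0$ ends at exactly level $L^{\ast}$ and (ii)~every pool $j$ with $\poolcontent{j}{}{A} \geq L^{\ast}$ receives $\fillupac{j}{}{A} = 0$. Part (i) is immediate because such a pool attains the post-action minimum; part (ii) follows because otherwise $\fillupac{j}{}{A} + \poolcontent{j}{}{A} > L^{\ast}$, while the fillup inequality bounds this sum above by a strictly-below-$L^{\ast}$ pool topped up to $L^{\ast}$. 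Combining (i), (ii), and $\sum_i \fillupac{i}{}{A} = \lpac{}{}{A}$ yields the water-filling identity
\begin{equation*}
\lpac{}{}{A} \;=\; \sum_{i\,:\,\poolcontent{i}{}{A} < L^{\ast}} \bigl(L^{\ast} - \poolcontent{i}{}{A}\bigr).
\end{equation*}

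Next, I would use the hypothesis $\rho^A(\poolcontentvector{}{}{} + a_{lp}) \geq L^{\ast}$ to observe that every $i$ with $\poolcontent{i}{}{A} < L^{\ast}$ must satisfy $\lpac{i}{}{A} \geq L^{\ast} - \poolcontent{i}{}{A} > 0$. Summing and invoking the action-space budget $\sum_i \lpac{i}{}{A} \leq \lpac{}{}{A}$, together with the identity above, gives the chain
\begin{equation*}
\lpac{}{}{A} \;=\; \sum_{i\,:\,\poolcontent{i}{}{A} < L^{\ast}} \bigl(L^{\ast} - \poolcontent{i}{}{A}\bigr) \;\leq\; \sum_{i\,:\,\poolcontent{i}{}{A} < L^{\ast}} \lpac{i}{}{A} \;\leq\; \sum_{i=1}^n \lpac{i}{}{A} \;\leq\; \lpac{}{}{A},
\end{equation*}
so equality must hold throughout. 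The first collapse pins $\lpac{i}{}{A} = L^{\ast} - \poolcontent{i}{}{A}$ on every strictly-below-$L^{\ast}$ pool; the second forces $\lpac{j}{}{A} = 0$ on all remaining pools. These values coincide coordinate-wise with the $\fillupac{}{}{}$ values, and the matching $\lpac{i}{}{B}$ coordinates follow from the action-space constraint $\lpac{i}{}{A} = \frac{p^B}{p^A}\lpac{i}{}{B}$, proving $a_{lp} = a_{lp}^{\textit{fill}}$.

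The main obstacle I anticipate is the boundary case of pools sitting exactly at level $L^{\ast}$. Such pools are invisible in the strict inequality $\poolcontent{i}{}{A} < L^{\ast}$ that indexes the water-filling identity, and they are also not directly constrained by the hypothesis $\rho^A \geq L^{\ast}$, so one must argue separately that both $a_{lp}$ and $a_{lp}^{\textit{fill}}$ assign them $0$. The squeezed-equality chain above handles this cleanly—the entire budget $\lpac{}{}{A}$ is already consumed on strictly-below-$L^{\ast}$ coordinates, leaving nothing for the rest—but writing out this boundary carefully is the only nontrivial bookkeeping in the proof.
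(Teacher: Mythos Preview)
Your proposal is correct and follows essentially the same squeeze argument as the paper's proof: both observe that on every pool receiving positive fillup input the competing action $a_{lp}$ must deposit at least as much, then sum and collapse against the budget $\lpac{}{}{A}$. The only cosmetic difference is that the paper indexes directly over $\{j:\fillupac{j}{}{A}>0\}$ while you first translate to the equivalent set $\{i:\poolcontent{i}{}{A}<L^{\ast}\}$ via the water-filling description; your version is slightly more explicit (and, incidentally, more careful in using the action-space inequality $\sum_i \lpac{i}{}{A}\leq \lpac{}{}{A}$ rather than equality).
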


\begin{proof}[Proof Sketch]
	
	If another action results in a higher minimum reserve of \textit{token~A}, then this action must add a larger amount of tokens into each shard compared to the fill-up action, contrary to the assumption that actions have identical total input amounts.
\end{proof}

\begin{proof}
	Consider any $j$ s.t. $\lpac{j}{}{A^{\ast}} >0$, from the definition of $a_{lp}^{\textit{fill}}$ (Definition~\ref{def:fillup_action}), $\lpac{j}{}{A^{\ast}}  + \poolcontent{j}{}{A}$ is the minimal among all shards with state $\poolcontentvector{}{}{} + a_{lp}$, i.e.,
	\begin{equation*}
		\rho^A(\poolcontentvector{}{}{} + a_{lp}^{\textit{fill}}) = \fillupac{j}{}{A}  + \poolcontent{j}{}{A} \,\,.
	\end{equation*}

	Since then all shards in ${\poolcontentvector{}{}{} + a_{lp}}$ have no less than $\rho^A(\poolcontentvector{}{}{} + a_{lp})$ deposited \textit{token~A}, if $\rho^A(\poolcontentvector{}{}{} + a_{lp}) \geq \rho^A(\poolcontentvector{}{}{} + a_{lp}^{\textit{fill}})$, we have 
	\begin{equation*}
		\lpac{j}{}{A} + \poolcontent{j}{}{A} \geq \rho^A(\poolcontentvector{}{}{} + a_{lp}) \geq \rho^A(\poolcontentvector{}{}{} + a_{lp}^{\textit{fill}}) = \fillupac{j}{}{A}  + \poolcontent{j}{}{A} \,\,.
	\end{equation*}

	Therefore, we have
	\begin{equation}\label{lemma_fillup_proof4}
		\lpac{j}{}{A} \geq \fillupac{j}{}{A} \,\,.
	\end{equation}
	Considering the sum of $\lpac{j}{}{A}$ and $\fillupac{j}{}{A}$, we have 
	\begin{equation}\label{lemma_fillup_proof1}
		\sum_{\lpac{j}{}{A} >0} \lpac{i}{}{A} \geq \sum_{\fillupac{j}{}{A} >0} \fillupac{j}{}{A} = \lpac{}{}{A} \,\,.
	\end{equation}
	For $a_{lp}$, the sum of input tokens in all shards is $\lpac{}{}{A}$:
	\begin{equation}\label{lemma_fillup_proof2}
		\sum_{i=1}^n \lpac{i}{}{A} = \lpac{}{}{A} \,\,.
	\end{equation}
	And $\forall 1\leq i \leq n$, $\lpac{i}{}{A}$ is non-negative,
	\begin{equation}\label{lemma_fillup_proof3}
		\lpac{i}{}{A} \geq 0 \,\,.
	\end{equation}
	Combining Expressions~\ref{lemma_fillup_proof4}~\ref{lemma_fillup_proof1},~\ref{lemma_fillup_proof2} and~\ref{lemma_fillup_proof3},
	all inequation holds with equality, which indicates that $a_{lp} = a_{lp}^{\textit{fill}}$:

	\begin{equation*}
		\forall 1 \leq i \leq n, \lpac{i}{}{A} = \fillupac{i}{}{A} \,\,. 
	\end{equation*}
\end{proof}

We have shown that traders are incentivized to trade in smaller shards in Section~\ref{gt:trader_strategy}, which incentivizes liquidity providers to add their tokens to smaller shards.
Additionally, if the liquidity provider makes small shards larger, she would get more trading fees, which further incentivizes them to add liquidity to small shards:
\newcommand{\contentlemmauniquefillupmaximal}{For any two shards $i$ and $j$, if $\poolcontent{i}{}{A} < \poolcontent{j}{}{A}$, for any output amount $\uac{}{}{\textit{BA}}$ of \textit{token~A}, the trading fee of $\textit{shard}_i$ is strictly smaller than the trading fee of $\textit{shard}_j$:
\begin{equation*}
	tf_{\textit{SAMM}}(\poolcontent{i}{}{A},\frac{p^A}{p^B}\poolcontent{i}{}{A}, \uac{}{}{\textit{BA}}) < tf_{\textit{SAMM}}(\poolcontent{j}{}{A},\frac{p^A}{p^B}\poolcontent{j}{}{A}, \uac{}{}{\textit{BA}}) \,\, .
\end{equation*}}

\begin{restatable}{lem}{lemmatradingfee}\label{lemma:trading_fee}
	\contentlemmauniquefillupmaximal
\end{restatable}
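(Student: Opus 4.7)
The plan is to substitute the arbitrage-maintained composition into the explicit SAMM trading fee formula and reduce the claim to monotonicity of a simple univariate expression in $\poolcontent{}{}{A}$. From Equation~\ref{eq:reportedpriceisprice} applied to each pool, the factor $\frac{\poolcontent{}{}{B}}{\poolcontent{}{}{A}} = \frac{p^A}{p^B}$ is a constant independent of the pool, so plugging $\poolcontent{}{}{B} = \frac{p^A}{p^B}\poolcontent{}{}{A}$ into the SAMM fee (the closed form derived at the end of Section~\ref{sec:SAMM_parameter}) gives
\begin{equation*}
  \textit{tf}_{\textit{SAMM}}\!\left(\poolcontent{}{}{A},\tfrac{p^A}{p^B}\poolcontent{}{}{A}, \uac{}{}{\textit{BA}}\right) = \tfrac{p^A}{p^B}\,\uac{}{}{\textit{BA}}\,\max\!\left\{r_{\min},\ \beta_1\tfrac{\uac{}{}{\textit{BA}}}{\poolcontent{}{}{A}} + r_{\max}\right\}.
\end{equation*}
The entire dependence on the pool size is thus confined to the inner argument of the max, and the prefactor $\tfrac{p^A}{p^B}\uac{}{}{\textit{BA}}$ is strictly positive (we may assume $\uac{}{}{\textit{BA}}>0$, otherwise the lemma is vacuous for the use in Section~\ref{gt:lp_strategy}).

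Next I would show the inner argument is strictly increasing in $\poolcontent{}{}{A}$. Since $\beta_1<0$ by Proposition~\ref{theorem:parameter1}, item~\ref{itm:c1Negative}, and $\uac{}{}{\textit{BA}}>0$, the term $\beta_1\,\uac{}{}{\textit{BA}}/\poolcontent{}{}{A}$ is negative and moves strictly closer to zero as $\poolcontent{}{}{A}$ grows. Applied to the hypothesis $\poolcontent{i}{}{A}<\poolcontent{j}{}{A}$, this gives
\begin{equation*}
  \beta_1\frac{\uac{}{}{\textit{BA}}}{\poolcontent{i}{}{A}} + r_{\max} < \beta_1\frac{\uac{}{}{\textit{BA}}}{\poolcontent{j}{}{A}} + r_{\max}.
\end{equation*}
Multiplying by the positive constant $\tfrac{p^A}{p^B}\uac{}{}{\textit{BA}}$ then yields the desired strict inequality, provided the outer $\max$ does not clip both sides to the common value $r_{\min}$.

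The main (and really only) obstacle is that last proviso: ruling out the case in which the floor $r_{\min}$ is active for both pools, which would collapse both fees to the same value. I would dispatch it by combining two standing facts. First, the Section~\ref{gt:trader_strategy} operating assumption that $\uac{}{}{\textit{BA}}\leq c\,\poolcontent{i}{}{A}$ for every pool, whence $\uac{}{}{\textit{BA}}/\poolcontent{i}{}{A}\leq c$ and, a fortiori, $\uac{}{}{\textit{BA}}/\poolcontent{j}{}{A}\leq c$. Second, the parameter selection from Corollary~\ref{corollary:parameter} that $c\leq (r_{\max}-r_{\min})/(-\beta_1)$, i.e., $-\beta_1 c\leq r_{\max}-r_{\min}$, equivalently $\beta_1 c + r_{\max}\geq r_{\min}$. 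Chaining these, $\beta_1\uac{}{}{\textit{BA}}/\poolcontent{i}{}{A} + r_{\max}\geq \beta_1 c + r_{\max}\geq r_{\min}$, and likewise for pool $j$. Hence both maxes are attained by their inner arguments (not by the $r_{\min}$ floor), and the strict inequality established above is preserved through the max and through the positive prefactor, completing the proof.
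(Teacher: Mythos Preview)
Your proof is correct but takes a genuinely different route from the paper. The paper proves the lemma indirectly: by the $c$-smaller-better property, the gross amount in the smaller pool is strictly less than in the larger pool; by the CPMM net-amount formula (Equation~\ref{net_amount_pre}), the net amount in the smaller pool is strictly \emph{greater} (higher slippage); since the gross amount decomposes as trading fee plus net amount, subtracting forces the trading fee in the smaller pool to be strictly less. This is a clean black-box argument that reuses an already-established property and never touches the explicit fee formula or the parameter constraints.

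Your approach instead substitutes the closed-form SAMM trading fee, reduces the comparison to the monotonicity of $\beta_1\,\uac{}{}{\textit{BA}}/\poolcontent{}{}{A} + r_{\max}$ in $\poolcontent{}{}{A}$, and then invokes the standing trade-size assumption $\uac{}{}{\textit{BA}}\le c\,\poolcontent{i}{}{A}$ together with Corollary~\ref{corollary:parameter} to verify that the $r_{\min}$ floor is inactive. This is more direct and self-contained, and it makes the mechanism transparent (the fee \emph{rate} itself is lower in the smaller pool), but it re-derives the no-clipping fact that was essentially the content of Proposition~\ref{theorem:parameter1}, Item~\ref{itm:51c4}. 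The paper's version is tidier from a modularity standpoint since it delegates all parameter reasoning to Section~\ref{sec:SAMM_parameter}; yours avoids the detour through gross and net amounts at the cost of reopening one parameter inequality.
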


\begin{proof}[Proof Sketch]
	
	Due to the $c$-smaller-better property, the gross amount in a larger shard is larger than that in a smaller shard.
	However, the net amount of a larger shard is smaller than that of a smaller shard (Equation~\ref{net_amount_pre}).
	Therefore, the trading fee of a larger shard is larger than that of a smaller shard since the gross amount is the sum of the net amount and the trading fee.
\end{proof}

\begin{proof}
	From $c$-smaller-better property, since $\poolcontent{i}{}{A} < \poolcontent{j}{}{A}$, the gross amount of $\textit{shard}_i$ is strictly smaller than the gross amount of $\textit{shard}_j$,
	\begin{equation*}
		G_{\textit{SAMM}}(\poolcontent{i}{}{A},\frac{p^A}{p^B}\poolcontent{i}{}{A}, \uac{}{}{\textit{BA}}) < G_{\textit{SAMM}}(\poolcontent{j}{}{A},\frac{p^A}{p^B}\poolcontent{j}{}{A}, \uac{}{}{\textit{BA}}) \,\,.
	\end{equation*}
	Since the gross amount is the sum of the net amount and the trading fee, by expanding the above inequality, we have
	\begin{multline}\label{lemma_trading_fee_proof1}
		tf_{\textit{SAMM}}(\poolcontent{i}{}{A},\frac{p^A}{p^B}\poolcontent{i}{}{A}, \uac{}{}{\textit{BA}}) + \textit{net}^B(\poolcontent{}{}{A},\frac{p^A}{p^B}\poolcontent{i}{}{A}, \pooloutput{}{}{A}) < \\  tf_{\textit{SAMM}}(\poolcontent{j}{}{A},\frac{p^A}{p^B}\poolcontent{j}{}{A}, \uac{}{}{\textit{BA}}) + \textit{net}^B(\poolcontent{}{}{A},\frac{p^A}{p^B}\poolcontent{j}{}{A}, \pooloutput{}{}{A})\,\, .
	\end{multline}
	Considering the net amount of these two shards, since ${\poolcontent{i}{}{A} < \poolcontent{j}{}{A}}$, $\textit{shard}_i$ has higher slippage than $\textit{shard}_j$:
	\begin{equation}\label{lemma_trading_fee_proof2}
		\textit{net}^B(\poolcontent{}{}{A},\frac{p^A}{p^B}\poolcontent{i}{}{A}, \pooloutput{}{}{A}) > \textit{net}^B(\poolcontent{}{}{A},\frac{p^A}{p^B}\poolcontent{j}{}{A}, \pooloutput{}{}{A}) \,\, .
	\end{equation}
	Combining Equation~\ref{lemma_trading_fee_proof1} and~\ref{lemma_trading_fee_proof2}, we conclude that the trading fee of $\textit{shard}_i$ is strictly smaller than the trading fee of $\textit{shard}_j$:
	\begin{equation*}
		tf_{\textit{SAMM}}(\poolcontent{i}{}{A},\frac{p^A}{p^B}\poolcontent{i}{}{A}, \uac{}{}{\textit{BA}}) < tf_{\textit{SAMM}}(\poolcontent{j}{}{A},\frac{p^A}{p^B}\poolcontent{j}{}{A}, \uac{}{}{\textit{BA}}) \,\, .
	\end{equation*}
\end{proof}

\subsubsection{Perfect parallelism under balanced shards}
When all shards have identical sizes, the fillup action is to add tokens to all shards evenly, which is the best response of the liquidity provider:

\begin{restate}{Theorem~\ref{theorem_perfect_parallelism_and_lp}}
	\contenttheoremperfectparallelismandlp
\end{restate}

\begin{proof}[Proof Sketch]
	Traders prefer trading in smaller shards to reduce their costs.
	At the same time, fees are higher in larger shards.
	So the liquidity provider should increase her share in the smallest shards. 
	This dual objective is optimally achieved by uniformly distributing tokens across all shards.
\end{proof}

\begin{proof}
	Without loss of generality, we only consider BA traders since actions of \textit{AB} traders are symmetric.

	Since the traders have given the best response, we only need to prove that $\tau_{lp}$ is also the best response.

	Consider any action of liquidity provider ${a_{lp} = \left(\left(\lpac{1}{}{A}, \lpac{1}{}{B}\right), \cdots, \left(\lpac{n}{}{A}, \lpac{n}{}{B}\right)\right) \in \mathcal{A}_{lp}(\lpac{}{}{A} , \lpac{}{}{B} )}$, the shard state after this action is $\poolcontentvector{}{}{} + a_{lp}$.
	From Corollary~\ref{lemma:optimal_trader_strategy_set}, any best response of the trader only has a positive probability of taking an action that trades in exactly one of the smallest shards.
	Therefore, from Equation~\ref{Ulp_action_strategy_simplified}, when the trader use any best response $\tau^{BA}$, the liquidity provider's revenue with action $a_{lp}$ is

	\begin{multline}\label{theorem_perfect_parallelism_and_lp_proof1}
		U_{lp}(\poolcontentvector{}{}{},\lpac{}{}{A}, \lpac{}{}{B}, a_{lp},  \tau^{BA}, \tau^{AB}) = \\
	 E_{ \uac{}{}{\textit{BA}} \sim D^{BA}}
	\left[ 
		\sum_{\substack{a_i^{\textit{BA}}(\uac{}{}{\textit{BA}})\\\in \SmallestAction(\uac{}{}{\textit{BA}}, \poolcontentvector{}{}{}+a_{lp})}}
	\left( 
		\substack{\tau^{BA}( \poolcontentvector{}{}{} + a_{lp}, \uac{}{}{\textit{BA}}, a_i^{\textit{BA}}(\uac{}{}{\textit{BA}}))
	\times \\
		U_{lp}(\poolcontentvector{}{}{}, a_{lp},a_i^{\textit{BA}}(\uac{}{}{\textit{BA}}))}
	\right)
	\right] 
	\end{multline}

	Since $\forall 1 \leq i, j \leq n$, $\poolcontent{i}{}{A} = \poolcontent{j}{}{A}$. 
	Therefore, the minimal deposited amount of \textit{token~A} among all shards in $\poolcontentvector{}{}{} + a_{lp}$ is:
	\begin{equation*}
		\rho^A(\poolcontentvector{}{}{} + a_{lp}) = \poolcontent{1}{}{A} + \lpac{\min}{}{A} \,\, .
	\end{equation*}

	From Equation~\ref{TFBA}, the revenue of a liquidity provider due to action $a_{lp}$ and the \textit{BA} trader action ${a_i^{\textit{BA}}(\uac{}{}{\textit{BA}}) \in \SmallestAction(\uac{}{}{\textit{BA}}, \poolcontentvector{}{}{} + a_{lp})}$ is acquiring all $\uac{}{}{\textit{BA}}$ in one of the smallest shard after the liquidity provider's action, say $\textit{shard}_i$, 
	is
	\begin{align*}
		&U_{lp}(\poolcontentvector{}{}{}, a_{lp},a_i^{\textit{BA}}(\uac{}{}{\textit{BA}})) \nonumber \\
		=&p^B \times
	 \textit{tf}( \poolcontent{i}{}{A} + \lpac{i}{}{A},  \frac{p^A}{p^B}\left(\poolcontent{i}{}{A} + \lpac{i}{}{A}\right), \uac{}{}{\textit{BA}})  
	 \times \frac{\lpac{i}{}{A}}{ \lpac{i}{}{A} +  \poolcontent{i}{}{A}}  \nonumber \\
	 =& p^B \times \textit{tf}( \poolcontent{1}{}{A} + \lpac{\min}{}{A},  \frac{p^A}{p^B}\left(\poolcontent{1}{}{A} + \lpac{\min}{}{A}\right), \uac{}{}{\textit{BA}})	\times \frac{\lpac{\min}{}{A}}{ \lpac{\min}{}{A} +  \poolcontent{1}{}{A}} 
	 \,\, .
	\end{align*}

	When $a_{lp} = \hat{a}_{lp}$, the liquidity provider adds tokens to all shards evenly, i.e., $\lpac{i}{}{A} = \frac{1}{n}\lpac{}{}{A}$.
	Then each shard is identical, the trader's choice of $\textit{shard}_i$ has the same revenue for the liquidity provider as $\textit{shard}_1$:
	\begin{equation*}\label{theorem_perfect_parallelism_and_lp_proof51}
		U_{lp}(\poolcontentvector{}{}{}, \hat{a}_{lp},a_i^{\textit{BA}}(\uac{}{}{\textit{BA}}))
		=U_{lp}(\poolcontentvector{}{}{}, \hat{a}_{lp},a_1^{\textit{BA}}(\uac{}{}{\textit{BA}})) \,\, .
	\end{equation*}

	Therefore, from Equation~\ref{theorem_perfect_parallelism_and_lp_proof1}, the revenue of the liquidity provider under action $\hat{a}_{lp}$ is
	\begin{multline}\label{theorem_perfect_parallelism_and_lp_proof50}
		U_{lp}(\poolcontentvector{}{}{},\lpac{}{}{A}, \lpac{}{}{B}, \hat{a}_{lp},  \tau^{BA}, \tau^{AB}) = \\
	 E_{ \uac{}{}{\textit{BA}} \sim D^{BA}}
	\left[ 
		U_{lp}(\poolcontentvector{}{}{},\hat{a}_{lp},a_i^{\textit{BA}}(\uac{}{}{\textit{BA}}))
	\right]  \,\, .
	\end{multline}

	Since $\sum_{i=1}^n \lpac{i}{}{A} = \lpac{}{}{A}$, when $\forall 1\leq i\leq n, \lpac{i}{}{A} = \frac{1}{n}\lpac{}{}{A}$, we have for any action $a_{lp} \in \mathcal{A}_{lp}(\lpac{}{}{A} , \lpac{}{}{B} ),
	\lpac{\min}{}{A} \leq \frac{1}{n}\lpac{}{}{A}$.
	Then, 
	\begin{equation}\label{theorem_perfect_parallelism_and_lp_proof5}
		\frac{\lpac{\min}{}{A}}{ \lpac{\min}{}{A} +  \poolcontent{1}{}{A}} \leq \frac{\frac{1}{n}\lpac{}{}{A}}{ \frac{1}{n}\lpac{}{}{A} +  \poolcontent{1}{}{A}} \,\, .
	\end{equation}

	From Lemma~\ref{lemma:trading_fee}, the trading fee of the smallest shard is no larger than the trading fee of any other shard:
	\begin{multline}\label{theorem_perfect_parallelism_and_lp_proof6}
		tf_{\textit{SAMM}}(\poolcontent{1}{}{A} + \lpac{\min}{}{A},  \frac{p^A}{p^B}\left(\poolcontent{1}{}{A} + \lpac{\min}{}{A}\right), \uac{}{}{\textit{BA}}) \leq \\ tf_{\textit{SAMM}}(\poolcontent{i}{}{A} + \frac{1}{n}\lpac{}{}{A},  \frac{p^A}{p^B}\left(\poolcontent{i}{}{A} + \frac{1}{n}\lpac{}{}{A}\right), \uac{}{}{\textit{BA}}) \,\, .
	\end{multline}
	Combining Equation~\ref{theorem_perfect_parallelism_and_lp_proof5} and~\ref{theorem_perfect_parallelism_and_lp_proof6}, we have
	\begin{multline*}
		p^B \times \textit{tf}( \poolcontent{1}{}{A} + \lpac{\min}{}{A},  \frac{p^A}{p^B}\left(\poolcontent{1}{}{A} + \lpac{\min}{}{A}\right), \uac{}{}{\textit{BA}})	\times \frac{\lpac{\min}{}{A}}{ \lpac{\min}{}{A} +  \poolcontent{1}{}{A}} \leq \\
		p^B \times \textit{tf}( \poolcontent{i}{}{A} + \frac{1}{n}\lpac{}{}{A},  \frac{p^A}{p^B}\left(\poolcontent{i}{}{A} + \frac{1}{n}\lpac{}{}{A}\right), \uac{}{}{\textit{BA}})	\times \frac{\frac{1}{n}\lpac{}{}{A}}{ \frac{1}{n}\lpac{}{}{A} +  \poolcontent{1}{}{A}} \,\, ,
	\end{multline*}
	which indicates that the revenue of the liquidity provider under action $\hat{a}_{lp}$ is not smaller than any other action when traders take action $a_i^{\textit{BA}}(\uac{}{}{\textit{BA}})$, i.e.,
	\begin{equation}\label{theorem_perfect_parallelism_and_lp_proof7}
		U_{lp}(\poolcontentvector{}{}{}, a_{lp},a_i^{\textit{BA}}(\uac{}{}{\textit{BA}})) \leq U_{lp}(\poolcontentvector{}{}{}, \hat{a}_{lp},a_i^{\textit{BA}}(\uac{}{}{\textit{BA}})) \,\, .
	\end{equation}

	Combining Equation~\ref{theorem_perfect_parallelism_and_lp_proof1} and~\ref{theorem_perfect_parallelism_and_lp_proof7}, the revenue of the liquidity provider under action $\hat{a}_{lp}$ is not smaller than the revenue of the liquidity provider under any other action when traders use any best response $\tau^{BA}$.
	\begin{align}\label{theorem_perfect_parallelism_and_lp_proof3}
		& U_{lp}(\poolcontentvector{}{}{}, a_{lp},  \tau^{BA}, \tau^{AB},\lpac{}{}{A}, \lpac{}{}{B})  \nonumber \\
	\stackrel{(\ref{theorem_perfect_parallelism_and_lp_proof7})}{\leq}& E_{ \uac{}{}{\textit{BA}} \sim D^{BA}} \left[
	\sum_{\substack{a_i^{\textit{BA}}(\uac{}{}{\textit{BA}})\\\in \SmallestAction(\uac{}{}{\textit{BA}}, \poolcontentvector{}{}{} + a_{lp})}}
		\left(
		\substack{
			\pi_{lp}( \poolcontentvector{}{}{}, \lpac{}{}{A}, \lpac{}{}{B}, a_{lp}) \\ \times
			U_{lp}(\poolcontentvector{}{}{}, \hat{a}_{lp},a_i^{\textit{BA}}(\uac{}{}{\textit{BA}}))
		}
		\right)
	\right]\nonumber \\
	\stackrel{(\ref{theorem_perfect_parallelism_and_lp_proof7})}{\leq}& E_{ \uac{}{}{\textit{BA}} \sim D^{BA}} \left[
		U_{lp}(\poolcontentvector{}{}{},\hat{a}_{lp},a_i^{\textit{BA}}(\uac{}{}{\textit{BA}}))
	\right]\nonumber \\
	\stackrel{(\ref{theorem_perfect_parallelism_and_lp_proof50})}{=}& U_{lp}(\poolcontentvector{}{}{}, \hat{a}_{lp},  \tau^{BA}, \tau^{AB},\lpac{}{}{A}, \lpac{}{}{B}) \,\, .
	\end{align}

	Consider the utility of the liquidity provider under action $\hat{a}_{lp}$ of evenly depositing tokens in all shards, system state~$\poolcontentvector{}{}{}$,~$\lpac{}{}{A}$,~ $\lpac{}{}{B}$, and the BA trader strategy~$\tau^{BA}$.
	From the definition of the utility function of the liquidity provider (Equation~\ref{Ulp_strategy_strategy}), the utility of the liquidity provider's strategy $\tau_{lp}$ is equal to the revenue of the liquidity provider under action $\hat{a}_{lp}$:

	\begin{align}\label{theorem_perfect_parallelism_and_lp_proof54}
		&U_{lp}(\poolcontentvector{}{}{},\lpac{}{}{A}, \lpac{}{}{B}, \tau_{lp}, \tau^{BA}, \tau^{AB}) \nonumber \\
		=&\sum_{a_{lp} \in \mathcal{A}_{lp}(\lpac{}{}{A} , \lpac{}{}{B} )}
		\left(
			\substack
			{
				\tau_{lp}( \poolcontentvector{}{}{}, \lpac{}{}{A}, \lpac{}{}{B}, a_{lp}) \\
				\times U_{lp}(\poolcontentvector{}{}{},\lpac{}{}{A}, \lpac{}{}{B}, a_{lp},  \tau^{BA}, \tau^{AB})
			}
		\right)\nonumber \\
		=& U_{lp}(\poolcontentvector{}{}{},\lpac{}{}{A}, \lpac{}{}{B}, \hat{a}_{lp},  \tau^{BA}, \tau^{AB})   \,\, .
	\end{align}

	Then the liquidity provider strategy $\tau_{lp}$ has no smaller utility than $\pi^{BA}$:
	\begin{align*}
		&U_{lp}(\poolcontentvector{}{}{},\lpac{}{}{A}, \lpac{}{}{B},  \pi_{lp}, \tau^{BA}, \tau^{AB}) \nonumber \\
		\stackrel{(\ref{Ulp_strategy_strategy})}{=}& \sum_{a_{lp} \in \mathcal{A}_{lp}(\lpac{}{}{A} , \lpac{}{}{B} )} \left(
		\substack{\pi_{lp}( \poolcontentvector{}{}{}, \lpac{}{}{A}, \lpac{}{}{B}, a_{lp}) \\ \times
		U_{lp}(\poolcontentvector{}{}{},\lpac{}{}{A}, \lpac{}{}{B}, a_{lp},  \tau^{BA}, \tau^{AB})
		}
		\right) \nonumber \\
		\stackrel{(\ref{theorem_perfect_parallelism_and_lp_proof3})}{\leq} & \sum_{a_{lp} \in \mathcal{A}_{lp}(\lpac{}{}{A} , \lpac{}{}{B} )} \left(
			\substack{\pi_{lp}( \poolcontentvector{}{}{}, \lpac{}{}{A}, \lpac{}{}{B}, a_{lp}) \\ \times
			U_{lp}(\poolcontentvector{}{}{},\lpac{}{}{A}, \lpac{}{}{B}, \hat{a}_{lp},  \tau^{BA}, \tau^{AB})
			}
			\right) \nonumber \\
			\stackrel{\;\;\;\;\;\;\;}{=}& U_{lp}(\poolcontentvector{}{}{},\lpac{}{}{A}, \lpac{}{}{B}, \hat{a}_{lp},  \tau^{BA}, \tau^{AB}) \nonumber \\
		\stackrel{(\ref{theorem_perfect_parallelism_and_lp_proof54})}{=}& U_{lp}(\poolcontentvector{}{}{}, \tau_{lp},  \tau^{BA}, \tau^{AB},\lpac{}{}{A}, \lpac{}{}{B}) \nonumber \,\, .
	\end{align*}
	That is, the liquidity provider strategy $\tau_{lp}$ is the best response to the trader strategy $\tau^{BA}$.

	Therefore, the liquidity provider strategy $\tau_{lp}$ and any best response of the \textit{BA} trader $\tau^{BA}$ are an SPNE.
\end{proof}

The above theorem indicates that the liquidity provider keeps the same amount of deposited tokens in the shard after her action.
Therefore, the system always works in a state where all shards have the same amount of deposited tokens.

Since randomly choosing one of the smallest shards to trade is a dominant strategy for the trader, the system always works in perfect parallelism:

\begin{restatable}{cor}{corollaryalwaysstable}
	Denote by $\hat{a}_{lp} = \left(\left(\frac{1}{n}\lpac{}{}{A},  \frac{1}{n}\lpac{}{}{B}\right),\cdots\right)$ the action of evenly depositing tokens in all shards.
	In $\Gamma_n(tf_{\textit{SAMM}})$, if for all $i$ and $j$ that the liquidity amounts are the same, $\poolcontent{i}{}{A} = \poolcontent{j}{}{A}$ and $\poolcontent{i}{}{B} = \poolcontent{j}{}{B}$, the liquidity provider strategy which only takes action $\hat{a}_{lp}$,
	\begin{equation*}
		\tau_{lp}( \poolcontentvector{}{}{}, \lpac{}{}{A}, \lpac{}{}{B}, a_{lp}) = \begin{cases}
			1, & \text{if } a_{lp} = \hat{a}_{lp}  \\
			0, & \text{Otherwise.}\nonumber 
		  \end{cases}\nonumber \\
	\end{equation*}
	and the \textit{BA} trader strategy of randomly selecting one of the smallest shards to trade,
	\begin{align*}
		\tau^{BA}(\poolcontentvector{}{}{}, \uac{}{}{\textit{BA}}, a^{\textit{BA}}) = \begin{cases}
			\frac{1}{n_{\min}(\poolcontentvector{}{}{})}, & \text{if } a^{\textit{BA}} \in \SmallestAction(\uac{}{}{\textit{BA}}, \poolcontentvector{}{}{})  \\
			0, & \text{Otherwise.}
		  \end{cases}
	\end{align*}
	constitute an SPNE.
\end{restatable}

\subsubsection{Convergence to Balanced Shards}
We now show that even if the system reaches an unbalanced state, maybe due to an attacker, it converges to a balanced state since the liquidity provider uses the fillup strategy.
We can conclude that the fillup strategy is the only best response in all SPNE:

\begin{restate}{Theorem~\ref{theorem:always_fillup}}
	\contenttheoremalwaysfillup	
\end{restate}

\begin{proof}[Proof Sketch]
	Given any action $a_{lp}$ that is not the fillup action, we can construct a new action $a_{lp}'$ that is strictly better than~$a_{lp}$.
	By ensuring the smallest shard in $\poolcontentvector{}{}{} + a_{lp}'$ is larger than that in $\poolcontentvector{}{}{} + a_{lp}$, we increase trading fees garnered from each transaction. 
	Moreover, this smallest shard is also the smallest before the action, maximizing the liquidity provider's share. 
	Consequently, the liquidity provider earns higher revenue under $a_{lp}'$ than under $a_{lp}$. 
	Thus, any strategy incorporating an action other than the fillup action is not optimal. Figure~\ref{fig:action} illustrates an example of the~$a_{lp}'$ construction.
\end{proof}

\begin{proof}
	We prove this by contradiction.
	(1)Initially, for any action that is not the fillup action, we construct another action resulting in a larger smallest shard.
	(2)Second, we prove that the constructed action has higher utility than the original action.
	(3)Third, for any strategy that does not always take the fillup action, we construct a new strategy based on the constructed action.
	(4)Finally, we prove that the new strategy has a higher utility than the original strategy, which means that the original strategy is not the best response.
	
	(1) Construction of the new action:

	Consider any action of a liquidity provider ${a_{lp} = \left(\left(\lpac{1}{}{A}, \lpac{1}{}{B}\right), \cdots, \left(\lpac{n}{}{A}, \lpac{n}{}{B}\right)\right) \in \mathcal{A}_{lp}(\lpac{}{}{A} , \lpac{}{}{B} )}$. We construct another action $a_{lp}'$.
	First, we want the smallest shard in $\poolcontentvector{}{}{} + a_{lp}'$ to be larger than the smallest shard in $\poolcontentvector{}{}{} + a_{lp}$, which lead to a higher trading fee in a single trade according to Lemma~\ref{lemma:trading_fee}.
	Second, we want the smallest shard in $\poolcontentvector{}{}{} + a_{lp}'$ to be unique and also the smallest in $\poolcontentvector{}{}{}$ to make the liquidity provider have more shares in the smallest shard than in $\poolcontentvector{}{}{} + a_{lp}$.

	Denote by $i^{\ast}$ the smallest shard in $\poolcontentvector{}{}{}$ with the smaller index, i.e.
	\begin{align}\label{theorem_always_fillup_proof20}
		\forall j, \poolcontent{i^{\ast}}{}{A} \leq \poolcontent{j}{}{A}  \,\,;\nonumber \\
		\forall j, \poolcontent{i^{\ast}}{}{A} = \poolcontent{j}{}{A} \Rightarrow i^{\ast} \leq j \,\,.
	\end{align}
	Consider the fillup action $a_{lp}^{\textit{fill}}(\poolcontentvector{}{}{}, \lpac{}{}{A}, \lpac{}{}{B}) = \left(\left(\fillupac{1}{}{A}, \fillupac{1}{}{B}\right), \cdots, \left(\fillupac{n}{}{A}, \fillupac{n}{}{B}\right)\right)$.
	If $a_{lp} = a_{lp}^{\textit{fill}}(\poolcontentvector{}{}{}, \lpac{}{}{A}, \lpac{}{}{B})$, we take ${a_{lp}' = a_{lp}}$ and we are done.
	If ${a_{lp} \neq a_{lp}^{\textit{fill}}(\poolcontentvector{}{}{}, \lpac{}{}{A}, \lpac{}{}{B})}$, from Lemma~\ref{lemma:unique_fillup}, we have ${\rho^A(\poolcontentvector{}{}{}, \lpac{}{}{A}, \lpac{}{}{B} + a_{lp}) < \rho^A(\poolcontentvector{}{}{}, \lpac{}{}{A}, \lpac{}{}{B} + a_{lp}^{\textit{fill}}(\poolcontentvector{}{}{}, \lpac{}{}{A}, \lpac{}{}{B}))}$.
	Denote this difference by ${z = \rho^A(\poolcontentvector{}{}{}, \lpac{}{}{A}, \lpac{}{}{B} + a_{lp}^{\textit{fill}}(\poolcontentvector{}{}{}, \lpac{}{}{A}, \lpac{}{}{B})) - \rho^A(\poolcontentvector{}{}{}, \lpac{}{}{A}, \lpac{}{}{B} + a_{lp})}$.
	We construct the action $a_{lp}'= \left(\left(\lpac{1}{}{A'}, \lpac{1}{}{B'}\right), \cdots, \left(\lpac{n}{}{A'}, \lpac{n}{}{B'}\right)\right)$ as follows. Figure~\ref{fig:action} shows an example of the construction.
	\begin{align*}
		\textit{ for } i = i^{\ast}: \lpac{i}{}{A'} = \fillupac{i}{}{A} - \frac{1}{2}z,  \nonumber \\
		\text{ for } i \neq i^{\ast}: \lpac{i}{}{A'} = \fillupac{i}{}{A} + \frac{1}{2(n-1)}z  \,\, .
	\end{align*}
	Then, $\textit{shard}_{i^{\ast}}$ in $\poolcontentvector{}{}{} + a_{lp}'$ is the only shard with the minimal amount of deposited \textit{token~A}, make it the best choice for the subsequent trader.
	\begin{equation}\label{theorem_always_fillup_proof0}
		\SmallestAction(\uac{}{}{\textit{BA}}, \poolcontentvector{}{}{} + a_{lp}') = \{a_{i^{\ast}}^{\textit{BA}}(\uac{}{}{\textit{BA}})\} \,\,.
	\end{equation}
	From the Definition~\ref{def:fillup_action}, $\rho^A(\poolcontentvector{}{}{} + a_{lp}^{\textit{fill}}) = \lpac{i^{\ast}}{}{A^{\ast}} + \poolcontent{i^{\ast}}{}{A}$.
	Then, the smallest amount of deposited \textit{token~A} in $\poolcontentvector{}{}{} + a_{lp}'$ is larger than that in $\poolcontentvector{}{}{} + a_{lp}$:
	\begin{align}\label{theorem_always_fillup_proof1}
		\lpac{i^{\ast}}{}{A'} + \poolcontent{i^{\ast}}{}{A} 
		&= \fillupac{i^{\ast}}{}{A} - \frac{1}{2}z + \poolcontent{i^{\ast}}{}{A} \nonumber \\
		&= (\fillupac{i^{\ast}}{}{A} + \poolcontent{i^{\ast}}{}{A}) - \frac{1}{2}z \nonumber \\
		&= \rho^A(\poolcontentvector{}{}{} + a_{lp}^{\textit{fill}}) 
		 - \frac{1}{2}z \nonumber \\
		 &> \rho^A(\poolcontentvector{}{}{} + a_{lp}^{\textit{fill}}) 
		 - z \nonumber \\
		&= \rho^A(\poolcontentvector{}{}{} + a_{lp}) \,\,.
	\end{align}
	(2) The revenues of actions $a_{lp}$ and $a_{lp}'$:
	To compare the revenue due to both actions, we consider the strategies and utility of the subsequent trader.
	Any best response of the trader $\tau^{BA}( \poolcontentvector{}{}{}, \uac{}{}{\textit{BA}}, a^{\textit{BA}})$ uses the smallest shard (Lemma~\ref{lemma:optimal_trader_strategy_set}):
	\begin{equation*}
		\sum_{a_i^{\textit{BA}}(\uac{}{}{\textit{BA}}) \in \SmallestAction(\uac{}{}{\textit{BA}}, \poolcontentvector{}{}{})} \tau^{BA}( \poolcontentvector{}{}{}, \uac{}{}{\textit{BA}}, a_i^{\textit{BA}}(\uac{}{}{\textit{BA}})) = 1 \,\,.
	\end{equation*}
	For the shard state $\poolcontentvector{}{}{} + a_{lp}'$, combining Equation~\ref{theorem_always_fillup_proof0}, the probability of taking action $a_{i^{\ast}}^{\textit{BA}}(\uac{}{}{\textit{BA}})$ is 1:
	\begin{align*}
		&\tau^{BA}( \poolcontentvector{}{}{}, \uac{}{}{\textit{BA}}, a_{i^{\ast}}^{\textit{BA}}(\uac{}{}{\textit{BA}})) \nonumber \\
		=& \sum_{a_i^{\textit{BA}}(\uac{}{}{\textit{BA}}) \in \SmallestAction(\uac{}{}{\textit{BA}}, \poolcontentvector{}{}{})} \tau^{BA}( \poolcontentvector{}{}{}, \uac{}{}{\textit{BA}}, a_i^{\textit{BA}}(\uac{}{}{\textit{BA}})) \nonumber \\
		=& 1 \,\,.
	\end{align*}

	Therefore, from Equation~\ref{Ulp_action_strategy_simplified}, the utility of the liquidity provider following action $a_{lp}'$ and trader's best response strategy $\tau^{BA}$ is
	\begin{multline}\label{theorem_always_fillup_proof2}
		U_{lp}(\poolcontentvector{}{}{}, a_{lp}',  \tau^{BA}, \tau^{AB},\lpac{}{}{A}, \lpac{}{}{B})  = \\
		E_{ \uac{}{}{\textit{BA}} \sim D^{BA}} \left[
			U_{lp}(\poolcontentvector{}{}{} , a_{lp}',a_{i^{\ast}}^{\textit{BA}}(\uac{}{}{\textit{BA}}))
		\right] .
	\end{multline}

	Similarly, the utility following action $a_{lp}$ is
	\begin{multline}\label{theorem_always_fillup_proof3}
		U_{lp}(\poolcontentvector{}{}{}, a_{lp},  \tau^{BA}, \tau^{AB},\lpac{}{}{A}, \lpac{}{}{B})  = \\
		E_{ \uac{}{}{\textit{BA}} \sim D^{BA}} \left[
		\sum_{\substack{a_i^{\textit{BA}}(\uac{}{}{\textit{BA}})\in \SmallestAction(\uac{}{}{\textit{BA}}, \poolcontentvector{}{}{})}}
			\left(
				\substack{U_{lp}(\poolcontentvector{}{}{}, a_{lp},a_i^{\textit{BA}}(\uac{}{}{\textit{BA}})) \\ \times \tau^{BA}(\poolcontentvector{}{}{} + a_{lp}, \uac{}{}{\textit{BA}}, a_i^{\textit{BA}})	}			
			\right)
		\right] .
	\end{multline}
	Thus, the revenue following action $a_{lp}'$ is (using Equation~\ref{TFBA}):
	\begin{multline}\label{theorem_always_fillup_proof4}
		U_{lp}(\poolcontentvector{}{}{}, a_{lp}',a_{i^{\ast}}^{\textit{BA}}(\uac{}{}{\textit{BA}})) = \\
		 p^B \times \textit{tf}( \poolcontent{i^{\ast}}{}{A} + \lpac{i^{\ast}}{}{A'}, \frac{p^A}{p^B} (\poolcontent{i^{\ast}}{}{A} + \lpac{i^{\ast}}{}{A'}), \uac{}{}{\textit{BA}}) \times \frac{\lpac{i^{\ast}}{}{A'}}{ \lpac{i^{\ast}}{}{A'} +  \poolcontent{i^{\ast}}{}{A}} \,\,.
	\end{multline}
	Similarly, we can expand the expression of $U_{lp}(\poolcontentvector{}{}{}, a_{lp},a_i^{\textit{BA}}(\uac{}{}{\textit{BA}}))$ for $a_i^{\textit{BA}} \in \SmallestAction(\uac{}{}{\textit{BA}}, \poolcontentvector{}{}{} + a_{lp})$ as
	\begin{multline}\label{theorem_always_fillup_proof5}
		U_{lp}(\poolcontentvector{}{}{}, a_{lp},a_i^{\textit{BA}}(\uac{}{}{\textit{BA}})) = \\
		 p^B \times \textit{tf}( \poolcontent{i}{}{A} + \lpac{i}{}{A}, \frac{p^A}{p^B} (\poolcontent{i}{}{A} + \lpac{i}{}{A}), \uac{}{}{\textit{BA}}) \times \frac{\lpac{i}{}{A}}{ \lpac{i}{}{A} +  \poolcontent{i}{}{A}} \,\,.
	\end{multline}
	Since $a_i^{\textit{BA}} \in \SmallestAction(\uac{}{}{\textit{BA}}, \poolcontentvector{}{}{} + a_{lp})$, $\textit{shard}_i$ has the smallest amount of deposited \textit{token~A} in $\poolcontentvector{}{}{} + a_{lp}$:
	\begin{equation}\label{theorem_always_fillup_proof6}
		\poolcontent{i}{}{A} + \lpac{i}{}{A} = \rho^A(\poolcontentvector{}{}{} + a_{lp}) \,\,.
	\end{equation}
	Interpreting Equation~\ref{theorem_always_fillup_proof1}, the smallest shard in $\poolcontentvector{}{}{} + a_{lp}'$ is larger than the smallest shard in $\poolcontentvector{}{}{} + a_{lp}$:
	\begin{equation}\label{theorem_always_fillup_proof7}
		\lpac{i^{\ast}}{}{A'} + \poolcontent{i^{\ast}}{}{A} > \poolcontent{i}{}{A} + \lpac{i}{}{A} \,\,.
	\end{equation}

	From lemma~\ref{lemma:trading_fee}, the trading fee of trading in a larger shard is larger,
	\begin{multline}\label{theorem_always_fillup_proof8}
		\textit{tf}( \poolcontent{i^{\ast}}{}{A} + \lpac{i^{\ast}}{}{A'}, \frac{p^A}{p^B} (\poolcontent{i^{\ast}}{}{A} + \lpac{i^{\ast}}{}{A'}), \uac{}{}{\textit{BA}}) > \\  \textit{tf}( \poolcontent{i}{}{A} + \lpac{i}{}{A}, \frac{p^A}{p^B} (\poolcontent{i}{}{A} + \lpac{i}{}{A}), \uac{}{}{\textit{BA}}) \,\,.
	\end{multline}
	From the definition of $i^{\ast}$ in Equation~\ref{theorem_always_fillup_proof20}, we have
	\begin{equation}\label{theorem_always_fillup_proof9}
		\poolcontent{i^{\ast}}{}{A} \leq \poolcontent{i}{}{A} \,\,.
	\end{equation}

	Combine Equations~\ref{theorem_always_fillup_proof7} and~\ref{theorem_always_fillup_proof9}, we have
	\begin{equation}\label{theorem_always_fillup_proof11}
		\frac{\lpac{i^{\ast}}{}{A'}}{ \lpac{i^{\ast}}{}{A'} +  \poolcontent{i^{\ast}}{}{A}} = 1 - \frac{ \poolcontent{i^{\ast}}{}{A}}{\lpac{i^{\ast}}{}{A'} +  \poolcontent{i^{\ast}}{}{A}} > 1 - \frac{ \poolcontent{i}{}{A}}{\lpac{i}{}{A} +  \poolcontent{i}{}{A}} = \frac{\lpac{i}{}{A}}{ \lpac{i}{}{A} +  \poolcontent{i}{}{A}} \,\,.
	\end{equation}

	Combining Equations~\ref{theorem_always_fillup_proof4},~\ref{theorem_always_fillup_proof5},~\ref{theorem_always_fillup_proof8} and~\ref{theorem_always_fillup_proof11}, the revenue of the liquidity provider with action $a_{lp}'$ is higher than with action $a_{lp} \neq a_{lp}^{\textit{fill}}(\poolcontentvector{}{}{}, \lpac{}{}{A}, \lpac{}{}{B})$:
	\begin{equation}\label{theorem_always_fillup_proof12}
		U_{lp}(\poolcontentvector{}{}{}, a_{lp}',a_{i^{\ast}}^{\textit{BA}}(\uac{}{}{\textit{BA}})) > U_{lp}(\poolcontentvector{}{}{}, a_{lp},a_i^{\textit{BA}}(\uac{}{}{\textit{BA}})) \,\,.
	\end{equation}
	Combining Equations~\ref{theorem_always_fillup_proof2} and~\ref{theorem_always_fillup_proof3}, the utility of the liquidity provider under action $a_{lp}'$ is higher than that under action $a_{lp} \neq a_{lp}^{\textit{fill}}(\poolcontentvector{}{}{}, \lpac{}{}{A}, \lpac{}{}{B})$:
	\begin{multline*}
		U_{lp}(\poolcontentvector{}{}{}, a_{lp}',  \tau^{BA}, \tau^{AB},\lpac{}{}{A}, \lpac{}{}{B}) > \\ U_{lp}(\poolcontentvector{}{}{}, a_{lp},  \tau^{BA}, \tau^{AB},\lpac{}{}{A}, \lpac{}{}{B}) \,\,.
	\end{multline*}

	When $a_{lp} = a_{lp}^{\textit{fill}}(\poolcontentvector{}{}{}, \lpac{}{}{A}, \lpac{}{}{B})$, we have $a_{lp}' = a_{lp}$.
	Therefore, the following inequality holds for all $a_{lp} \in \mathcal{A}_{lp}(\lpac{}{}{A} , \lpac{}{}{B} )$:
	\begin{equation}\label{theorem_always_fillup_proof13}
		U_{lp}(\poolcontentvector{}{}{}, a_{lp}',a_{i^{\ast}}^{\textit{BA}}(\uac{}{}{\textit{BA}})) \geq U_{lp}(\poolcontentvector{}{}{}, a_{lp},a_i^{\textit{BA}}(\uac{}{}{\textit{BA}})) \,\,.
	\end{equation}

	(3) Construction of a new strategy:
	We showed that for any action $a_{lp} \neq a_{lp}^{\textit{fill}}(\poolcontentvector{}{}{}, \lpac{}{}{A}, \lpac{}{}{B})$, the constructed action $a_{lp}'$ has a higher revenue.
	Now, for any liquidity provider strategy $\pi_{lp}$, we can construct a new strategy $\pi_{lp}'$ where for every action $a_{lp}$, the probability of constructed action $a_{lp}'$ in the new strategy is the same as the original action $a_{lp}$ in the original strategy:
	\begin{equation*}
		\pi_{lp}'( \poolcontentvector{}{}{}, \lpac{}{}{A}, \lpac{}{}{B}, a_{lp}') = \pi_{lp}( \poolcontentvector{}{}{}, \lpac{}{}{A}, \lpac{}{}{B}, a_{lp}) \,\,.
	\end{equation*}

	If the original strategy is different from the fillup strategy $\pi_{lp} \neq \tau^{\textit{fill}}_{lp}$,
	then 
	$\exists \tilde{a}_{lp} \in \mathcal{A}_{lp}(\lpac{}{}{A} , \lpac{}{}{B} )$, s.t. $\tilde{a}_{lp} \neq a_{lp}^{\textit{fill}}(\poolcontentvector{}{}{}, \lpac{}{}{A}, \lpac{}{}{B})$ and $\pi_{lp}( \poolcontentvector{}{}{}, \lpac{}{}{A}, \lpac{}{}{B}, a_{lp}) > 0$.

	(4) Comparison of utilities:
	From the definition (Equation~\ref{Ulp_strategy_strategy}), the utility of the liquidity provider under $\pi_{lp}$ is
	\begin{multline}\label{theorem_always_fillup_proof30}
		U_{lp}(\poolcontentvector{}{}{},\pi_{lp},  \tau^{BA}, \tau^{AB},\lpac{}{}{A}, \lpac{}{}{B}) = \\
		\sum_{a_{lp} \in \mathcal{A}_{lp}(\lpac{}{}{A} , \lpac{}{}{B} )}
		\left(
			\substack
			{
				\pi_{lp}( \poolcontentvector{}{}{}, \lpac{}{}{A}, \lpac{}{}{B}, a_{lp}) \\ \times U_{lp}(\poolcontentvector{}{}{}, a_{lp},  \tau^{BA}, \tau^{AB},\lpac{}{}{A}, \lpac{}{}{B})
			}
		\right)
 	\,\,.
	\end{multline}

	Similarly, the utility of the liquidity provider under $\pi_{lp}'$ is
	\begin{align}\label{theorem_always_fillup_proof31}
		&U_{lp}(\poolcontentvector{}{}{}, \pi_{lp}',  \tau^{BA}, \tau^{AB},\lpac{}{}{A}, \lpac{}{}{B}) \nonumber\\ 
		=&\sum_{a_{lp}' \in \mathcal{A}_{lp}(\lpac{}{}{A} , \lpac{}{}{B} )} 
		\left(
		\substack{\pi_{lp}'( \poolcontentvector{}{}{}, \lpac{}{}{A}, \lpac{}{}{B}, a_{lp}') \\ \times U_{lp}(\poolcontentvector{}{}{}, a_{lp}',  \tau^{BA}, \tau^{AB},\lpac{}{}{A}, \lpac{}{}{B})}
		\right)\nonumber \\
		=& \sum_{a_{lp}' \in \mathcal{A}_{lp}(\lpac{}{}{A} , \lpac{}{}{B} ) \setminus \{\tilde{a}_{lp}'\}} 
		\left(
		\substack{\pi_{lp}'( \poolcontentvector{}{}{}, \lpac{}{}{A}, \lpac{}{}{B}, a_{lp}') \\ \times U_{lp}(\poolcontentvector{}{}{}, a_{lp}',  \tau^{BA}, \tau^{AB},\lpac{}{}{A}, \lpac{}{}{B})}
		\right) \nonumber \\
		& + \pi_{lp}'( \poolcontentvector{}{}{}, \lpac{}{}{A}, \lpac{}{}{B}, \tilde{a}_{lp}') \times U_{lp}(\poolcontentvector{}{}{}, \tilde{a}_{lp}',  \tau^{BA}, \tau^{AB},\lpac{}{}{A}, \lpac{}{}{B}) \,\,.
	\end{align}
	Since $\pi_{lp}'( \poolcontentvector{}{}{}, \lpac{}{}{A}, \lpac{}{}{B}, \tilde{a}_{lp}') = \pi_{lp}( \poolcontentvector{}{}{}, \lpac{}{}{A}, \lpac{}{}{B}, \tilde{a}_{lp}) > 0$, from Equation~\ref{theorem_always_fillup_proof12}, we have
	\begin{multline}\label{theorem_always_fillup_proof14}
		\pi_{lp}'( \poolcontentvector{}{}{}, \lpac{}{}{A}, \lpac{}{}{B}, \tilde{a}_{lp}') \times U_{lp}(\poolcontentvector{}{}{}, \tilde{a}_{lp}',  \tau^{BA}, \tau^{AB},\lpac{}{}{A}, \lpac{}{}{B}) > \\
		\pi_{lp}( \poolcontentvector{}{}{}, \lpac{}{}{A}, \lpac{}{}{B}, \tilde{a}_{lp}) \times U_{lp}(\poolcontentvector{}{}{}, \tilde{a}_{lp},  \tau^{BA}, \tau^{AB},\lpac{}{}{A}, \lpac{}{}{B}) \,\,.
	\end{multline}
	Since $\pi_{lp}'( \poolcontentvector{}{}{}, \lpac{}{}{A}, \lpac{}{}{B}, a_{lp}') = \pi_{lp}( \poolcontentvector{}{}{}, \lpac{}{}{A}, \lpac{}{}{B}, {a}_{lp}) > 0$, from Equation~\ref{theorem_always_fillup_proof13}, we have 
	\begin{multline}\label{theorem_always_fillup_proof15}
		\pi_{lp}'( \poolcontentvector{}{}{}, \lpac{}{}{A}, \lpac{}{}{B}, a_{lp}') \times U_{lp}(\poolcontentvector{}{}{}, a_{lp}',  \tau^{BA}, \tau^{AB},\lpac{}{}{A}, \lpac{}{}{B}) \geq \\
		\pi_{lp}( \poolcontentvector{}{}{}, \lpac{}{}{A}, \lpac{}{}{B}, a_{lp}) \times U_{lp}(\poolcontentvector{}{}{}, a_{lp},  \tau^{BA}, \tau^{AB},\lpac{}{}{A}, \lpac{}{}{B}) \,\,.
	\end{multline}
	Combining Equations~\ref{theorem_always_fillup_proof30}
	~\ref{theorem_always_fillup_proof31} ,~\ref{theorem_always_fillup_proof14} and~\ref{theorem_always_fillup_proof15}, the utility of the liquidity provider under $\pi_{lp}'$ is higher than that under $\pi_{lp}$:
	\begin{multline*}
		U_{lp}(\poolcontentvector{}{}{}, \pi_{lp}',  \tau^{BA}, \tau^{AB},\lpac{}{}{A}, \lpac{}{}{B}) > \\ U_{lp}(\poolcontentvector{}{}{}, \pi_{lp},  \tau^{BA}, \tau^{AB},\lpac{}{}{A}, \lpac{}{}{B}) \,\,.
	\end{multline*}
	Therefore, any liquidity provider strategy $\pi_{lp} \neq \tau^{\textit{fill}}_{lp}$ is not the best response when the trader follows any best response.
	Therefore, in all SPNE, the liquidity provider's best response is the fillup strategy $\tau^{\textit{fill}}_{lp}$.
\end{proof}

\subsubsection{Specific SPNE under deviation}

The above theorem does not prove the existence of an SPNE.
We settle this by finding a specific SPNE in $\Gamma_n(\textit{tf}_{SAMM})$.
In this SPNE, if there are multiple smallest shards, the trader uses the smallest shard in the last step.
If there is more than one smallest shard in the last step, the trader selects the one with the smallest index.
Denote by $i_{\min}(\poolcontentvector{}{}{})$ the index of the shard with the smallest amount of deposited \textit{token~A} in $\poolcontentvector{}{}{}$.
When $i^{\ast} = i_{\min}(\poolcontentvector{}{}{})$, we have
\begin{align}\label{eq:coincidence}
	\forall j, \poolcontent{i^{\ast}}{}{A} \leq \poolcontent{j}{}{A}  \nonumber \,\,;\\
	\forall j, \poolcontent{i^{\ast}}{}{A} = \poolcontent{j}{}{A} \Rightarrow i^{\ast} \leq j \,\,.
\end{align}
If the shard with index $i_{\min}(\poolcontentvector{}{}{})$ is the only shard, the \textit{BA} trader would only trade in that shard.
Then the liquidity provider only taking the fill-up action is the best response strategy:

\begin{restatable}{thm}{theoremfillup}\label{theorem_fill_up}
	In $\Gamma_n(\textit{tf}_{SAMM})$, assume that the shard state in step~$k$ is $\poolcontentvector{}{k}{}$, $i^{\ast} = i_{\min}(\poolcontentvector{}{k}{})$ is the index of the shard with the smallest amount of deposited \textit{token~A} in $\poolcontentvector{}{k}{}$.
	Then the trader strategy to trade in the smallest shard in the last step if it is the smallest one, or randomly select one of the smallest shards, namely,
	\begin{equation}\label{eq:theorem_fill_up_tau_t}
		\tau^{BA}(\poolcontentvector{}{}{}, \uac{}{}{\textit{BA}}, a^{\textit{BA}}) = 
		\begin{cases}
			1, & \text{if } \poolcontent{i^{\ast}}{}{} = \rho^A(\poolcontentvector{}{k}{},\lpac{}{}{A}, \lpac{}{}{B}) \\
			& \text{and }a^{\textit{BA}} = a_{i^{\ast}}^{\textit{BA}}(\uac{}{}{\textit{BA}})
			\\
			\frac{1}{n_{\min}(\poolcontentvector{}{}{})}, & \text{if } \poolcontent{i^{\ast}}{}{} = \rho^A(\poolcontentvector{}{k}{},\lpac{}{}{A}, \lpac{}{}{B}) \\
			& \text{and }a^{\textit{BA}} \in \SmallestAction(\uac{}{}{\textit{BA}}, \poolcontentvector{}{}{}) \\
			0, & \text{Otherwise.}
		\end{cases}				
	\end{equation}
	and the liquidity provider's fillup strategy:
	\begin{equation*}
		\tau^{\textit{fill}}_{lp}( \poolcontentvector{}{}{}, \lpac{}{}{A}, \lpac{}{}{B}, a_{lp}) = \begin{cases}
			1, & \text{if } a_{lp} = a_{lp}^{\textit{fill}}(\poolcontentvector{}{}{}, \lpac{}{}{A}, \lpac{}{}{B})  \\
			0, & \text{Otherwise}
		  \end{cases},
	\end{equation*}
	are an SPNE in step $k$.
\end{restatable}

\begin{proof}[Proof Sketch]
	Since the trader always trades in one of the smallest shards, $\tau^{BA}$ is a dominant strategy for her.
	Therefore, we only need to prove that $\tau_{lp}$ is the best response to $\tau^{BA}$.
	If the liquidity provider does not take the fill-up action, then the trader trades in the smallest shard with a smaller amount of deposited \textit{token~A} than that under the fill-up action.
	Since larger shards have a higher trading fee under a fixed trade, the utility of the liquidity provider is higher when she takes the fill-up action.
\end{proof}
\begin{proof}
	Considering the revenue of the liquidity provider under the action $a_{lp}^{\textit{fill}}(\poolcontentvector{}{}{}, \lpac{}{}{A}, \lpac{}{}{B})$.
	The amount of deposited \textit{token~A} of $\textit{shard}_{i^{\ast}}$ in $\poolcontentvector{}{}{} + a_{lp}^{\textit{fill}}(\poolcontentvector{}{}{}, \lpac{}{}{A}, \lpac{}{}{B})$  is $\poolcontent{i}{}{A} + \fillupac{i^{\ast}}{}{A}$.
	
	We first prove that $\fillupac{i^{\ast}}{}{A} > 0$ by contradiction.
	If $\fillupac{i^{\ast}}{}{A} = 0$, then $\poolcontent{i^{\ast}}{}{A} + \fillupac{i^{\ast}}{}{A} = \poolcontent{i^{\ast}}{}{A}$. Then for any input amount $\fillupac{j}{}{A} > 0$, we have $ \poolcontent{j}{}{A} + \fillupac{j}{}{A} > \poolcontent{j}{}{A}$.
	From the definition of $i^{\ast}= 
	i_{\min}(\poolcontentvector{}{}{})$ (Equation~\ref{eq:coincidence}), we have $\poolcontent{i^{\ast}}{}{A} \leq \poolcontent{j}{}{A}$.
	Therefore, the amount of deposited \textit{token~A} of $\textit{shard}_{i^{\ast}}$ in $\poolcontentvector{}{}{} + a_{lp}^{\textit{fill}}(\poolcontentvector{}{}{}, \lpac{}{}{A}, \lpac{}{}{B})$ is strictly smaller than $\textit{shard}_j$
	\begin{equation*}
		 \poolcontent{j}{}{A} + \fillupac{j}{}{A}  > \poolcontent{j}{}{A} \geq \poolcontent{i^{\ast}}{}{A} = \poolcontent{j}{}{A} + \fillupac{i^{\ast}}{}{A} \,\,.
	\end{equation*}
	This contradicts the definition of $a_{lp}^{\textit{fill}}(\poolcontentvector{}{}{}, \lpac{}{}{A}, \lpac{}{}{B})$ (Definition~\ref{def:fillup_action}) since $\fillupac{j}{}{A} >0$. 
	Therefore, $\fillupac{i^{\ast}}{}{A} > 0$.
	Then also from that definition, the amount of deposited \textit{token~A} of $\textit{shard}_{i^{\ast}}$ after the fillup action is smallest among all shards:
	\begin{equation}\label{theorem_fill_up_proof0}
		\poolcontent{i^{\ast}}{}{A} + \fillupac{i^{\ast}}{}{A} = \rho^A(\poolcontentvector{}{}{} + a_{lp}^{\textit{fill}}(\poolcontentvector{}{}{}, \lpac{}{}{A}, \lpac{}{}{B})) \,\,.
	\end{equation}
	Therefore, from the definition of $\tau^{BA}$ (Equation~\ref{eq:theorem_fill_up_tau_t}), we have $\tau^{BA}(\poolcontentvector{}{}{} + a_{lp}^{\textit{fill}}(\poolcontentvector{}{}{}, \lpac{}{}{A}, \lpac{}{}{B}), \uac{}{}{\textit{BA}}, a_{i^{\ast}}^{\textit{BA}}) = 1$.

	Then, we turn to the revenue of the liquidity provider with the fillup action and prove that it is no smaller than any other action.
	From Equation~\ref{Ulp_action_strategy_simplified}, the utility of the liquidity provider under the fill-up action and the trader's best response $\tau^{BA}$ is:
	\begin{multline}\label{theorem_fill_up_proof1}
		U_{lp}(\poolcontentvector{}{}{}, a_{lp}^{\textit{fill}}(\poolcontentvector{}{}{}, \lpac{}{}{A}, \lpac{}{}{B}),  \tau^{BA}, \tau^{AB},\lpac{}{}{A}, \lpac{}{}{B}) = \\
		E_{ \uac{}{}{\textit{BA}} \sim D^{BA}} \left[
			U_{lp}(\poolcontentvector{}{}{}, a_{lp}^{\textit{fill}}(\poolcontentvector{}{}{}, \lpac{}{}{A}, \lpac{}{}{B}),a_{i^{\ast}}^{\textit{BA}}(\uac{}{}{\textit{BA}}))
		\right] \,\,.
	\end{multline}
	Similarly, the utility of the liquidity provider under any action $a_{lp} \in \mathcal{A}_{lp}(\lpac{}{}{A}, \lpac{}{}{B} )$ and the trader's best response $\tau^{BA}$ is
	\begin{multline}\label{theorem_fill_up_proof2}
		U_{lp}(\poolcontentvector{}{}{}, a_{lp},  \tau^{BA}, \tau^{AB},\lpac{}{}{A}, \lpac{}{}{B}) = \\
		E_{ \uac{}{}{\textit{BA}} \sim D^{BA}} \left[
		\sum_{\substack{a_i^{\textit{BA}}(\uac{}{}{\textit{BA}})\\\in \SmallestAction(\uac{}{}{\textit{BA}}, \poolcontentvector{}{}{})}}
			\left(
				\substack{U_{lp}(\poolcontentvector{}{}{}, a_{lp},a_i^{\textit{BA}}(\uac{}{}{\textit{BA}}))  \times\\ \tau^{BA}(\poolcontentvector{}{}{} + a_{lp}, \uac{}{}{\textit{BA}}, a_i^{\textit{BA}})	}			
			\right)
		\right] \,\,.
	\end{multline}
	From the definition of $U_{lp}( \poolcontentvector{}{}{}, a_{lp}, a^{\textit{BA}})$ (Equation~\ref{TFBA}), we have
	\begin{multline}\label{theorem_fill_up_proof3}
		U_{lp}(\poolcontentvector{}{}{}, a_{lp}^{\textit{fill}}(\poolcontentvector{}{}{}, \lpac{}{}{A}, \lpac{}{}{B}),a_{i^{\ast}}^{\textit{BA}}(\uac{}{}{\textit{BA}})) = \\
		 p^B \times \textit{tf}( \poolcontent{i^{\ast}}{}{A} + \fillupac{i^{\ast}}{}{A}, \frac{p^A}{p^B} (\poolcontent{i^{\ast}}{}{A} + \fillupac{i^{\ast}}{}{A}), \uac{}{}{\textit{BA}}) \times \frac{\fillupac{i^{\ast}}{}{A}}{ \fillupac{i^{\ast}}{}{A} +  \poolcontent{i^{\ast}}{}{A}} \,\,,
	\end{multline}
	and
	\begin{multline}\label{theorem_fill_up_proof4}
		U_{lp}(\poolcontentvector{}{}{}, a_{lp},a_i^{\textit{BA}}(\uac{}{}{\textit{BA}})) = \\
		 p^B \times \textit{tf}( \poolcontent{i}{}{A} + \lpac{i}{}{A}, \frac{p^A}{p^B} (\poolcontent{i}{}{A} + \lpac{i}{}{A}), \uac{}{}{\textit{BA}}) \times \frac{\lpac{i}{}{A}}{ \lpac{i}{}{A} +  \poolcontent{i}{}{A}} \,\,.
	\end{multline}
	Since $a_i^{\textit{BA}} \in \SmallestAction(\uac{}{}{\textit{BA}}, \poolcontentvector{}{}{})$, $\textit{shard}_i$ has the smallest amount of deposited \textit{token~A} in $\poolcontentvector{}{}{} + a_{lp}$:
	\begin{equation}\label{theorem_fill_up_proof5}
		\poolcontent{i}{}{A} + \lpac{i}{}{A} = \rho^A(\poolcontentvector{}{}{} + a_{lp}) \,\,.
	\end{equation}
	From Lemma~\ref{lemma:unique_fillup}, the minimal amount of deposited \textit{token~A} in $\poolcontentvector{}{}{} + a_{lp}^{\textit{fill}}(\poolcontentvector{}{}{}, \lpac{}{}{A}, \lpac{}{}{B})$ is no less than that in  $\poolcontentvector{}{}{} + a_{lp}$ for any $a_{lp} \in \mathcal{A}_{lp}(\lpac{}{}{A} , \lpac{}{}{B} )$:
	\begin{equation}\label{theorem_fill_up_proof6}
		\rho^A(\poolcontentvector{}{}{} + a_{lp}^{\textit{fill}}(\poolcontentvector{}{}{}, \lpac{}{}{A}, \lpac{}{}{B})) \geq \rho^A(\poolcontentvector{}{}{} + a_{lp}) \,\,.
	\end{equation}
	Combining Equations~\ref{theorem_fill_up_proof0},~\ref{theorem_fill_up_proof5} and~\ref{theorem_fill_up_proof6}, we have 
	\begin{equation}\label{theorem_fill_up_proof7}
		\poolcontent{i^{\ast}}{}{A} + \fillupac{i^{\ast}}{}{A} \geq \poolcontent{i}{}{A} + \lpac{i}{}{A} \,\,.
	\end{equation}
	From Lemma~\ref{lemma:trading_fee}, the trading fee of trading in a larger shard is larger:
	\begin{multline}\label{theorem_fill_up_proof8}
		\textit{tf}( \poolcontent{i^{\ast}}{}{A} + \fillupac{i^{\ast}}{}{A}, \frac{p^A}{p^B} (\poolcontent{i^{\ast}}{}{A} + \fillupac{i^{\ast}}{}{A}), \uac{}{}{\textit{BA}}) >\\ \textit{tf}( \poolcontent{i}{}{A} + \lpac{i}{}{A}, \frac{p^A}{p^B} (\poolcontent{i}{}{A} + \lpac{i}{}{A}), \uac{}{}{\textit{BA}}) \,\,.
	\end{multline}
	From the definition of $i^{\ast}= 
	i_{\min}(\poolcontentvector{}{}{})$ (Equation~\ref{eq:coincidence}), we have $\poolcontent{i^{\ast}}{}{A} \leq \poolcontent{i}{}{A}$.
	Combining with Equation~\ref{theorem_fill_up_proof7}, we have
	\begin{equation}\label{theorem_fill_up_proof9}
	\frac{ \poolcontent{i^{\ast}}{}{A}}{\fillupac{i^{\ast}}{}{A} +  \poolcontent{i^{\ast}}{}{A}} \leq \frac{ \poolcontent{i}{}{A}}{\lpac{i}{}{A} +  \poolcontent{i}{}{A}} \,\,.
	\end{equation}

	Therefore, the liquidity provider has more share in $\textit{shard}_i$ in $\poolcontentvector{}{}{} + a_{lp}^{\textit{fill}}(\poolcontentvector{}{}{}, \lpac{}{}{A}, \lpac{}{}{B})$ than any smallest shard in $\poolcontentvector{}{}{} + a_{lp}$:
	\begin{equation}\label{theorem_fill_up_proof10}
		\frac{\fillupac{i^{\ast}}{}{A}}{ \fillupac{i^{\ast}}{}{A} +  \poolcontent{i^{\ast}}{}{A}}  = 1 - \frac{ \poolcontent{i^{\ast}}{}{A}}{\fillupac{i^{\ast}}{}{A} +  \poolcontent{i^{\ast}}{}{A}} \geq 1 - \frac{ \poolcontent{i}{}{A}}{\lpac{i}{}{A} +  \poolcontent{i}{}{A}} = \frac{\lpac{i}{}{A}}{ \lpac{i}{}{A} +  \poolcontent{i}{}{A}} \,\,.
	\end{equation}
	Combining Equations~\ref{theorem_fill_up_proof8} and~\ref{theorem_fill_up_proof10}, we have 
	\begin{multline}\label{theorem_fill_up_proof11}
		p^B \times \textit{tf}( \poolcontent{i^{\ast}}{}{A} + \fillupac{i^{\ast}}{}{A}, \frac{p^A}{p^B} (\poolcontent{i^{\ast}}{}{A} + \fillupac{i^{\ast}}{}{A}), \uac{}{}{\textit{BA}}) \times \frac{\fillupac{i^{\ast}}{}{A}}{ \fillupac{i^{\ast}}{}{A} +  \poolcontent{i^{\ast}}{}{A}} \geq \\
		p^B \times \textit{tf}( \poolcontent{i}{}{A} + \lpac{i}{}{A}, \frac{p^A}{p^B} (\poolcontent{i}{}{A} + \lpac{i}{}{A}), \uac{}{}{\textit{BA}}) \times \frac{\lpac{i}{}{A}}{ \lpac{i}{}{A} +  \poolcontent{i}{}{A}} \,\,.
	\end{multline}
	Then, the revenue of the liquidity provider with the fill-up action is no less than any other action given the trader's best response $\tau^{BA}$:
	\begin{align*}
		&U_{lp}(\poolcontentvector{}{}{}, a_{lp}^{\textit{fill}}(\poolcontentvector{}{}{}, \lpac{}{}{A}, \lpac{}{}{B}),a_{i^{\ast}}^{\textit{BA}}(\uac{}{}{\textit{BA}})) \nonumber \\
		\stackrel{(\ref{theorem_fill_up_proof3})}{=}& p^B \times \textit{tf}( \poolcontent{i^{\ast}}{}{A} + \fillupac{i^{\ast}}{}{A}, \frac{p^A}{p^B} (\poolcontent{i^{\ast}}{}{A} + \fillupac{i^{\ast}}{}{A}), \uac{}{}{\textit{BA}}) \times \frac{\fillupac{i^{\ast}}{}{A}}{ \fillupac{i^{\ast}}{}{A} +  \poolcontent{i^{\ast}}{}{A}} \nonumber \\
		\stackrel{(\ref{theorem_fill_up_proof11})}{\geq}& p^B \times \textit{tf}( \poolcontent{i}{}{A} + \lpac{i}{}{A}, \frac{p^A}{p^B} (\poolcontent{i}{}{A} + \lpac{i}{}{A}), \uac{}{}{\textit{BA}}) \times \frac{\lpac{i}{}{A}}{ \lpac{i}{}{A} +  \poolcontent{i}{}{A}} \nonumber \\
		\stackrel{(\ref{theorem_fill_up_proof4})}{=} &U_{lp}(\poolcontentvector{}{}{}, a_{lp},a_i^{\textit{BA}}(\uac{}{}{\textit{BA}}))  \,\,.
	\end{align*}
	Therefore, from Equations~\ref{theorem_fill_up_proof1} and~\ref{theorem_fill_up_proof2}, the revenue of the liquidity provider with the fill-up action is no less than any other action given the trader's best response $\tau^{BA}$:
	\begin{multline}\label{theorem_fill_up_proof12}
		U_{lp}(\poolcontentvector{}{}{}, a_{lp}^{\textit{fill}}(\poolcontentvector{}{}{}, \lpac{}{}{A}, \lpac{}{}{B}),  \tau^{BA}, \tau^{AB},\lpac{}{}{A}, \lpac{}{}{B}) \geq \\   U_{lp}(\poolcontentvector{}{}{}, a_{lp},  \tau^{BA}, \tau^{AB},\lpac{}{}{A}, \lpac{}{}{B}) \,\,.
	\end{multline}
	After analyzing actions, we consider the utility of the liquidity provider with strategy $\tau_{lp}$ and any mixed strategy $\pi_{lp}$.
	From the definition of the utility of the liquidity provider under $\pi_{lp}$ (Equation~\ref{Ulp_strategy_strategy}), we have
	\begin{align}\label{theorem_fill_up_proof13}
		&U_{lp}(\poolcontentvector{}{}{}, \tau_{lp},  \tau^{BA}, \tau^{AB},\lpac{}{}{A}, \lpac{}{}{B})  \nonumber\\
		=&
		\sum_{a_{lp} \in \mathcal{A}_{lp}(\lpac{}{}{A} , \lpac{}{}{B} )} 
		\left(
		\substack
		{
			\tau_{lp}( \poolcontentvector{}{}{}, \lpac{}{}{A}, \lpac{}{}{B}, a_{lp}) \\ \times U_{lp}(\poolcontentvector{}{}{}, a_{lp},  \tau^{BA}, \tau^{AB},\lpac{}{}{A}, \lpac{}{}{B}) 
		}
		\right)
		\nonumber \\
		=& U_{lp}(\poolcontentvector{}{}{}, a_{lp}^{\textit{fill}}(\poolcontentvector{}{}{}, \lpac{}{}{A}, \lpac{}{}{B}),a_{i^{\ast}}^{\textit{BA}}(\uac{}{}{\textit{BA}})),
	\end{align}
	and 
	\begin{align}\label{theorem_fill_up_proof14}
		&U_{lp}(\poolcontentvector{}{}{}, \pi_{lp},  \tau^{BA}, \tau^{AB},\lpac{}{}{A}, \lpac{}{}{B}) \nonumber\\
		=&\sum_{a_{lp} \in \mathcal{A}_{lp}(\lpac{}{}{A} , \lpac{}{}{B} )} 
		\left(
			\substack
			{
				\pi_{lp}( \poolcontentvector{}{}{}, \lpac{}{}{A}, \lpac{}{}{B}, a_{lp}) \\ \times U_{lp}(\poolcontentvector{}{}{}, a_{lp},  \tau^{BA}, \tau^{AB},\lpac{}{}{A}, \lpac{}{}{B})
			}
		\right)	
		\nonumber \\
		\stackrel{(\ref{theorem_fill_up_proof12})}{\leq} & \sum_{a_{lp} \in \mathcal{A}_{lp}(\lpac{}{}{A} , \lpac{}{}{B} )} 
		\left(
			\substack{\pi_{lp}( \poolcontentvector{}{}{}, \lpac{}{}{A}, \lpac{}{}{B}, a_{lp}) \\ \times U_{lp}(\poolcontentvector{}{}{}, a_{lp}^{\textit{fill}}(\poolcontentvector{}{}{}, \lpac{}{}{A}, \lpac{}{}{B}),  \tau^{BA}, \tau^{AB},\lpac{}{}{A}, \lpac{}{}{B})}
		\right)
		\nonumber \\
		= & U_{lp}(\poolcontentvector{}{}{}, a_{lp}^{\textit{fill}}(\poolcontentvector{}{}{}, \lpac{}{}{A}, \lpac{}{}{B}),a_{i^{\ast}}^{\textit{BA}}(\uac{}{}{\textit{BA}})) \nonumber \\
		&	\times \sum_{a_{lp} \in \mathcal{A}_{lp}(\lpac{}{}{A} , \lpac{}{}{B} )} \tau_{lp}( \poolcontentvector{}{}{}, \lpac{}{}{A}, \lpac{}{}{B}, a_{lp}) \nonumber\\
		=& U_{lp}(\poolcontentvector{}{}{}, \tau_{lp},  \tau^{BA}, \tau^{AB},\lpac{}{}{A}, \lpac{}{}{B}) \,\,.
	\end{align}
	Therefore, the liquidity provider strategy $\tau_{lp}$ is the best response to the trader strategy $\tau^{BA}$:
	\begin{multline}\label{theorem_fill_up_proof15}
		U_{lp}(\poolcontentvector{}{}{}, \tau_{lp},  \tau^{BA}, \tau^{AB},\lpac{}{}{A}, \lpac{}{}{B}) \geq \\ U_{lp}(\poolcontentvector{}{}{}, \pi_{lp},  \tau^{BA}, \tau^{AB},\lpac{}{}{A}, \lpac{}{}{B}) \,\,.
	\end{multline}
	Since $\tau^{BA}$ is also a best response to the trader strategy $\tau_{lp}$, the liquidity provider strategy $\tau_{lp}$ and the trader strategy $\tau^{BA}$ are an SPNE.
\end{proof}

In summary, we showed that the system achieves a stable state where the perfect parallelism strategy is the dominant strategy for traders. 
Moreover, the system demonstrates robustness against deviations, particularly due to attacks.

\section{Solana Experiment Details}\label{app:solana}

\begin{figure}[t]
	\centering
	\includegraphics[width=0.47\textwidth]{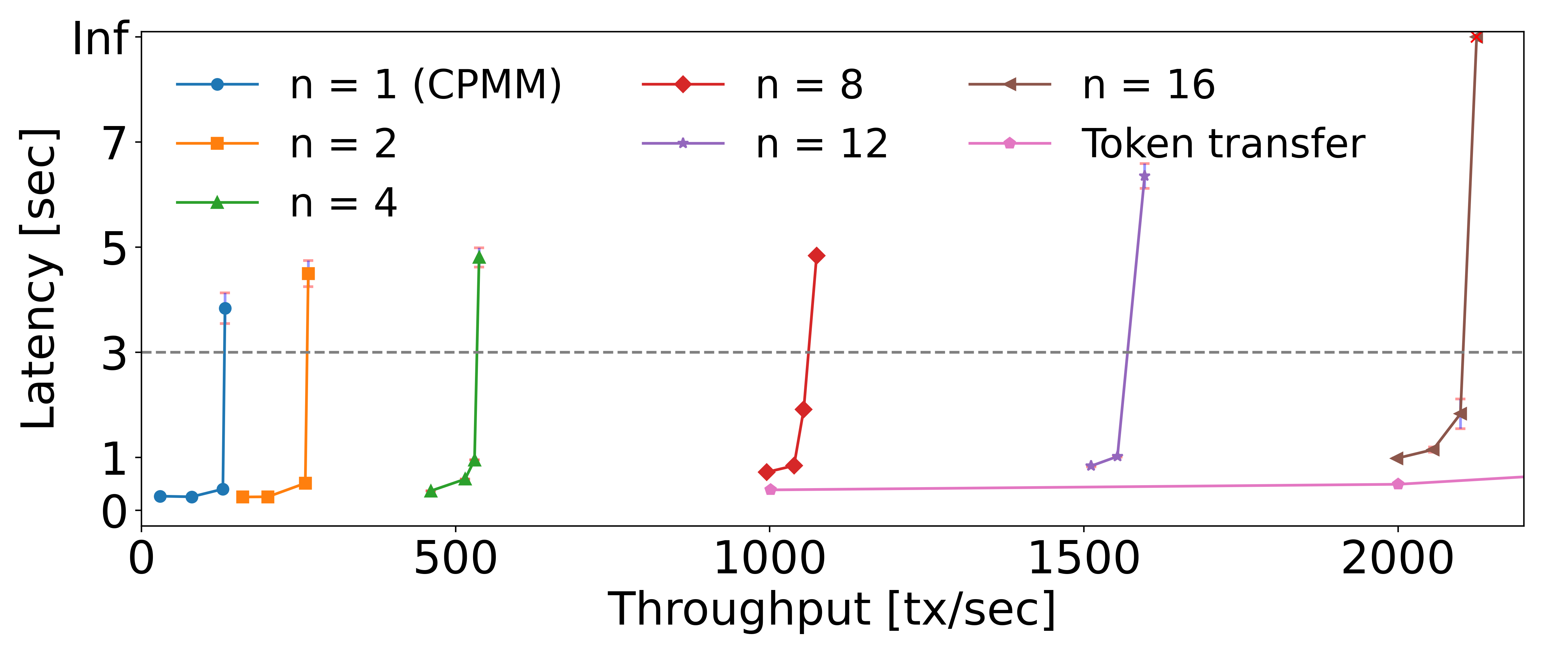}
	\caption{Solana trade latency as a function of demand with~$n$ SAMM shards.}
	\label{fig:multiple_latency_solana}
\end{figure} 

Figure~\ref{fig:multiple_latency_solana} details the results of the Solana experiments.
We only test each frequency twice since the results are stable.
For each throughput value (X~axis) we calculate the average latency (Y~axis).
Error bars show all measured values.

For a single CPMM contract ($n=1$), similar to the OmniSwap experiment, the average latency increases gradually with the transaction frequency up to~$129\textit{tps}$ before crossing the 3-second line.
With higher frequencies, transactions frequently fail. 
Solana-Default and Solana-NoBlockLimit are not distinguishable since they have the same gas limit for a single contract.
The latency for a single SAMM contract, whether on Solana-Default or Solana-NoBlockLimit, is comparable to that of a single CPMM contract.

With $n \leq 4$ contracts, the latency in Solana-Default and Solana-NoBlockLimit is indistinguishable. 
However, for $n \geq 5$, Solana-Default does not improve due to the gas limit. 
Therefore, we only present the results for Solana-NoBlockLimit.
With $n=16$, beyond $2099\textit{tps}$, the throughput starts declining when increasing the expected frequency.
We therefore consider the latency to be "Inf" above $2099$, marking it with~$\times$.

We also tested the throughput of simple token transfers on Solana as a baseline, reaching a throughput above $2500\textit{tps}$ (outside the figure range) with latency under one second, higher than the maximum observed in our AMM experiments.

\section{Evaluation of Increasing Parallel Component}\label{app:evaluation_heavier}
\begin{figure}[t]
	\centering
	\includegraphics[width=0.47\textwidth]{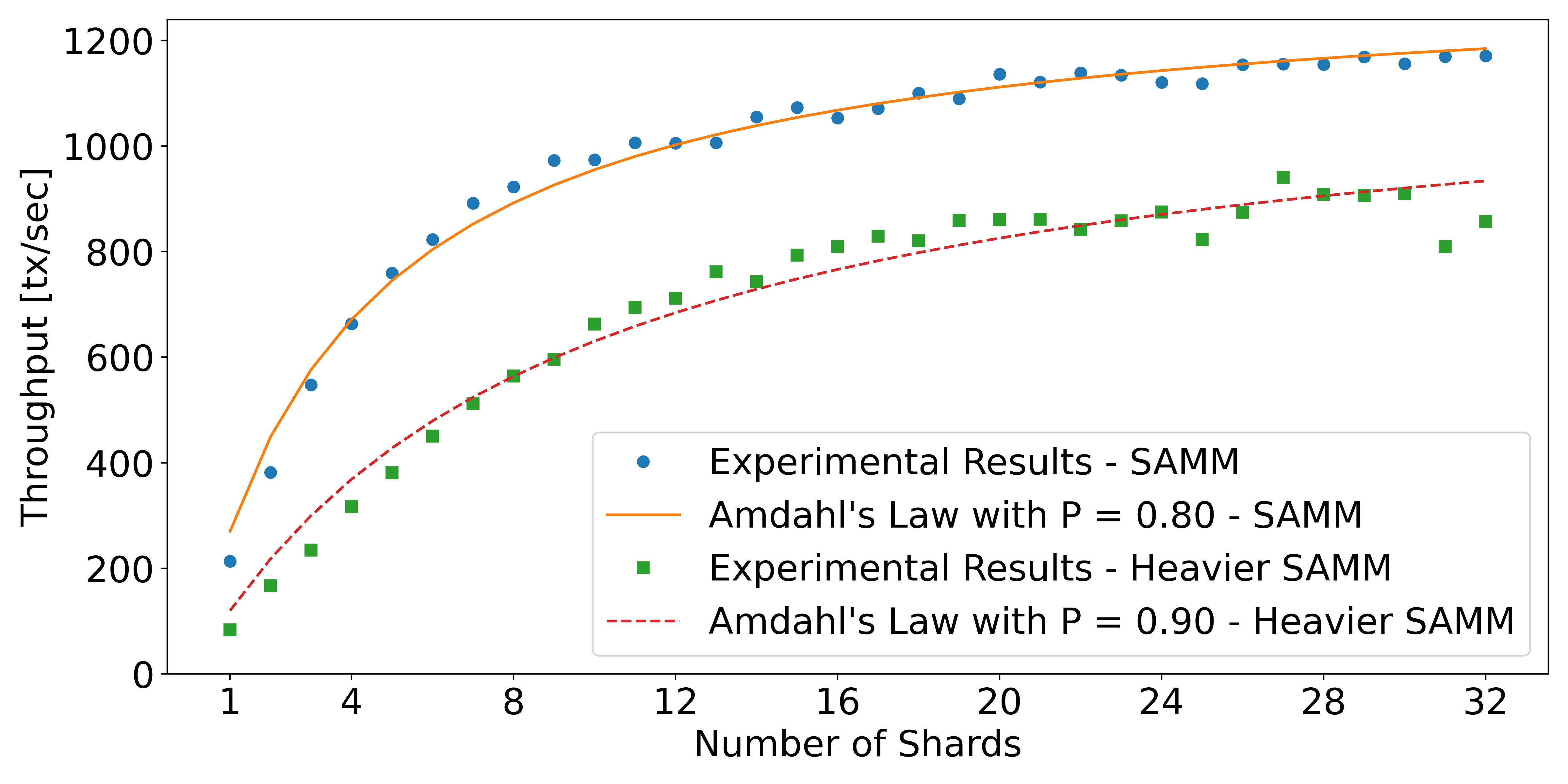}
	\caption{Maximal throughput as a function of the number of shards in SAMM / heavier SAMM.}
	\label{fig:heavier_samm}
\end{figure}

We posit that enhancing the performance of SAMM necessitates mitigating the serial bottlenecks within the platform. 
While modifications to the core architecture of Sui are beyond the scope of this study, we enhance the parallelizable aspects of SAMM by introducing superfluous operations into each trade. 
This methodology follows the experimental setup described in Section~\ref{sec:evaluation}.

Figure~\ref{fig:heavier_samm} presents the maximum throughput results for standard SAMM transactions as previously discussed in Section 8, alongside results from transactions with added operations with three repetitions. 
Although the inclusion of additional operations reduces overall performance due to increased overhead, it significantly enhances the parallel component, with $P=0.9$ ($R^2=0.968$). 
Remarkably, the throughput with $32$ shards is ten times that of a single shard, a substantial improvement over the ratios observed in Section~\ref{sec:evaluation}.

Consequently, addressing the serial bottlenecks within blockchain platforms is vital for future improvements in SAMM performance.

\section{Parameter Selection in Simulation}\label{app:simulation_parameter}

We select different values of $c$ for SAMM, namely $0.003$, $0.005$ and $0.01$.
We set $r_{\max} = 5 \times r_{\min}$
We choose $r_{\max}, r_{\min}$ and $\beta_1$ (See Section~\ref{sec:SAMM_parameter}) to minimize the maximal trading fee ratio.
This is to limit the maximal cost for traders.
Hence, we set $r_{\max} = 5 \times r_{\min}$ to minimize them at the same time.
Together with the restrictions of satisfying $c$-smaller-better and $c$-Non-Splitting (Corollary~\ref{corollary:parameter}), the optimization problem is

\begin{equation*}
	\begin{aligned}
		& \underset{r_{\max}, r_{\min}, \beta_1}{\min}
		& & r_{\max} \\
		& \text{subject to}
		& & r_{\max} = 5 \times r_{\min} \,\,, \\
		& & & c \leq 1 - (-\beta_1)^{-\frac{1}{3}} \,\,, \\
		& & & c \leq \frac{r_{\max} - r_{\min}}{-\beta_1} \,\,.
	\end{aligned}
\end{equation*}
Hence we get $r_{\max}, r_{\min}$ and $\beta_1$ given $c$.

\section{Transaction Distribution in Simulation}\label{app:simulation_distribution}

\begin{figure}[t]
	\centering
	\includegraphics[width=0.47\textwidth]{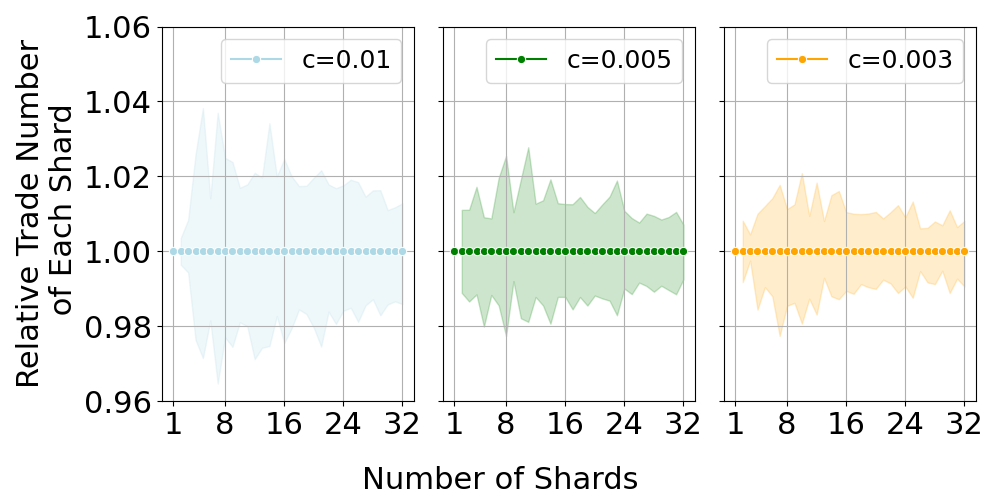}
	\caption{Distribution of transactions in SAMM}
	\label{fig:balanced}
\end{figure}

Our simulation (\S\ref{sec:simulation:setup}) confirms that the distribution of trades in SAMM is balanced.
Figure~\ref{fig:balanced} illustrates the number of trades executed in each shard compared to the average.
Error bars show the range of relative trade numbers in each shard compared to the average.
The difference from the average is always under~$5\%$.

\section{Revenue of Participants}\label{app:revenue}
The increased slippage due to sharded liquidity incurs higher net cost for traders.
This additional cost could be offset by reducing trading fees (Equation~\ref{cost_function}).
In a single trade, the trader, liquidity provider, and arbitrageur effectively form a small zero-sum game: as arbitrageurs gain more revenue from larger slippage, the revenue of traders and liquidity providers decreases.

However, the system's throughput increases due to parallelism, making the full game not zero-sum.
By reducing trading fees in a single trade, it is possible for the trader's cost to be lower than in a scenario with a single AMM where slippage is smaller.
Nonetheless, the revenue of liquidity providers, which is the total trading fees across all trades, may still increase due to the higher throughput.

To validate this, we analyze the revenue of liquidity providers and the costs incurred by traders in our previous simulation (\S\ref{sec:simulation:setup}) and compare these results with the actual revenue and costs observed in historical data.
We get historical prices of ETH to USDC from the Yahoo! Finance webpage~\cite{yahoo2024} and use it as the external market price (recall that the simulation is for this token pair).
We calculate the revenue for liquidity providers and costs for traders in USDC, converting ETH amounts at the real-time price.

Figure~\ref{fig:tradercost} shows the average ratio of traders' costs in SAMM compared to the costs in the external market, considering different numbers of shards and varying values of $c$. 
We observe that with larger $c$, the cost for traders increases, as expected.
With a fixed $c$, the cost for traders decreases as the number of shards increases, aligning with SAMM's $c$-smaller-better property, where more shards result in less liquidity in each shard. 
Additionally, we compare the cost for traders in Uniswap, depicted by the black line. 
In all cases, the cost of SAMM is either smaller ($c=0.003$ with at least 2 shards) or slightly larger than that of Uniswap, differing by less than $1\%$ with $c=0.01$ and $0.3\%$ with $c=0.005$ and at least 2 shards.
We aggregate trading fees across all shards to evaluate the revenue of liquidity providers. 
Figure~\ref{fig:lprevenue} shows SAMM consistently generates higher revenue than Uniswap, with $c=0.01$ and 7 shards yielding over 15 times the revenue. 
Initially, revenue in SAMM rises with more shards due to increased throughput but with even more shards it declines as trading fees per trade decrease in smaller shards (See \S\ref{app:gtAnalysis}, Lemma~\ref{lemma:trading_fee}).

SAMM significantly outperforms Uniswap in terms of liquidity provider revenue while maintaining comparable costs for traders. 
This increase without major cost hikes demonstrates this is not a zero-sum game; SAMM's higher throughput allows for more trades and more total trading fees.

.

\section{Volume Capacity of SAMM}\label{app:volume_capacity}

The \emph{volume capacity} of an AMM refers to its ability to facilitate trades without causing the price to deviate beyond an acceptable threshold from the pre-trade spot price. 
To evaluate the volume capacity of SAMM, we use the same real-world trading dataset as in our simulation (\S\ref{sec:simulation:setup}). 
For each historical trade, we consider the pool's token reserves immediately before the trade and the trade's desired output amount. 
We then calculate the effective execution price for this trade in two scenarios: first, on Uniswap v2 (using the price from the trade), and second, on SAMM. 
For the SAMM analysis, we test a range of configurations, varying the number of shards from 1 to 32 (splitting reserved tokens across shards equally) and the parameter $c$ with values of $0.003, 0.005$, and $0.01$. 
SAMM allows the trader's swap to be optimally split across shards to minimize the overall trade cost. 
Note that, unlike Section~\ref{sec:simulation}, this is a static analysis. 
Each trade is evaluated independently against its historical pre-trade state, allowing for a direct comparison of each AMM's price impact without the confounding effects of a sequential simulation.

To quantify the volume capacity, we calculate the ratio of the additional trader cost to the cost derived by reported price (\S\ref{sec:preliminaries:trade}) before the trade.
We call this ratio the \emph{trade cost increase}.
We then measure the proportion of trades whose trade cost increase exceeds predefined thresholds of $0.5\%$, $1\%$, and $1.5\%$.

We observe that, for a given $c$, the proportion of trades exceeding these thresholds is identical for any number of shards. 
This is expected:
First, equally splitting a trade across all shards in a multi-shard pool results in the exact same trade cost as executing the entire trade in a single-shard pool. 
This is because the trading fee ratio and net price remains constant if the trade volume scales linearly in the shard volume. 
Therefore, for the same trade, the optimal cost of multi-shard trade is always no more than the cost of the single-shard trade.
Second, although the c-smaller-better and c-non-splitting properties often make it more cost-effective to trade within a single shard of a multi-shard pool, a different situation applies to the trades that fail our cost thresholds. 
These trades are typically large relative to the pool's liquidity, a condition under which splitting the trade across multiple shards becomes the a better strategy. 
However, even with the cost reduction from this optimal splitting, our analysis shows the improvement is insufficient to move a failing trade below the threshold. 
Consequently, we do not observe any trade that passes the cost threshold in a multi-shard pool but would have failed in a single-shard pool.
Given this invariance, we present the results in Table~\ref{tab:trade_cost_percentage} without specifying the number of shards.

\begin{table*}[t] 
\centering
\begin{tabular}{lrrr}
\toprule
\textbf{Tested AMM} & 
\begin{tabular}{@{}c@{}}\small\bfseries \% of Trades Exceeding \\ \small\bfseries 0.5\% Trade Cost\end{tabular} & 
\begin{tabular}{@{}c@{}}\small\bfseries \% of Trades Exceeding \\ \small\bfseries 1\% Trade Cost\end{tabular} & 
\begin{tabular}{@{}c@{}}\small\bfseries \% of Trades Exceeding \\ \small\bfseries 1.5\% Trade Cost\end{tabular} \\
\midrule
Uniswap v2     & 0.44\%  & 0.041\%  & 0.019\%  \\
\addlinespace
SAMM ($c=0.003$) & 0.079\% & 0.015\%  & 0.0067\% \\
SAMM ($c=0.005$) & 100\%   & 0.017\%  & 0.0071\% \\
SAMM ($c=0.01$)  & 100\%   & 100\%    & 0.0082\% \\
\bottomrule
\end{tabular}
\caption{Percentage of trades exceeding specific trade cost thresholds for SAMM vs. Uniswap v2.}
\label{tab:trade_cost_percentage} 
\end{table*}

The results reveal several key insights. 
First, SAMM's performance is highly sensitive to the parameter $c$. 
For high fee parameters ($c=0.005$ and $c=0.01$), SAMM shows a $100\%$ failure rate at low thresholds. 
For instance, with $c=0.005$, the trading fee ratios for small trades are already larger than $0.5\%$, causing them to fail the threshold due to fee alone.
Additionally large trades fail due to high slippage. 
A similar outcome occurs for $c=0.01$ at the $0.5\%$ and $1\%$ thresholds. 
As expected, a larger $c$ leads to higher fees and thus a higher failure rate at strict cost thresholds.

However, when the fee parameter is chosen appropriately for a given trade cost threshold (e.g., $c=0.003$ for all tested thresholds, $c=0.005$ for thresholds $1\%$ and $1.5\%$; $c=0.01$ for threshold $1.5\%$), SAMM demonstrates a significantly higher volume capacity than Uniswap v2. 
At the $0.5\%$ threshold, Uniswap v2 has a failure rate of $0.44\%$, whereas SAMM with $c=0.003$ has a failure rate of only $0.079\%$, over five times lower.

This result indicates that SAMM can accommodate a much larger proportion of real-world trades. 
The reason for this superior performance lies in SAMM's balance between fees and slippage. 
When a trade is small, its cost is dominated by the relatively high fee ratio, while slippage is negligible. 
When a trade is large, the fee ratio becomes smaller, and the cost is driven by increased slippage. 
This concentration reduces the extreme price movements that would otherwise cause more trades to fail the cost threshold.

\section{Analysis of Sandwich Attacks}\label{app:sandwich}

In a sandwich attack, a victim trader intends to swap some \textit{token~A} to buy \textit{token~B}.
The attacker first buys some \textit{token~B} in front of the victim, thereby driving up the price of \textit{token~B}.
Then, the victim's transaction is executed: the victim trader buys \textit{token~B} at a higher price, further increasing the price of \textit{token~B}.
Finally, the attacker sells all the \textit{token~B} bought in the first step at the elevated price, earning more \textit{token~A} than she initially paid.

We aim to confirm that reduced liquidity due to sharding does not incentivize sandwich attacks, both with and without a set maximum price.
For simplicity, we neglect trading fees in this analysis, as they are small relative to the trade volume. 
In most AMMs, fees are no larger than $0.3\%$ of the trade volume~\cite{zhang2018formal,adams2020uniswap,adams2021uniswap}. 
We also disregard gas fees, as attackers are required to issue two transactions in a sandwich attack, meaning they incur the same gas costs in both the sharded and non-sharded settings.

Unlike settings in the game-theoretic analysis (\S\ref{sec:game}), we assume that the trader wants to contribute a fixed amount of \textit{token~A} to the pool, denoted by $\poolinput{}{}{A}$. 
This assumption is crucial because, if the trader were instead aiming to get a fixed amount of \textit{token~B} without specifying a maximal price, the revenue from a sandwich attack could become arbitrarily large, independent of the liquidity amount.

We assume that the victim trader sets a maximal price for the trade, which corresponds to a minimal amount of \textit{token~B} received, denoted by $\pooloutput{\min}{}{B}$.
If $\pooloutput{\min}{}{B} =0$, it means that the victim trader does not set a maximal price.

We first study the case of a single CPMM pool with $\poolcontent{}{}{A}$ \textit{token~A} and $\poolcontent{}{}{B}$ \textit{token~B}.
We assume that initially there is no arbitrage opportunity, that is (\S\ref{sec:preliminaries:trade}),
\begin{equation}\label{eq:app:arbitrage}
	\frac{\poolcontent{}{}{A}}{\poolcontent{}{}{B}} = \frac{p^B}{p^A} \,\,.
\end{equation}

According to the CPMM protocol (Equation~\ref{leagaltrade2}), the expected amount of output \textit{token~B} for the victim trader, which is the output without the attack, is
\begin{equation}\label{eq:app:output}
	\pooloutput{0}{}{B} = \poolcontent{}{}{B} - \frac{ \poolcontent{}{}{A}\times  \poolcontent{}{}{B}}{ \poolcontent{}{}{A}+ \poolinput{}{}{A}}  = \frac{ \poolcontent{}{}{B} \times \poolinput{}{}{A}}{ \poolcontent{}{}{A}+ \poolinput{}{}{A}}
	\,\,.
\end{equation}

We simply require that the expected output is no less than the minimal output or the trade cannot be executed in the CPMM:
\begin{equation*}
	\pooloutput{0}{}{B} \geq \pooloutput{\min}{}{B}\,\,.
\end{equation*}

We denote by $s$ the \emph{slippage tolerance}, which is the ratio of the difference between the minimal output and the expected output to the expected output:
\begin{equation}\label{eq:app:tolerance}
	s = \frac{ \pooloutput{0}{}{B} - \pooloutput{\min}{}{B}}{\pooloutput{0}{}{B}}  \,\,.
\end{equation}

The maximal revenue of a sandwich attacker depends on the amount that the victim trader is willing to pay to move the price.
Heimbach and Wattenhofer~\cite[Proof of Theorem 2]{heimbach2022eliminating} show that the maximal revenue of the attacker in a sandwich attack is (measured in \textit{token~A}):

\begin{align}\label{eq:app:revenue}
	\textit{Rev} =& \frac{ \poolinput{}{}{A} \times s \times (\poolinput{}{}{A} + \poolcontent{}{}{A})}{ \poolinput{}{}{A} \times s + \poolcontent{}{}{A}} \,\,.
\end{align}

We observe that, if the victim trader sets a minimal output ($\pooloutput{\min}{}{B} > 0$), the maximal revenue of the attacker increases with the pool size~$\poolcontent{}{}{A}$; Otherwise, the maximal revenue of the attacker is independent of the pool size:

\begin{corollary}
	For a single CPMM pool without arbitrage opportunities, fixing the minimal output of the victim trader, the maximal revenue of sandwich attack increases with the pool size if a positive minimal output is set; Otherwise, the maximal revenue is independent of the pool size:
	\begin{equation*}
		\pooloutput{\min}{}{B}>0 \Rightarrow
		\frac{\partial \textit{Rev}}{\partial \poolcontent{}{}{A}} > 0 \,\,; 
		\pooloutput{\min}{}{B} = 0 \Rightarrow \frac{\partial \textit{Rev}}{\partial \poolcontent{}{}{A}} = 0 \,\,.
	\end{equation*}
\end{corollary}

We obtain this result by calculating the derivative of Equation~\ref{eq:app:revenue} with respect to $\poolcontent{}{}{A}$, as follows.
\begin{proof}
	Starting with Equation~\ref{eq:app:revenue}, the maximal revenue of the attacker is
	\begin{align*}
		\textit{Rev} \stackrel{\;\;\;\;\;\;\;}{=}& \frac{ \poolinput{}{}{A} \times s \times (\poolinput{}{}{A} + \poolcontent{}{}{A})}{ \poolinput{}{}{A} \times s + \poolcontent{}{}{A}}  \\
		\stackrel{(\ref{eq:app:tolerance})}{=}& \frac{ \poolinput{}{}{A} \times \frac{ \pooloutput{0}{}{B} - \pooloutput{\min}{}{B}}{\pooloutput{0}{}{B}} \times (\poolinput{}{}{A} + \poolcontent{}{}{A})}{ \poolinput{}{}{A} \times \frac{ \pooloutput{0}{}{B} - \pooloutput{\min}{}{B}}{\pooloutput{0}{}{B}} + \poolcontent{}{}{A}} \\
		\stackrel{\;\;\;\;\;\;\;}{=}& \frac{ \poolinput{}{}{A} \times (\pooloutput{0}{}{B} - \pooloutput{\min}{}{B}) \times (\poolinput{}{}{A} + \poolcontent{}{}{A})}{ \poolinput{}{}{A} \times (\pooloutput{0}{}{B} - \pooloutput{\min}{}{B}) + \poolcontent{}{}{A} \times \pooloutput{0}{}{B}} 
		\\
		\stackrel{(\ref{eq:app:output})}{=}& \frac{ \poolinput{}{}{A} \times (\frac{ \poolcontent{}{}{B} \times \poolinput{}{}{A}}{ \poolcontent{}{}{A}+ \poolinput{}{}{A}} - \pooloutput{\min}{}{B}) \times (\poolinput{}{}{A} + \poolcontent{}{}{A})}{ \poolinput{}{}{A} \times (\frac{ \poolcontent{}{}{B} \times \poolinput{}{}{A}}{ \poolcontent{}{}{A}+ \poolinput{}{}{A}} - \pooloutput{\min}{}{B}) + \poolcontent{}{}{A} \times \frac{ \poolcontent{}{}{B} \times \poolinput{}{}{A}}{ \poolcontent{}{}{A}+ \poolinput{}{}{A}}} 
		\\
		\stackrel{(\ref{eq:app:arbitrage})}{=}& 
		\frac{ \poolinput{}{}{A} \times (\frac{ \frac{p^A}{p^B}\poolcontent{}{}{A} \times \poolinput{}{}{A}}{ \poolcontent{}{}{A}+ \poolinput{}{}{A}} - \pooloutput{\min}{}{B}) \times (\poolinput{}{}{A} + \poolcontent{}{}{A})}{ \poolinput{}{}{A} \times (\frac{ \frac{p^A}{p^B}\poolcontent{}{}{A} \times \poolinput{}{}{A}}{ \poolcontent{}{}{A}+ \poolinput{}{}{A}} - \pooloutput{\min}{}{B}) + \poolcontent{}{}{A} \times \frac{ \frac{p^A}{p^B}\poolcontent{}{}{A} \times \poolinput{}{}{A}}{ \poolcontent{}{}{A}+ \poolinput{}{}{A}}}
		\\
		\stackrel{\;\;\;\;\;\;\;}{=}&\frac{ \poolinput{}{}{A} \times (\frac{ \frac{p^A}{p^B}\poolcontent{}{}{A} \times \poolinput{}{}{A}}{ \poolcontent{}{}{A}+ \poolinput{}{}{A}} - \pooloutput{\min}{}{B}) \times (\poolinput{}{}{A} + \poolcontent{}{}{A})}{ \poolinput{}{}{A}\times\left(  (\frac{ \frac{p^A}{p^B}\poolcontent{}{}{A} \times \poolinput{}{}{A}}{ \poolcontent{}{}{A}+ \poolinput{}{}{A}} - \pooloutput{\min}{}{B}) + \poolcontent{}{}{A} \times \frac{ \frac{p^A}{p^B}\poolcontent{}{}{A}}{ \poolcontent{}{}{A}+ \poolinput{}{}{A}}\right)}
		\\
		\stackrel{\;\;\;\;\;\;\;}{=}&\frac{ (\frac{p^A}{p^B}\poolinput{}{}{A}-\pooloutput{\min}{}{B})\poolcontent{}{}{A} - \poolinput{}{}{A}\times\pooloutput{\min}{}{B}}{\frac{p^A}{p^B}\poolcontent{}{}{A} - \pooloutput{\min}{}{B}}
		\\
		\stackrel{\;\;\;\;\;\;\;}{=}&\poolinput{}{}{A} - \frac{p^B}{p^A}\pooloutput{\min}{}{B} - \frac{(\frac{p^B}{p^A}\pooloutput{\min}{}{B})^2}{\poolcontent{}{}{A} - \frac{p^B}{p^A}\pooloutput{\min}{}{B}} \,\,.
	\end{align*}
	Then, the derivative of $\textit{Rev}$ with respect to $\poolcontent{}{}{A}$ is
	\begin{equation*}
		\frac{\partial \textit{Rev}}{\partial \poolcontent{}{}{A}} = \frac{(\frac{p^B}{p^A}\pooloutput{\min}{}{B})^2}{(\poolcontent{}{}{A} - \frac{p^B}{p^A}\pooloutput{\min}{}{B})^2}.
	\end{equation*}
	Therefore, when $\pooloutput{\min}{}{B} > 0$, $\frac{\partial \textit{Rev}}{\partial \poolcontent{}{}{A}} > 0$, and when $\pooloutput{\min}{}{B} = 0$,~${\frac{\partial \textit{Rev}}{\partial \poolcontent{}{}{A}} = 0}$.
\end{proof}
The above Corollary indicates that the maximal revenue of the attacker is non-increasing as the liquidity decreases.
Thus, neglecting trading fees, less liquidity in each SAMM shard does not worsen sandwich attacks.

\newpage

\end{document}